\documentclass[11pt,a4paper]{article}

\usepackage[utf8]{inputenc}
\usepackage[T1]{fontenc}
\usepackage[a4paper,margin=1in]{geometry}
\usepackage{amsmath,amssymb,amsthm,mathtools}
\usepackage{array,multirow,tabularx}
\usepackage{caption}
\usepackage{xcolor}
\usepackage{algorithm}
\usepackage{algpseudocode}
\usepackage{booktabs}
\usepackage{graphicx}
\usepackage{hyperref}
\usepackage{natbib}
\usepackage{bm}
\usepackage{setspace}
\usepackage{cleveref}
\usepackage[section]{placeins}
\usepackage{enumitem}

\newcommand{\QN}{\mathbb{Q}^N}
\newcommand{\QR}{\mathbb{Q}^R}
\newcommand{\QC}{\mathbb{Q}^C}
\newcommand{\sigvec}[1]{{\bm{\sigma}^{#1}}}
\newcommand{\sigI}{\sigma_I}
\newcommand{\sigJ}{\sigma_J}
\newcommand{\BE}{\mathrm{BE}}
\newcommand{\dd}{\,\mathrm{d}}
\newcommand{\Avec}[1]{\bm{A}^{#1}}
\newcommand{\Ft}{\mathcal{F}_t}

\theoremstyle{plain}
\newtheorem{proposition}{Proposition}
\theoremstyle{definition}
\newtheorem{definition}[proposition]{Definition}
\theoremstyle{remark}
\newtheorem{remark}[proposition]{Remark}

\title{Three-Currency HJM for Brazilian Credit Markets}
\author{Raphael R. R. Coelho\thanks{Independent researcher. Email: \href{mailto:raphaelrrcoelho@gmail.com}{raphaelrrcoelho@gmail.com}.}}
\date{\today}

\begin{document}

\onehalfspacing
\maketitle

\begin{abstract}
\noindent This paper develops a three-currency Heath-Jarrow-Morton framework in which corporate credit is treated as a separate economy, connected to the nominal and real economies through synthetic inflation and credit exchange rates. The framework produces a testable identity. Under joint no-arbitrage, the credit spread of an issuer expressed over the inflation-rate-indexed risk-free curve equals the same issuer's credit spread expressed over the nominal-rate-indexed risk-free curve plus the model-implied breakeven inflation forward at the same maturity. The identity holds within any single calibration of the framework. It is empirically falsifiable across two parallel corporate-bond segments of the same market: in a segmented market the two segments may price different corporate credit economies, and the gap between their implied corporate forwards measures the failure of the shared-credit-economy assumption.

\medskip

\noindent Applied to Brazilian debenture markets, the framework delivers a sharp empirical finding. Fifteen large issuers placed paper in both the CDI-indexed general-purpose segment and the IPCA-indexed infrastructure segment between January 2021 and February 2026. The within-issuer triangle residual at the 3-year tenor averages 640 basis points, with cross-sectional standard deviation of 26 basis points across the 15 issuer means, and remains stable through both the 2021-2023 BCB tightening cycle and the 2024-2026 easing phase. A retail post-tax indifference benchmark anchored on \emph{Lei}~12.431 closes the bulk of the residual. The remainder is consistent with institutional participation on the CDI side, contractual asymmetries between debentures with different use-of-proceeds restrictions, and segment-specific liquidity gaps.
\end{abstract}

\medskip

\noindent\textbf{Keywords:} HJM term structure; no-arbitrage restrictions; credit spread modelling; inflation-linked debt; tax clientele.

\medskip

\noindent\textbf{JEL classification:} G12, G13, G18.

\bigskip

\section{Introduction}\label{sec:intro}

\citet{heath1992bond} showed that arbitrage-free term-structure dynamics impose a drift restriction on forward rates within a single curve. The restriction determines the drift of each forward rate from the volatility structure once a numeraire is fixed, and it remains the workhorse condition behind most of the term-structure literature that followed. Less studied is the parallel question for an investor who simultaneously observes several term structures of different indexations on the same set of underlying issuers. When a single legal entity raises debt in a nominal-rate-indexed segment and in an inflation-rate-indexed segment of the same fixed-income market, what no-arbitrage restriction must its two credit-spread curves satisfy? The within-curve drift condition is silent on the question.

The natural setting in which to ask it is a Heath-Jarrow-Morton economy with three currencies. \citet{jarrow2003hjm} treat the nominal and real economies as separate economies linked through inflation as a synthetic exchange rate. The construction in this paper adds corporate credit as a third economy connected to the nominal economy through a credit exchange rate, defined as the discounted ratio of a corporate-credit bank account and the nominal-risk-free bank account. The three pricing measures are linked through their numeraire ratios, and joint absence of arbitrage forces an algebraic identity between three observable forward objects: the issuer's credit spread over the nominal-indexed risk-free curve, the same issuer's credit spread over the inflation-indexed risk-free curve, and the model-implied breakeven inflation forward at the same maturity.

The identity says the inflation-indexed credit spread equals the nominal-indexed credit spread plus the breakeven inflation forward, provided both segments price the same underlying corporate-credit economy. It is the cross-curve analog of the HJM drift condition: it does not constrain the dynamics of any single curve, but it constrains the joint level of two corporate spread curves observed under different risk-free bases.

The identity is testable. Each segment defines its own observed corporate forward, equal to its risk-free forward plus the segment's observed credit spread. The framework's shared-credit-economy null is that these two implied corporate forwards coincide. The empirical content of the framework, in a market with two parallel corporate segments, is whether the null holds. If it does, both segments price the same credit economy and the framework's three-currency structure characterises that pricing. If it fails, the gap between the two implied corporate forwards measures the failure of the shared-credit-economy assumption, and the explanation lies outside the framework, in the institutional mechanism that separates the two clienteles.

Brazilian debenture markets are a setting in which the test is sharply informative. Two parallel segments exist with partly overlapping but institutionally distinct issuer universes: general-purpose debentures indexed to CDI and infrastructure debentures indexed to IPCA. Both segments trade through JGP Idex constituent baskets. \emph{Lei}~12.431 exempts individual investors from the 15 percent withholding tax otherwise applied to debenture coupons under \emph{Lei}~11.033, conditional on the debenture qualifying as infrastructure under the statute. The exemption produces a marginal-investor wedge: retail investors are tax-advantaged on the IPCA side and tax-disadvantaged on the CDI side, while institutional investors, who dominate the CDI segment, are not subject to the wedge. The wedge breaks the shared-credit-economy assumption in a known and quantifiable direction.

For 15 large issuers active in both segments between 2021 and 2026, the within-issuer triangle residual at the 3-year tenor averages 640 basis points, with cross-sectional standard deviation of 26 basis points across the 15 issuer means. The narrow cross-sectional dispersion indicates that the gap is a property of the segments rather than of issuer-specific credit risk. A retail post-tax indifference benchmark anchored on \emph{Lei}~12.431 accounts for the bulk of the residual. The remainder is consistent with institutional participation on the CDI side and with structural and liquidity asymmetries between general-purpose and use-of-proceeds-restricted infrastructure debentures.

The contribution sits between two strands of the literature. The first is multi-currency term-structure modelling. \citet{jarrow2003hjm} provide the nominal-real foundation. Credit extensions of HJM developed in \citet{schonbucher2000libor}, \citet{eberlein2003defaultable}, and \citet{bielecki2002credit} treat credit within a single economy and study default-time intensity rather than parallel-curve spreads; the intensity-based framework of \citet{duffie1999modeling} provides the closest precedent for the reduced-form treatment of corporate credit adopted here. Multi-curve frameworks following \citet{mercurio2010interest} address post-crisis basis spreads between OIS, LIBOR, and tenor-specific discount curves, but the curves they relate share an indexation. The closest precedent in spirit is the inflation-linked literature on real-nominal arbitrage in sovereign markets, which uses a two-curve breakeven identity to characterise no-arbitrage between government nominal and inflation-indexed bonds. The contribution here is the three-curve restriction that emerges when corporate credit is added as a separate economy alongside nominal and real, and the recognition that the resulting identity becomes empirically informative when applied to two parallel corporate segments of the same fixed-income market.

The second strand is empirical clientele effects in fixed income. Tax-induced retail clientele effects have a long history in U.S.\ municipal bonds, where the marginal-investor tax rate is identified from the spread between municipal and taxable corporate yields. The most directly related Brazilian literature studies the fiscal efficiency of \emph{Lei}~12.431 from the issuer's side. \citet{pereira2021nota} estimate the fraction of foregone federal tax revenue that transfers to project issuers through reduced funding costs, finding pass-through that is partial and varies with debenture duration and rating. \citet{brand2022debentures} tests whether the fiscal benefit is appropriated by issuers under a no-arbitrage condition on the secondary-market spread between incentivised and non-incentivised debentures. Both studies focus on the supply side of the wedge. The within-issuer cross-segment test developed here measures the wedge from the investor's pricing under joint no-arbitrage in a three-currency framework, delivering a precise quantitative magnitude (640 basis points at the 3-year tenor, with cross-sectional standard deviation of 26 basis points across 15 issuers). The Brazilian credit-spread literature has also documented level and time-series determinants of debenture spreads under the local term-structure setting \citep{sheng2005determinantes,varga2009teste}; the within-issuer cross-segment test exploiting the parallel CDI- and IPCA-indexed corporate markets has not been reported. The Brazilian setting differs from the U.S.\ muni precedent in two respects. The exemption is type-specific (infrastructure debentures only) rather than issuer-specific (municipal issuers in the U.S.), and the empirical identification is within-issuer rather than across heterogeneous universes. Within-issuer identification removes rating, sector, and duration composition from the comparison, leaving the segmentation channel as the residual to interpret.

\section{The three-currency framework}\label{sec:framework}

The framework treats nominal, real, and corporate-credit instruments as three connected economies in the sense of \citet{jarrow2003hjm}. Each economy carries its own pricing measure and bank-account numeraire. Two synthetic exchange rates, one for inflation and one for credit, link the three measures, and joint absence of arbitrage pins down both the within-curve drift restrictions and the cross-economy relations between forward-rate dynamics.

\begin{remark}[Inner-product convention]
\label{rem:inner}
For volatility-loading vectors $x,y\in\mathbb{R}^m$, the symbol $x\cdot y$ throughout the paper denotes the $\bm{\rho}$-weighted form $x^{\top}\bm{\rho}\,y$, equal to the instantaneous quadratic covariation $d\langle x\cdot W^N, y\cdot W^N\rangle_t/dt$, Under the canonical-basis volatility decomposition with PC-aligned blocks introduced in \Cref{subsec:shared_factor,def:factor_blocks}, $\bm{\rho}$ reduces to the identity within each block by construction (PCA factor scores are orthogonal in time), and the standard Euclidean inner product applies for vectors supported on a single block. Cross-block entries of $\bm{\rho}$ are estimated empirically (\Cref{sec:calib_corr}) and enter formulas only through inner products of vectors supported on different blocks.
\end{remark}

\subsection{Economies and numeraires}\label{subsec:economies}

Let $(\Omega, \mathcal{F}, (\mathcal{F}_t)_{t \ge 0}, \QN)$ be a filtered probability space carrying a $d$-dimensional Brownian motion $W^N$ under the nominal pricing measure $\QN$. The nominal economy is a Heath-Jarrow-Morton economy: a family of forward rates $\{f^N(t,T) : T \ge t\}$ satisfies $r^N(t) = f^N(t,t)$, where $r^N$ is the instantaneous nominal risk-free rate, and evolves under $\QN$ as
\begin{equation}\label{eq:fN_dyn}
df^N(t,T) = \alpha^N(t,T)\,dt + \sigvec{N}(t,T)^{\top} dW^N(t).
\end{equation}
The drift $\alpha^N$ is pinned by the HJM no-arbitrage condition to $\sigvec{N}(t,T)^{\top} \int_t^T \sigvec{N}(t,u)\,du$. The nominal bank account is $B^N_t = \exp\!\bigl(\int_0^t r^N(u)\,du\bigr)$.

The real economy is defined symmetrically. Under its pricing measure $\QR$, a real bank account $B^R_t = \exp\!\bigl(\int_0^t r^R(u)\,du\bigr)$ accrues at the instantaneous real risk-free rate $r^R(t) = f^R(t,t)$, and the real forward curve evolves as
\begin{equation}\label{eq:fR_dyn}
df^R(t,T) = \alpha^R(t,T)\,dt + \sigvec{R}(t,T)^{\top} dW^R(t),
\end{equation}
with $\alpha^R$ determined by the within-curve HJM restriction under $\QR$.

The corporate-credit economy follows the same pattern. Let $s^C(t,t)$ denote the instantaneous corporate-credit spread of a representative issuer above the nominal short rate; the credit short rate is $r^C(t) = r^N(t) + s^C(t,t)$. The credit bank account $B^C_t = \exp\!\bigl(\int_0^t r^C(u)\,du\bigr)$ accrues at the rate at which a continuously-rolled, instantaneously-resetting corporate-credit position would compound. Under the credit pricing measure $\QC$, the corporate-credit forward curve $\{f^C(t,T)\}$ evolves as
\begin{equation}\label{eq:fC_dyn}
df^C(t,T) = \alpha^C(t,T)\,dt + \sigvec{C}(t,T)^{\top} dW^C(t),
\end{equation}
with the corresponding HJM drift restriction under $\QC$.

\subsection{Exchange rates}\label{subsec:fx}

Two synthetic exchange rates connect the three economies. The inflation exchange rate $I_t$ is the level of the consumer price index at $t$, normalised so that $I_0 = 1$. Absence of arbitrage between nominal and inflation-indexed instruments requires that $I_t B^R_t / B^N_t$ be a $\QN$-martingale, which is equivalent to the dynamics
\begin{equation}\label{eq:I_dyn}
\frac{dI_t}{I_t} = \bigl[r^N(t) - r^R(t)\bigr]\,dt + \sigvec{I}(t)^{\top} dW^N(t),
\end{equation}
with inflation-FX volatility $\sigvec{I}(t)$ \citep{jarrow2003hjm}. The drift $r^N - r^R$ is the model-implied breakeven inflation rate.

The credit exchange rate $J_t$ is the analogous object connecting the nominal and corporate-credit economies. Define $J_t = B^N_t / B^C_t$, the cumulative discount of a corporate-credit-numeraire position relative to a nominal-risk-free position. Under $\QN$, $J_t$ satisfies
\begin{equation}\label{eq:J_dyn}
\frac{dJ_t}{J_t} = -s^C(t,t)\,dt + \sigvec{J}(t)^{\top} dW^N(t),
\end{equation}
with credit-FX volatility $\sigvec{J}(t)$. A position long $J$ loses the instantaneous corporate-credit spread per unit time, by construction.

The volatility $\sigvec{J}(t)$ governs the dynamics of a synthetic instrument that is not directly traded. It is not separately identified from histories of the nominal, real, and corporate forward curves alone: its marginal contribution to a likelihood function over those three curves is degenerate against the contributions of $\sigvec{N}, \sigvec{R}, \sigvec{C}$ under any finite-factor decomposition consistent with the data. Pricing of credit-FX-sensitive instruments such as quanto credit derivatives requires fixing $\sigvec{J}$ by economic convention or by external instruments. The empirical content of the triangle identity derived in Section~\ref{sec:restriction} does not depend on $\sigvec{J}$.

\subsection{Change of measure}\label{subsec:girsanov}

Girsanov's theorem relates the three pricing measures through the exchange-rate dynamics. The Radon-Nikodym derivative from $\QN$ to $\QR$ is
\begin{equation}\label{eq:RN_NR}
\left.\frac{d\QR}{d\QN}\right|_{\mathcal{F}_t} = \mathcal{E}\!\left(\int_0^t \sigvec{I}(u)^{\top}\,dW^N(u)\right),
\end{equation}
where $\mathcal{E}(\cdot)$ denotes the stochastic exponential. The corresponding Brownian motion under $\QR$ is $W^R(t) = W^N(t) - \int_0^t \sigvec{I}(u)\,du$. The Radon-Nikodym derivative from $\QN$ to $\QC$ takes the analogous form with $\sigvec{J}$ in place of $\sigvec{I}$, yielding $W^C(t) = W^N(t) - \int_0^t \sigvec{J}(u)\,du$. The change of measure between $\QR$ and $\QC$ is the composition, with Brownian-motion shift $\int_0^t [\sigvec{J}(u) - \sigvec{I}(u)]\,du$.

Drift corrections under change of measure produce the cross-economy relations between forward-rate dynamics. A process whose $\QN$ dynamics include $\sigvec{X}(t)^{\top} dW^N(t)$ acquires a quadratic-covariance correction $-\sigvec{X}(t)^{\top} \sigvec{I}(t)\,dt$ when its dynamics are rewritten under $\QR$, and $-\sigvec{X}(t)^{\top} \sigvec{J}(t)\,dt$ under $\QC$. These corrections close the system of within-curve drift restrictions across the three economies. Appendix~\ref{app:hjm} contains the explicit bond-price dynamics under each measure and the resulting joint drift constraints on $\sigvec{N}, \sigvec{R}, \sigvec{C}$.

\subsection{Spread definitions}\label{subsec:spreads}

A corporate issuer that raises debt under two parallel index bases admits two conventions for its credit spread. Let $f^C(t,T)$ denote the issuer's corporate-credit forward curve under the framework. The credit spread expressed over the nominal-rate-indexed risk-free curve is
\begin{equation}\label{eq:sN_def}
s^{N}(t,T) \equiv f^C(t,T) - f^N(t,T),
\end{equation}
and the credit spread expressed over the inflation-rate-indexed risk-free curve is
\begin{equation}\label{eq:sR_def}
s^{R}(t,T) \equiv f^C(t,T) - f^R(t,T).
\end{equation}
The two spreads share the same corporate forward $f^C$ and differ only in the risk-free basis against which the credit is measured. The breakeven inflation forward is the standard differential
\begin{equation}\label{eq:BE_def}
\BE(t,T) \equiv f^N(t,T) - f^R(t,T),
\end{equation}
which is the model-implied breakeven inflation rate at maturity $T$.

In the Brazilian instantiation developed in Section~\ref{sec:institutional}, the nominal-rate-indexed risk-free curve $f^N$ is constructed from DI futures, the standard market for Brazilian nominal-rate term structure, and the inflation-rate-indexed risk-free curve $f^R$ is constructed from NTN-B prices, Brazilian Treasury notes whose principal is indexed to IPCA. The credit spread $s^N$ is identified with the spread of corporate debentures indexed to CDI (denoted $s^{\mathrm{CDI}}$), and the credit spread $s^R$ is identified with the spread of corporate debentures indexed to IPCA (denoted $s^{\mathrm{IPCA}}$).

\section{The triangle restriction}\label{sec:restriction}

\begin{proposition}[Triangle identity]\label{prop:triangle}
Under the framework of Section~\ref{sec:framework}, the credit-spread conventions \eqref{eq:sN_def} and \eqref{eq:sR_def} satisfy, identically in $t$ and $T$,
\begin{equation}\label{eq:triangle}
s^R(t,T) \;=\; s^N(t,T) + \BE(t,T).
\end{equation}
\end{proposition}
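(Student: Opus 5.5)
The identity is purely definitional: subtract \eqref{eq:sN_def} from \eqref{eq:sR_def} and compare with \eqref{eq:BE_def}. No stochastic analysis is needed for the identity itself. The proof proceeds pointwise in $(t,T)$, with $T\ge t$ and $\omega\in\Omega$ fixed.

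First, I write
\begin{equation*}
s^R(t,T)-s^N(t,T)=\bigl[f^C(t,T)-f^R(t,T)\bigr]-\bigl[f^C(t,T)-f^N(t,T)\bigr].
\end{equation*}
Because both spread conventions are built on the \emph{same} corporate forward $f^C(t,T)$, the $f^C$ terms cancel. This leaves $f^N(t,T)-f^R(t,T)$, which is $\BE(t,T)$ by \eqref{eq:BE_def}. Rearranging gives \eqref{eq:triangle}. The identity therefore holds for all $t$, all $T\ge t$ and all $\omega$, not merely almost surely or in expectation.

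The one point that needs care is the phrase ``under the framework''. The cancellation is legitimate only because $f^C$, $f^N$ and $f^R$ are well-defined forward curves in one consistent system. I would justify this from Sections~\ref{subsec:economies}--\ref{subsec:girsanov} in two steps.

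First, the three forward curves are $\mathcal{F}_t$-measurable quantities, each defined within its own economy. Changing measure via the Girsanov relations \eqref{eq:RN_NR} alters the drifts of these curves, through the corrections $-\sigvec{X}^{\top}\sigvec{I}$ and $-\sigvec{X}^{\top}\sigvec{J}$. It does not alter their time-$t$ levels. A forward rate is a function of time-$t$ bond prices, $f(t,T)=-\partial_T\log P(t,T)$, and so is invariant under an equivalent measure change.

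Second, I have to check that the nominal and real forwards appearing in \eqref{eq:BE_def} are the same objects used in \eqref{eq:sN_def}--\eqref{eq:sR_def}. Here I should note that $f^R$ is quoted in real units. Following \citet{jarrow2003hjm}, the difference $f^N-f^R$ is exactly the forward drift of $I_t$ implied by \eqref{eq:I_dyn} together with the change-of-numeraire correction. I treat this as a consistency remark, not as part of the algebra.

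I expect the main obstacle to be interpretive rather than technical. The content of the proposition is that joint no-arbitrage forces a \emph{single} $f^C$, so that the CDI and IPCA segments price one credit economy. Within the model this is built in by construction. I would therefore close the proof with a remark. The testable version replaces $f^C$ by two segment-specific observed forwards, $f^C_{\mathrm{CDI}}=f^N+s^{\mathrm{CDI}}$ and $f^C_{\mathrm{IPCA}}=f^R+s^{\mathrm{IPCA}}$. The residual $s^{\mathrm{IPCA}}-s^{\mathrm{CDI}}-\BE$ then equals $f^C_{\mathrm{IPCA}}-f^C_{\mathrm{CDI}}$, which vanishes exactly under the shared-credit-economy null. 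This step is the same one-line cancellation, applied to the observed objects.
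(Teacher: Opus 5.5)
Your proof is correct and is the same argument the paper gives: subtracting \eqref{eq:sN_def} from \eqref{eq:sR_def} cancels the common corporate forward $f^C$ and leaves $f^N-f^R=\BE$ by \eqref{eq:BE_def}. Your remark on measure-invariance of time-$t$ levels is not needed for the identity, and your closing remark on the two-segment residual matches the paper's own step from the identity to the empirical null $\Delta\equiv 0$.
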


\begin{proof}
Subtract \eqref{eq:sN_def} from \eqref{eq:sR_def}. The corporate forward $f^C$ cancels, leaving $f^N - f^R = \BE$.
\end{proof}

The identity is algebraic, not statistical. It follows from the definitional structure of the two spreads, both expressed relative to the same corporate forward $f^C$. Within any single calibration of the framework, the identity holds by construction. The framework's empirical content arises when the identity is applied across two parallel corporate-bond segments of a market, each segment generating its own observed corporate forward.

\subsection{From algebraic identity to empirical restriction}\label{subsec:empirical_content}

In a market with two parallel corporate-bond segments, two corporate forwards are observable. The nominal-indexed segment yields $f^{C,N}_{\mathrm{mkt}}(t,T) = f^N(t,T) + s^N_{\mathrm{mkt}}(t,T)$ from its observed credit spread $s^N_{\mathrm{mkt}}$. The inflation-indexed segment yields $f^{C,R}_{\mathrm{mkt}}(t,T) = f^R(t,T) + s^R_{\mathrm{mkt}}(t,T)$ from its observed credit spread $s^R_{\mathrm{mkt}}$. The framework's shared-credit-economy null is that the two implied corporate forwards coincide,
\begin{equation}\label{eq:H0}
H_0: \quad f^{C,N}_{\mathrm{mkt}}(t,T) = f^{C,R}_{\mathrm{mkt}}(t,T) \quad \text{for all } t,\,T.
\end{equation}
Equivalently, define the within-issuer triangle residual
\begin{equation}\label{eq:Delta_def}
\Delta(t,T) \;\equiv\; s^R_{\mathrm{mkt}}(t,T) - s^N_{\mathrm{mkt}}(t,T) - \BE(t,T).
\end{equation}
The null is $\Delta \equiv 0$. The residual $\Delta$ measures the gap between the two implied corporate forwards in basis points.

The residual admits a transparent interpretation. If the two segments share the same underlying corporate-credit economy, the two implied corporate forwards coincide and $\Delta = 0$. If the two segments price different corporate economies, $\Delta$ measures the gap. Three classes of mechanisms can produce $\Delta \ne 0$: segmentation of marginal investors across the two segments by tax treatment, by investor type, or by institutional access; contractual differences between the segments, such as covenant packages, use-of-proceeds restrictions, or ring-fencing of cash flows, that produce different effective credit risks for the same legal entity; and liquidity differences that produce a discount on the less actively traded side. The test of $\Delta = 0$ does not separately identify these channels but rejects the shared-credit-economy assumption in their joint presence.

\subsection{Interpretation as a cross-curve no-arbitrage restriction}\label{subsec:interpretation}

The HJM literature derives drift restrictions on forward rates within a single curve, conditional on a chosen numeraire. The triangle identity \eqref{eq:triangle} is a cross-curve restriction of a different structure. It does not constrain the dynamics of any single forward rate. It constrains the joint level of two corporate spread curves measured against two different risk-free bases. The two are equivalent if and only if the framework's shared-credit-economy assumption holds.

A failure of $\Delta \equiv 0$ does not invalidate the HJM framework or the change-of-measure machinery. It rejects only the joint hypothesis that one corporate-credit economy describes both segments. The empirical violation of $\Delta = 0$ is informative when accompanied by a candidate mechanism that breaks the shared-credit-economy hypothesis in a known direction. Section~\ref{sec:institutional} describes the regulatory mechanism in the Brazilian setting that breaks the assumption: an income-tax exemption for individual investors in the inflation-indexed segment, sustained by a 2011 statute (Lei 12.431) and absent in the nominal-indexed segment.

\section{Brazilian debenture markets and data}\label{sec:institutional}

Brazilian fixed-income markets price three benchmark curves and three corporate-credit segments in parallel. The CDI overnight rate anchors the nominal curve through DI futures on B3; ANBIMA-mediated NTN-B prices anchor the real (IPCA-linked) curve; corporate debentures trade in two index-tracked families, JGP Idex-CDI and JGP Idex-Infra. The Idex-CDI Core universe tracks the spread of liquid, rated CDI-linked corporate debentures over the DI curve across a broad issuer base. The Idex-Infra Core universe tracks the spread of IPCA-linked infrastructure debentures over the NTN-B curve; these instruments are typically issued under \emph{Lei}~12.431.

\paragraph{Lei~12.431 and the retail tax exemption.}
Enacted in 2011 to encourage retail funding of long-tenor infrastructure investment, \emph{Lei}~12.431 exempts qualifying IPCA-linked infrastructure debentures from Brazilian individual-investor income tax on coupons and amortization. The exemption applies to securities meeting use-of-proceeds restrictions tied to specific infrastructure sectors (energy, sanitation, logistics, telecommunications) and to minimum-tenor and amortization-profile requirements set by inter-ministerial decree. Comparable long-tenor non-infrastructure CDI-linked corporate debentures held by individual investors longer than 720 days face the 15\% rate of \emph{Lei}~11.033. The two regimes coexist over our sample with no overlap in coverage: an infrastructure debenture is either qualifying under 12.431 (exempt) or it is not (subject to 11.033).

\paragraph{Dual-listed issuers.}
Many large Brazilian corporates with infrastructure assets fund themselves through both index families. Telecom operators, electricity utilities, sanitation concessionaires, and integrated logistics operators commonly place CDI-linked general-purpose debentures alongside IPCA-linked debentures issued under \emph{Lei}~12.431 against specific infrastructure projects. For each such issuer, the two index families produce two parallel spread quotes referenced to a common legal entity. In our sample, 15 issuers have at least 800 days of joint observation in both index families at the 3-year tenor; the within-issuer comparison restricted to these issuers is the headline test of Section~\ref{subsec:dual_listed}.

\paragraph{Marginal investor and clientele structure.}
The qualifying IPCA-linked infrastructure debenture market is, by construction of the exemption, populated disproportionately by individual investors holding either directly or through self-directed brokerage accounts. The CDI-linked corporate market is held more broadly across investment vehicles: credit funds, multimarket and pension funds, and corporate treasuries. Institutional ownership of \emph{Lei}~12.431 paper exists but is non-marginal at the level of the headline retail-incentive-driven flow. The distinction at the margin between the two clienteles is what generates the tax-driven price wedge tested below.

\subsection{Data and curve construction}\label{subsec:data}

The empirical test uses four time series aligned on common business dates: nominal forwards from DI futures, real forwards from NTN-B prices, and credit-spread curves from JGP Idex-CDI and Idex-Infra. This subsection describes each source and the alignment.

\subsubsection{Nominal curve: DI futures}
\label{subsec:data_di}

The nominal term structure is extracted from B3-listed DI1 (one-day interbank deposit) futures contracts \citep{b3_di1}. B3 defines DI1 as an exchange-traded futures contract on the average daily interbank-deposit rate between trade date and maturity, with all listed monthly maturities eligible for trading. From the strip of DI1 settlement prices, we bootstrap zero-coupon rates using the standard business-day compounding convention:
\begin{equation}\label{eq:di_bootstrap}
  P^N(t, T_i) = \frac{1}{\big(1 + y^N(t,T_i)\big)^{du(t,T_i)/252}},
\end{equation}
where $y^N(t,T_i)$ is the annualized yield and $du(t,T_i)$ the number of business days from $t$ to $T_i$ following the ANBIMA calendar. We convert to continuously compounded instantaneous forward rates via:
\begin{equation}\label{eq:fwd_from_zcb}
  f^N(t,T) = -\frac{\partial}{\partial T}\ln P^N(t,T),
\end{equation}
implemented numerically as a finite difference on the interpolated zero curve. Interpolation between DI1 maturities uses the flat-forward convention (piecewise-constant forward rates between nodes), which is the market standard and avoids introducing curvature artifacts.

The resulting dataset consists of daily nominal forward curves at pillar maturities $\tau \in \{0.25, 0.50, 1, 2, 3, 5, 7, 10\}$ years over 1{,}270 business days from 2021-01-04 to 2026-02-05 (after alignment with NTN-B and JGP series, below).

\subsubsection{Real curve: NTN-B government bonds}
\label{subsec:data_ntnb}

The real term structure is derived from NTN-B (Notas do Tesouro Nacional, S\'erie B) prices, which are IPCA-linked government bonds paying semiannual coupons of 6\% p.a.\ on the inflation-adjusted principal. We use ANBIMA indicative prices \citep{anbima_ntnb}, which reflect the median of broker-dealer quotes and are the standard source for marking-to-market in Brazilian institutional portfolios.

From the NTN-B yield-to-maturity quotes $\{y^{NTN\text{-}B}(t, T_i)\}$, we strip the coupon structure to obtain zero-coupon real rates via a bootstrap procedure that accounts for:
\begin{enumerate}[label=(\roman*)]
  \item The business-day compounding convention (same $du/252$ structure as DI).
  \item The semi-annual coupon schedule with IPCA indexation.
  \item The VNA (Valor Nominal Atualizado) accrual convention for the inflation-adjusted principal.
\end{enumerate}

Over the sample, the outstanding NTN-B universe comprises 9 to 14 maturities at any given date, with the longest currently extending to 2060. The stripped zero-coupon real curve is interpolated using the \citet{svensson1994estimating} parametrization, an extension of the \citet{nelson1987parsimonious} family adopted by ANBIMA and standard in the Brazilian term-structure literature \citep{varga2009teste}; we fit it by nonlinear least squares to match the observed ANBIMA indicative rates to within approximately 20~bp of yield error on typical days. The resulting real forward curves are evaluated at the same pillar maturities as the nominal curve.

\begin{remark}[Maturity mismatch and short-end filter]
\label{rem:maturity_mismatch}
The NTN-B grid is substantially coarser than the DI strip (9 to 14 outstanding maturities versus roughly 30 DI1 contracts). Below about six months, NTN-B is genuinely absent from primary issuance, so the Svensson fit at $\tau < 1$~year is pure extrapolation. For the real-curve PCA (Section~\ref{app:calibration}) we therefore restrict to pillars $\tau \in \{1, 2, 3, 5, 7, 10\}$ years; the full eight-pillar grid is retained only for the cross-sectional curve diagnostics. Decay parameters are shared with the nominal block defined in \Cref{subsec:shared_factor}.
\end{remark}

\subsubsection{Credit spread curves: JGP Idex-CDI and Idex-Infra}
\label{subsec:data_jgp}

The corporate credit term structures are obtained from the JGP Idex index families \citep{jgp_idex_cdi_methodology,jgp_idex_infra_methodology}. The \emph{Idex-CDI Core} tracks the spread of liquid, rated CDI-linked (CDI$^+$) corporate debentures over the DI curve, across a broad issuer universe. The \emph{Idex-Infra Core} tracks the spread of IPCA-linked (IPCA$^+$) infrastructure debentures over the NTN-B curve; these instruments are typically issued under \emph{Lei}~12.431, which grants an income-tax exemption to Brazilian individual investors holding infrastructure debentures \citep{lei_12431}, creating a distinct retail-driven demand clientele.

At each date we form a duration-bucketed spread curve at vertices $\tau_i \in \{1, 2, 3, 5\}$ years:
\begin{equation}\label{eq:spread_bucket}
  s_{mkt}(t, \tau_i) \;=\; \frac{\sum_{d \in B_i(t)} w_d(t)\, s_d(t)}{\sum_{d \in B_i(t)} w_d(t)},
\end{equation}
where $B_i(t)$ is the set of index constituents with modified duration within a half-width of vertex $\tau_i$ (half-widths $\pm 0.5$~y for the 1, 2, 3-year vertices and $\pm 1.0$~y for the 5-year vertex), $w_d(t)$ is the constituent weight, and $s_d(t)$ its daily quoted spread. The bucketing provides a stable term structure despite the sparse per-debenture quote history.

\paragraph{From observable quotes to instantaneous forwards.}
The theoretical objects in the three-currency framework are instantaneous credit-forward spreads $s^{CDI}(t,T)$ and $s^{IPCA}(t,T)$. The empirical objects in~\eqref{eq:spread_bucket} are duration-bucketed averages of constituent par-yield spreads. We treat $s_{mkt}(t,\tau_i)$ as a discrete-tenor proxy for $s^{CDI}(t,t+\tau_i)$ and $s^{IPCA}(t,t+\tau_i)$, with the understanding that par-yield-to-forward conversion at the constituent level would introduce a duration-dependent convexity correction of at most a few basis points across the maturity range used in the test. The IPCA-linked constituents carry a publication lag relative to nominal data; we align them to a common observation date using the publication-date convention. Robustness checks in Section~\ref{sec:robustness} vary the duration half-widths by $\pm 0.25$~y and shift the IPCA lag by $\pm 1$ business day; the mean within-issuer wedge $\bar\Delta_i$ moves by less than 5~bp under either perturbation, which bounds the contribution of this quote-to-forward conversion to the headline result.

\begin{remark}[Constituent universe]
\label{rem:constituents}
Over our sample, Idex-CDI Core contains 643 unique debentures from 181 issuers spanning sectors including \emph{Energia}, \emph{Financeiro}, \emph{Distribuidores}, \emph{Industrial}, \emph{Imobiliário}, \emph{Consumo}, \emph{Educação}, and \emph{Insumos}; per-date constituent counts average 195 (min 48, max 346). Idex-Infra Core contains 337 debentures from 108 issuers concentrated in \emph{Energia}, \emph{Saneamento}, \emph{Logística}, \emph{Telecom}, \emph{Bioenergia}, \emph{Papel e Celulose}, \emph{Produção Rural}, and \emph{Metais e Mineração}; per-date counts average 115 (min 38, max 301). The two universes partially overlap: 68 issuers have debentures in \emph{both} families, which plays a central role in the cross-model test of Section~\ref{subsec:dual_listed}.
\end{remark}

The 1-year IPCA$^+$ vertex is sparsely populated because Idex-Infra Core concentrates on longer-duration infrastructure debt: even under the wider half-window, only about 8\% of sample dates have five or more constituents in the 1-year bucket. Vertex $\tau = 2$~y has 42\% coverage, $\tau = 3$~y reaches 75\%, and $\tau = 5$~y is 97\% populated. We therefore use the 3-year pillar as the short-end reference for IPCA$^+$ realized-vol identification.

\subsubsection{Data alignment and sample overview}
\label{subsec:sample_overview}

We align the four time series (DI, NTN-B, Idex-CDI, Idex-Infra) on common business dates, excluding dates with missing observations in any of the rate series (the Idex-Infra spread curves have structural sparsity at short pillars, handled separately). The effective sample comprises 1{,}270 daily business-day observations from 2021-01-04 to 2026-02-05. PCA and correlation estimation use weekly (Friday-close) changes, yielding 155 weekly observations in the estimation window (2021-01-04 to 2023-12-29) and 110 in the out-of-sample window (2024-01-02 to 2026-02-05).

\Cref{tab:data_summary} reports summary statistics at the 3-year maturity, chosen because it is both densely populated across all four series and the maturity at which the dual-listed issuer test (Section~\ref{subsec:dual_listed}) is most informative.

\begin{table}[ht]
\centering
\caption{Summary statistics over 2021-01-04 to 2026-02-05 at the 3-year maturity. Rates in \% p.a.\ (continuous compounding); spreads in bp.}
\label{tab:data_summary}
\begin{tabular}{@{}lcccc@{}}
\toprule
\textbf{Variable} & \textbf{Mean} & \textbf{Std} & \textbf{Min} & \textbf{Max} \\
\midrule
$f^N(t,3)$ & 11.13 & 1.43 & 7.14 & 14.33 \\
$f^R(t,3)$ & 5.79 & 1.13 & 3.05 & 8.24 \\
Breakeven $f^N(t,3) - f^R(t,3)$ & 5.34 & 0.66 & 2.82 & 7.53 \\
$s^{CDI}_{mkt}(t,3)$ (bp) & 182.4 & 37.7 & 125.9 & 310.9 \\
$s^{IPCA}_{mkt}(t,3)$ (bp) & 24.2 & 38.7 & $-57.8$ & 123.6 \\
\bottomrule
\end{tabular}
\end{table}

\begin{figure}[ht]
\centering
\includegraphics[width=\linewidth]{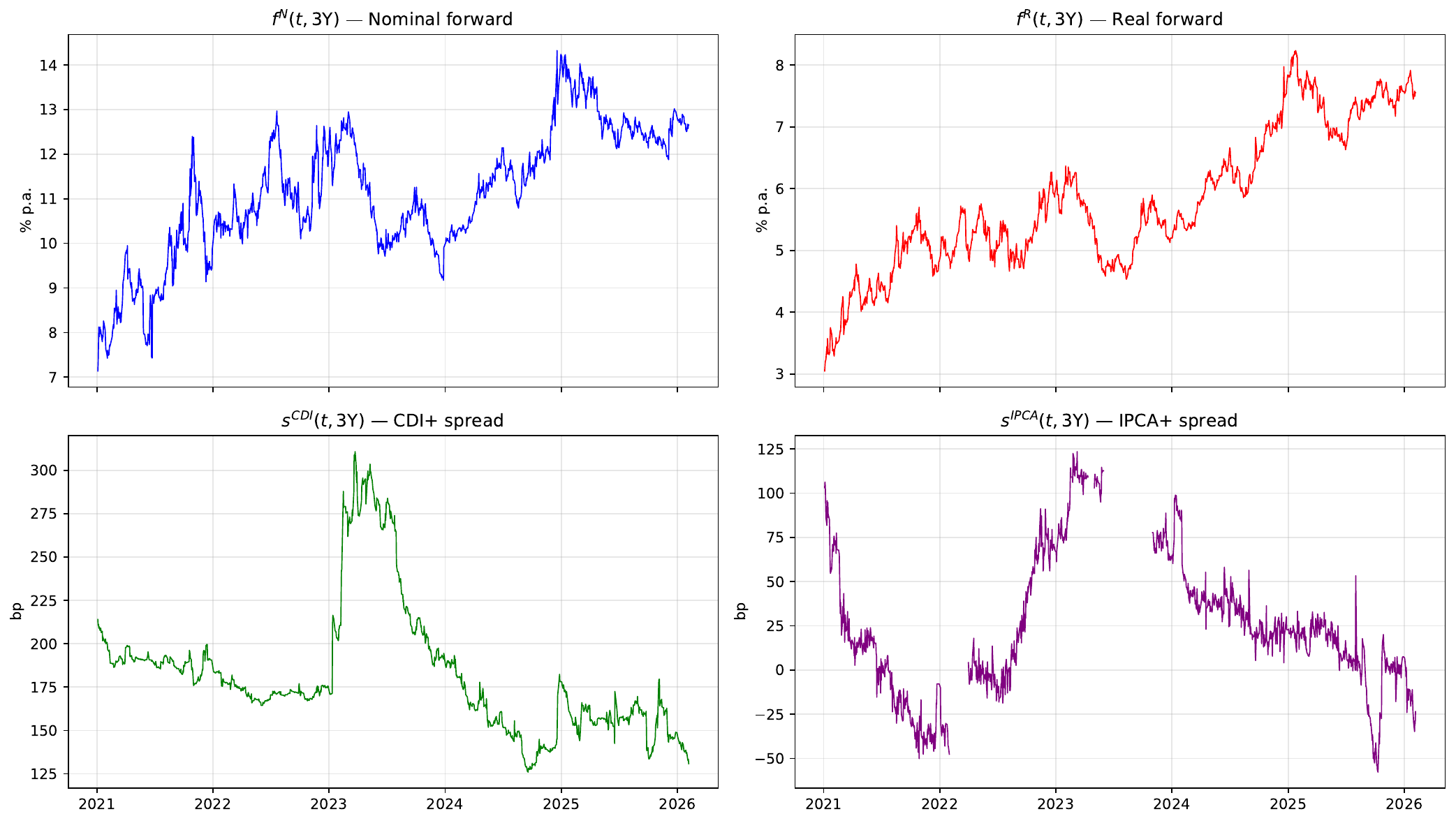}
\caption{Time series of the four 3-year series: DI futures yield, NTN-B real yield, $s^{CDI}_{mkt}(t,3)$, and $s^{IPCA}_{mkt}(t,3)$, over the full sample 2021-01-04 to 2026-02-05. The vertical dashed line separates the estimation window (through 2023-12-29) from the out-of-sample window.}
\label{fig:time_series}
\end{figure}

Figure~\ref{fig:time_series} (produced by the calibration code) plots the time series of the same four variables. The 2022 tightening cycle, the mid-2023 disinflation turn, and a localized credit-fund redemption episode in mid-2024 are all visible in the CDI$^+$ series. The IPCA$^+$ spread series is strikingly lower in level than the CDI$^+$ series, averaging about 24~bp over the NTN-B curve versus 182~bp for CDI$^+$ over DI. This 160~bp level difference foreshadows the tax-incentive finding of Section~\ref{subsec:dual_listed}.

\section{Within-issuer test of the triangle restriction}\label{sec:test}

\subsection{Calibration summary}\label{subsec:calibration}

The framework is calibrated as two parallel credit economies. Model~A treats CDI$^+$ spreads as the credit block; Model~B treats IPCA$^+$ spreads as the credit block. The nominal and real rate blocks and the inflation FX volatility $\sigma_I$ are shared across the two models; the spread loadings, rate-spread correlations, and the credit FX volatility $\sigma_J$ are model-specific.

Weekly curve changes identify $m_N=3$ nominal factors, $m_R=2$ real factors, and $m_S=2$ spread factors. The retained factors explain 84.3\% of nominal-rate variance, 90.5\% of real-rate variance, 82.4\% of CDI$^+$ spread-change variance, and 95.9\% of IPCA$^+$ spread-change variance (Model~B is fitted on pillars $\{2,3,5\}$~yr, with the sparsely populated 1-year vertex dropped). The inflation FX volatility, read from monthly IPCA log-returns after deseasonalization, is $|\sigma_I|=135$~bp/$\sqrt{\text{yr}}$. The credit FX volatilities $|\sigma_J^A|=80.7$ and $|\sigma_J^B|=75.9$~bp/$\sqrt{\text{yr}}$ are set by convention to the shortest-pillar realized spread volatility, not separately identified from market data; the within-issuer wedge analyzed in Section~\ref{subsec:dual_listed} does not depend on this choice (see Remark~\ref{rem:sigJ_convention} in Appendix~\ref{app:calibration}).

Cross-block dependence is read from raw factor-score correlations: within-block PCA orthogonalizes each block's loadings, so no further filtering is needed before computing the off-block entries of the $7\times 7$ correlation matrix $\hat\rho$. The full matrix is reported in Table~\ref{tab:corr_cdi}; an in-sample/out-of-sample comparison for the four pairs of economic interest appears in Table~\ref{tab:oos_corr}.

Out-of-sample coverage is adequate for the objects used in the main test. One-week 90\% intervals cover nominal-rate changes at 90.9 to 98.2\% across reported maturities and CDI$^+$ spread changes at 85.5 to 92.7\%. Real-rate coverage falls to 69.1 to 75.5\% at the 5- and 7-year maturities because the 2024 to 2026 long-end NTN-B moves were larger than in the estimation window. Spread innovations are heavier-tailed than rates, which is the main distributional limitation of the Gaussian HJM specification.

\subsection{Dual-listed test}\label{subsec:dual_listed}

For an issuer $i$ observed in both the Idex-CDI and Idex-Infra constituent lists, the common-$f^C$ hypothesis under which Models A and B agree predicts that
\begin{equation}\label{eq:dual_listed}
  \Delta_i(t) \;\equiv\; \bar s^{IPCA}_i(t) - \bar s^{CDI}_i(t) - [f^N(t, \bar \tau_i) - f^R(t, \bar \tau_i)] \;\approx\; 0,
\end{equation}
up to maturity-matching and idiosyncratic noise, where bars denote per-date weight averages across the issuer's tickers in each family and $\bar\tau_i$ is the average duration matched to the nearest vertex of the breakeven curve.

\begin{table}[ht]
\centering
\caption{Dual-listed issuer test: per-issuer time-series statistics of $\Delta_i$ (bp) across the 15 dual-listed issuers with the longest joint observation windows. Positive $\Delta_i$ means IPCA$^+$ spread is \emph{higher} than the triangle-consistent level; negative means lower.}
\label{tab:dual_listed}
\small
\begin{tabular}{@{}lrrrr@{}}
\toprule
Issuer & Mean & Median & Std & $n_{\text{obs}}$ \\
\midrule
Eletrobras                                          & $-629.6$ & $-632.6$ & 61.8 & 1{,}270 \\
Eneva                                               & $-660.2$ & $-659.0$ & 53.8 & 1{,}270 \\
Energisa Mato Grosso - Distribuidora de Energia     & $-625.5$ & $-629.4$ & 72.8 & 1{,}210 \\
Cemig Distribuicao                                  & $-604.8$ & $-601.9$ & 74.7 & 1{,}207 \\
CTEEP                                               & $-663.8$ & $-668.6$ & 63.0 & 1{,}190 \\
Sabesp                                              & $-647.2$ & $-641.7$ & 68.3 & 1{,}190 \\
Energisa                                            & $-637.8$ & $-637.6$ & 68.7 & 1{,}149 \\
BRK AMBIENTAL                                       & $-693.7$ & $-692.4$ & 49.5 & 1{,}141 \\
CCR                                                 & $-643.5$ & $-635.4$ & 73.1 & 1{,}022 \\
Algar Telecom                                       & $-625.2$ & $-627.7$ & 65.2 & 1{,}003 \\
Coelce                                              & $-607.5$ & $-606.8$ & 59.3 &    961 \\
Taesa                                               & $-663.1$ & $-654.2$ & 88.0 &    943 \\
Coelba                                              & $-665.1$ & $-667.1$ & 47.8 &    943 \\
Copel Geracao e Transmissao                         & $-642.1$ & $-648.2$ & 70.8 &    898 \\
AES Tiete Energia                                   & $-596.8$ & $-594.3$ & 61.8 &    815 \\
\midrule
\textbf{Summary across 15 issuers} & & & & \\
Mean of per-issuer means            & \multicolumn{4}{l}{$-640.4$~bp} \\
Across-issuer std of means          & \multicolumn{4}{l}{$26.4$~bp} \\
Mean of per-issuer std              & \multicolumn{4}{l}{$65.2$~bp} \\
Total observations                  & \multicolumn{4}{l}{16{,}212} \\
\bottomrule
\end{tabular}
\end{table}

\begin{figure}[ht]
\centering
\includegraphics[width=\linewidth]{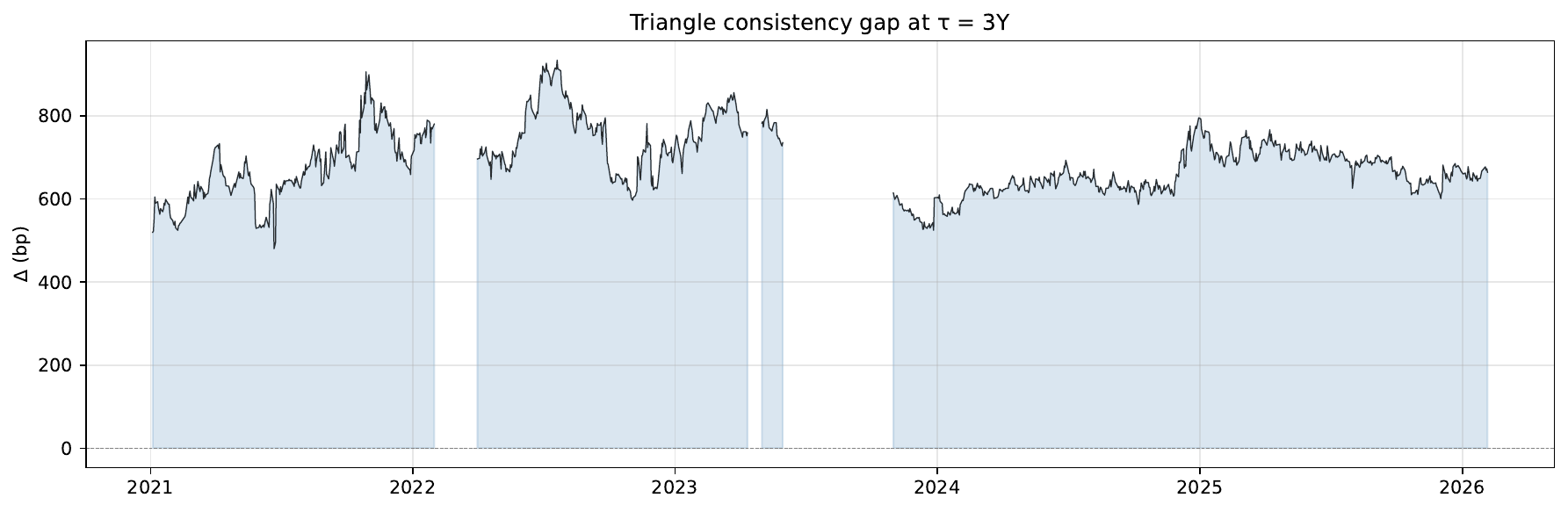}
\caption{Time series of the within-issuer triangle residual $\Delta_i(t)$ for the 15 dual-listed Brazilian issuers between 2021-01-04 and 2026-02-05. Each line corresponds to one issuer's $\Delta_i$ at the 3-year tenor. The residual is uniformly negative and clusters in a narrow band around $-640$~bp, with cross-sectional standard deviation of issuer means equal to 26~bp, well below the within-issuer time-series standard deviation of approximately 65~bp. The pattern is stable through the 2021--2023 BCB tightening cycle and the 2024--2026 easing phase.}
\label{fig:triangle_gap}
\end{figure}

Two features dominate Table~\ref{tab:dual_listed} and Figure~\ref{fig:triangle_gap}. Every dual-listed issuer has $\Delta_i$ negative and clustered in the $-600$ to $-700$~bp range: the IPCA$^+$ spread on each issuer's infrastructure debentures is systematically below the triangle-consistent level. The cross-sectional dispersion of issuer means is 26~bp, substantially smaller than the within-issuer time-series standard deviation of $\Delta_i$, which averages 65~bp. The wedge is a property of the market segment, not the issuer.

The pattern rules out the common-$f^C$ hypothesis even within the subset of issuers that trade in both segments. Sign, magnitude, and cross-sectional uniformity all point to the retail-tax channel of \emph{Lei}~12.431: Brazilian individual investors holding qualifying infrastructure debentures with maturity above 720 days pay no income tax on coupons and amortization, whereas the same investors would face a 15\% withholding tax on income from a comparable long-tenor non-exempt CDI$^+$ debenture \citep{lei_12431,lei_11033}. Section~\ref{subsec:decomposition} quantifies the tax channel against the observed wedge under both a breakeven-adjusted reading and a raw-spread reading, and treats the residual as a candidate for the institutional, contractual, and liquidity effects discussed there.

In modelling terms, the framework's Models A and B must be treated as distinct credit economies. Forcing both segments into a single $f^C$ would absorb a 640~bp deterministic wedge that is structural rather than residual measurement error.

\subsection{Post-tax indifference benchmark}\label{subsec:tax_parity}

The triangle-test residual $\Delta_i$ defined in \eqref{eq:Delta_def} measures how far the within-issuer comparison departs from the common-credit-economy null. To quantify how much of any such departure is attributable to a known regulatory channel, we derive the prediction of post-tax indifference for a Brazilian retail investor under the \emph{Lei} 11.033 progressive schedule (15\% on debentures held longer than 720 days) with the \emph{Lei} 12.431 exemption applied to qualifying infrastructure debentures.

\paragraph{Indifference condition.}
Let $y^{IPCA}(t,\tau) = f^N(t,\tau) + s^{IPCA}(t,\tau)$ denote the nominal yield on an IPCA-linked corporate debenture, having passed inflation through the breakeven, and let $y^{CDI}(t,\tau) = f^N(t,\tau) + s^{CDI}(t,\tau)$ denote the nominal yield on a CDI-linked corporate debenture. A retail investor exempt on the IPCA leg under \emph{Lei}~12.431 and taxed at $\tau_{PF} = 0.15$ on the CDI leg is indifferent at the maturity-matched margin when after-tax nominal yields agree:
\begin{equation}\label{eq:tax_indifference}
  y^{IPCA}(t,\tau) = (1-\tau_{PF})\,y^{CDI}(t,\tau).
\end{equation}

\paragraph{Prediction for the raw spread difference (Candidate A).}
Expanding~\eqref{eq:tax_indifference} and rearranging,
\begin{equation}\label{eq:tau_fiscal_raw}
  s^{IPCA}(t,\tau) - s^{CDI}(t,\tau) \;=\; -\tau_{PF}\,\bigl[f^N(t,\tau) + s^{CDI}(t,\tau)\bigr].
\end{equation}
The post-tax indifference prediction for the raw spread difference is the tax wedge $\tau_{PF}$ applied to the gross CDI corporate yield. At the 3-year sample means ($f^N = 11.13\%$, $s^{CDI} = 1.82\%$) this evaluates to $-0.15 \times 12.95\% = -194$~bp.

\paragraph{Prediction for the breakeven-adjusted wedge (Candidate B).}
Substituting~\eqref{eq:tau_fiscal_raw} into the definition $\Delta(t,\tau) = s^{IPCA}(t,\tau) - s^{CDI}(t,\tau) - [f^N(t,\tau) - f^R(t,\tau)]$ gives
\begin{equation}\label{eq:tau_fiscal_delta}
  \tau_{fiscal}(t,\tau) \;=\; -\bigl[f^N(t,\tau) - f^R(t,\tau)\bigr] \;-\; \tau_{PF}\bigl[f^N(t,\tau) + s^{CDI}(t,\tau)\bigr].
\end{equation}
This is the linear post-tax benchmark for the within-issuer wedge $\Delta$. The first term is the deterministic breakeven removal that the definition of $\Delta$ requires; the second is the behavioral tax-wedge component. At the 3-year sample means ($f^N - f^R = 5.34\%$), $\tau_{fiscal} = -534 - 194 = -728$~bp at the aggregate level and $-733$~bp averaged across the 15 dual-listed issuers' durations.

Equations~\eqref{eq:tau_fiscal_raw} and~\eqref{eq:tau_fiscal_delta} are arithmetically equivalent: each predicts the same residual $\eta$ relative to its corresponding observable. Section~\ref{subsec:decomposition} reports both readings of the per-issuer decomposition.

\paragraph{Finite-horizon correction.}
Equation~\eqref{eq:tax_indifference} is an instantaneous-yield condition. The exact retail-investor problem applies the tax to the realized payoff over the holding period $D$, replacing $(1-\tau_{PF})\,y^{CDI}$ in~\eqref{eq:tax_indifference} with
\begin{equation}\label{eq:tax_parity_exact}
  y^{after}_{CDI}(y, D) = \frac{1}{D}\,\ln\!\left[1 + (1-\tau_{PF})\bigl(e^{y D} - 1\bigr)\right].
\end{equation}
At the per-issuer means in our sample the correction lies in the 25--45~bp range, averaging $29$~bp across the 15 issuers. The exact-form analog of~\eqref{eq:tau_fiscal_delta} shifts the aggregate $\tau_{fiscal}$ prediction from $-728$ to approximately $-699$~bp, with corresponding shifts at each issuer's duration.

\paragraph{Scope.}
The benchmark assumes the marginal IPCA$^+$ investor is retail and tax-exempt under \emph{Lei}~12.431. It does not price institutional segmentation on either leg, contractual differences between general-purpose CDI$^+$ debentures and use-of-proceeds-restricted infrastructure debentures, or residual liquidity premia. Section~\ref{subsec:decomposition} treats those as candidates for the residual $\eta$.

\subsection{Decomposition of the within-issuer wedge}\label{subsec:decomposition}

Section~\ref{subsec:tax_parity} derived the post-tax indifference benchmark $\tau_{fiscal}$ for the within-issuer wedge under~\eqref{eq:tau_fiscal_delta} (linear) and its finite-horizon counterpart \eqref{eq:tax_parity_exact}. Applied at each issuer's average duration with the issuer's time-averaged CDI$^+$ spread, the linear form averages $\overline{\tau_{fiscal}} = -733.2$~bp across the 15 dual-listed names, against $\bar\Delta_i = -640.4$~bp; the finite-horizon form averages $-698.9$~bp. The corresponding mean residuals $\eta_i = \Delta_i - \tau_{fiscal}(i)$ are $+92.8$~bp (linear, across-issuer s.e.\ 4.9~bp) and $+58.5$~bp (exact, s.e.\ 4.9~bp). The exact correction narrows but does not eliminate the residual.

\Cref{tab:wedge_decomp} reports the decomposition $\Delta_i = \tau_{fiscal}(i) + \eta_i$ at the issuer level.

\begin{table}[ht]
\centering
\begin{tabular}{l rrrr}
\toprule
Issuer & $\Delta_i$ & $\tau_{fiscal}(i)$ & Residual & Avg.\ dur.\ (yr) \\
 & (bp) & (bp) & (bp) & \\
\midrule
Eletrobras & -630 & -731 & +101 & 3.7 \\
Eneva & -660 & -749 & +89 & 4.6 \\
Energisa Mato Grosso - Distri... & -626 & -724 & +98 & 3.4 \\
Cemig Distribuicao & -605 & -718 & +114 & 3.1 \\
CTEEP & -664 & -741 & +77 & 4.4 \\
Sabesp & -647 & -732 & +84 & 3.8 \\
Energisa & -638 & -730 & +92 & 3.8 \\
BRK AMBIENTAL - PARTICIPACOES... & -694 & -773 & +79 & 4.9 \\
CCR & -644 & -734 & +90 & 3.7 \\
Algar Telecom & -625 & -737 & +112 & 3.9 \\
Coelce & -608 & -696 & +89 & 2.4 \\
Taesa - Transmissora Alianca ... & -663 & -738 & +75 & 4.3 \\
Coelba & -665 & -741 & +76 & 4.2 \\
Copel Geracao e Transmissao & -642 & -735 & +93 & 4.1 \\
AES Tiete Energia & -597 & -719 & +123 & 3.2 \\
\midrule
\textbf{Mean} & \textbf{-640.4} & \textbf{-733.2} & \textbf{+92.8} & 3.8 \\
Standard error & (6.8) & (4.3) & (3.7) & \\
\bottomrule
\end{tabular}

\caption{Per-issuer decomposition of the within-issuer wedge into a pure retail-tax component and a residual. $\tau_{fiscal}$ computed from \eqref{eq:tau_fiscal_delta} at each issuer's average duration; $\eta = \Delta_i - \tau_{fiscal}$. All values in basis points. Standard errors on the bottom row are across-issuer standard errors of the mean.}
\label{tab:wedge_decomp}
\end{table}

The mean tax-parity prediction across the 15 issuers is $-733$~bp; the mean observed wedge is $-640$~bp; the mean residual $\eta$ is $+93$~bp, with cross-sectional standard error of $3.7$~bp. In $\Delta$ basis the retail-tax channel accounts for the sign of the wedge and most of its level, with a small positive residual.

\paragraph{Raw-spread companion.}
The $\Delta$-basis prediction $-733$~bp decomposes mechanically into a deterministic breakeven removal $-(f^N - f^R) \approx -534$~bp at the sample 3-year mean and a behavioral tax-wedge component $-\tau_{PF}(f^N + s^{CDI}) \approx -194$~bp. Only the second component is a tax mechanism; the first is the unit conversion that defines $\Delta$. In raw-spread units, the indifference condition predicts $s^{IPCA}_i - s^{CDI}_i \approx -194$~bp at the sample 3-year means, against an observed aggregate of $-158$~bp and a within-issuer mean of approximately $-106$~bp at the 15 dual-listed names' average duration. The tax mechanism therefore over-predicts the behavioral spread gap by 50 to 80\%, with the residual carrying the opposite sign of the $\Delta$-basis residual. The two readings are arithmetically identical (the wedge $\Delta$ is the raw spread difference minus a deterministic breakeven), but they differ in what the reader sees as the size of the tax channel relative to the size of the unexplained part: the retail-tax channel is the dominant mechanism behind the wedge, but it is not the sole one, and the size of the residual under the behavioral reading is consistent with non-trivial contributions from institutional pricing and segment-specific structural and liquidity factors discussed below.

The positive $\Delta$-basis residual fits three non-exclusive channels we cannot separate cleanly at $N = 15$. First, institutional investors are partly marginal on the CDI$^+$ side. Pension portfolios, credit funds, and tax-privileged investment vehicles hold a non-trivial share of CDI$^+$ debenture flow, and their pricing compresses the pre-tax yield differential that pure retail parity would require. Second, IPCA$^+$ \emph{Lei}~12.431 infrastructure debentures carry contractual features that do not generally appear in same-issuer general-purpose CDI$^+$ paper: use-of-proceeds restrictions, project ring-fencing, and covenant packages designed around long-tenor infrastructure cash flows. Sharing a legal entity does not guarantee identical claim structures, and these structural differences absorb part of the residual that a pure tax-clientele model would otherwise have to attribute to marginal-investor segmentation. Third, IPCA$^+$ infrastructure debentures have lower secondary-market turnover than the CDI$^+$ core pool, so a small illiquidity discount on the IPCA$^+$ side pushes $s^{IPCA}_i$ above the post-tax-indifference level.

The cross-sectional regression
\begin{equation}
  \Delta_i = \alpha + \beta\, \tau_{fiscal}(i) + \gamma\, (\text{dur}_i^{infra} - \text{dur}_i^{cdi}) + \varepsilon_i
\end{equation}
appears in \Cref{tab:wedge_regression}. The univariate $\beta$ is $1.39$ with standard error $0.21$. Adding the duration differential gives $\beta = 1.30$ with standard error $0.30$ and a duration coefficient that is not significant at any conventional level. $H_0: \beta = 1$ is not rejected at the 5\% level under the two-regressor specification ($p = 0.34$). $R^2$ is $0.76$ or higher in both specifications. The retail-tax channel tracks the cross-sectional dispersion of $\Delta_i$, not just the mean. With $N = 15$ and $\tau_{fiscal}(i)$ ranging across only about 50~bp (the breakeven term is shared across issuers), the regression is a sensitivity check rather than a primary identification; the mean decomposition is the headline.

\begin{table}[ht]
\centering
\begin{tabular}{l rr}
\toprule
 & M1 & M2 \\
\midrule
Intercept $\alpha$ & $374.8$ & $323.5$ \\
  & $(157.1)$ & $(206.6)$ \\
$\tau_{fiscal}$ ($\beta$) & $\mathbf{1.385}$ & $\mathbf{1.302}$ \\
  & $(0.214)$ & $(0.302)$ \\
Duration differential $(\gamma)$ & --- & $-3.26$ \\
  &  & $(8.10)$ \\
\midrule
$R^2$         & $0.763$ & $0.766$ \\
Adj.\ $R^2$   & $0.744$ & $0.727$ \\
$p$-value: $H_0:\beta=1$ & $0.096$ & $0.337$ \\
$N$            & $15$ & $15$ \\
\bottomrule
\end{tabular}

\caption{Cross-sectional regression of the within-issuer wedge on the theoretical retail-tax prediction, optionally controlling for the infra-minus-CDI duration differential per issuer. Standard errors in parentheses. The hypothesis $H_0: \beta = 1$ is the test that the retail-tax mechanism fully explains the cross-sectional variation in $\Delta_i$, against the alternative that either over- or under-sensitivity reflects additional channels.}
\label{tab:wedge_regression}
\end{table}

Two caveats. The 15\% rate applies to debentures held longer than 720 days. All securities in our sample sit comfortably above that threshold, so the rate is uniform across issuers. Were we to extend the test to short-tenor primary issuance, the rate schedule would matter. The decomposition also treats retail as marginal on both sides. If institutional investors set CDI$^+$ prices at the margin, $\tau_{fiscal}$ overstates the pure tax channel and part of the residual $\eta_i$ should be reassigned to that channel rather than to a clientele or liquidity premium. Sharper identification would require ownership data from CVM/ANBIMA holdings surveys to estimate the marginal-investor share of each market segment.


\section{Robustness}\label{sec:robustness}

This section reports the two robustness checks that bear directly on the economic interpretation of the within-issuer wedge: stability of the wedge across the 2021--2023 BCB tightening and 2024--2026 easing regimes, and sensitivity of the wedge to duration-matching and bucketing conventions.

\subsection{Sub-sample split: tightening versus easing regimes}\label{subsec:robust_covid}

The 2021-01 onward sample does not include the March to December 2020 COVID episode (Idex-CDI data begins 2021-01-04). The economically meaningful split is between the 2021-01 to 2023-12 BCB tightening cycle, during which the Selic rate rose from 2\% to 13.75\%, and the 2024-01 to 2026-02 easing phase. \Cref{tab:subsample_wedge} recomputes the within-issuer wedge $\Delta_i$ on each sub-sample separately, holding the universe of 15 dual-listed issuers fixed.

\begin{table}[ht]
\centering
\begin{tabular}{lrrrrr}
\toprule
Regime & Mean $\Delta_i$ (bp) & SD across $i$ (bp) & SE (bp) & $N$ issuers & Avg obs/issuer \\
\midrule
Tightening (2021-01 to 2023-12) & -651.2 & 37.6 & 9.7 & 15 & 573 \\
Easing (2024-01 to 2026-02) & -629.4 & 26.2 & 6.8 & 15 & 508 \\
Full & -640.4 & 26.4 & 6.8 & 15 & 1081 \\
\bottomrule
\end{tabular}

\caption{Dual-listed wedge $\Delta_i = \bar s^{IPCA}_i - \bar s^{CDI}_i - (f^N - f^R)$ by monetary-policy regime. The same 15 dual-listed issuers appear in all three rows, with date coverage restricted to the relevant window.}
\label{tab:subsample_wedge}
\end{table}

The mean wedge is $-651$~bp in the tightening regime and $-629$~bp in the easing regime, a difference of 22~bp with combined standard error of about 12~bp. The regime-to-regime difference is marginally distinguishable from zero ($|z| \approx 1.85$) but small relative to the wedge level itself. The cross-issuer standard deviation of $\Delta_i$ is larger in the tightening phase (38~bp) than in the easing phase (26~bp), reflecting greater idiosyncratic spread volatility during the hiking cycle.

The retail-tax-parity prediction of \Cref{subsec:decomposition}, recomputed on each sub-sample, gives $-725$~bp for the tightening regime and $-745$~bp for the easing regime. The pure-tax prediction therefore moves in the direction opposite to the observed wedge: higher nominal yields in the easing phase imply a larger tax benefit and therefore a deeper predicted wedge, while the observed wedge is slightly shallower in easing than in tightening. The residual $\eta = \Delta - \tau_{fiscal}$ is larger in the easing phase ($+116$~bp) than in the tightening phase ($+74$~bp). A natural reading is that retail investors became less dominant on the CDI$^+$ side as credit-fund inflows picked up through 2024 to 2025, or that the illiquidity discount on infrastructure debentures widened around the mid-2024 stress episode.

\subsection{Duration matching and bucketing}\label{subsec:robust_matching}

The headline $\Delta_i = -640$~bp is computed at the issuer level from constituent-level spread data, matching each issuer's average duration to the nearest of the four breakeven pillars $\tau \in \{1, 2, 3, 5\}$ years. Two sensitivities matter: whether the matching rule itself biases the wedge, and whether the duration half-widths underlying the aggregate-index curves contaminate the within-issuer test.

\Cref{tab:duration_matching} reruns the dual-listed test under two alternative matching schemes: linear interpolation of the breakeven between pillars at the issuer's mid-duration, and separate interpolation at each side's own duration followed by averaging of the two breakevens. The mean $\Delta_i$ changes by less than 5~bp under either alternative, and the average across-scheme range within a single issuer is 11~bp. The within-issuer wedge is not an artefact of the matching rule.

\begin{table}[ht]
\centering
\begin{tabular}{lrr}
\toprule
Matching scheme & Mean $\Delta_i$ (bp) & SE (bp) \\
\midrule
Nearest pillar at mid-duration (baseline) & -640.4 & 6.8 \\
Linear interpolation of BE at mid-duration & -641.6 & 6.3 \\
Split-side interpolation (avg of two BEs) & -635.9 & 4.9 \\
\midrule
\multicolumn{3}{l}{Mean across-scheme range per issuer: 10.9~bp} \\
\bottomrule
\end{tabular}

\caption{Duration-matching sensitivity of the dual-listed wedge. All schemes operate on constituent-level data and differ only in how the breakeven forward is matched to each issuer's average duration.}
\label{tab:duration_matching}
\end{table}

The duration half-widths that define the aggregate CDI$^+$ and IPCA$^+$ index curves at pillar maturities (baseline $\pm 0.5$~yr for $\tau \in \{1, 2, 3\}$ and $\pm 1.0$~yr for $\tau = 5$) play no role in the within-issuer test, which uses constituent-level spread and duration data directly. They do affect the index-level aggregate gap $\Delta_{agg}$. Recomputed at $\tau = 3$~years under tight ($\pm 0.25$~yr) and wide ($\pm 0.75$~yr) half-width specifications, the mean aggregate gap varies by less than 2~bp from the baseline.

\section{Discussion}\label{sec:discussion}

The data reject the shared-credit-economy null at every dual-listed issuer. A within-issuer wedge of $-640$~bp at the 3-year tenor, with cross-sectional standard deviation of 26~bp across 15 issuers, is too large for measurement noise and too uniform across issuers to be a name-specific credit effect. The mechanism is tax segmentation under \emph{Lei}~12.431: individual investors face zero income tax on coupons and amortisation of qualifying IPCA-linked infrastructure debentures while bearing the 15\% withholding rate of \emph{Lei}~11.033 on comparable CDI-linked corporate paper. The exemption sustains a retail clientele whose reservation yield on the IPCA-linked side sits permanently below the institutional reservation yield on the CDI-linked side.

The Brazilian wedge sits in the same structural family as the U.S.\ tax-exempt municipal market. Both rest on a tax-induced retail clientele that institutional investors cannot arbitrage because the exemption is investor-specific. Identification here is sharper than in the muni-versus-taxable comparison for two reasons. The wedge is computed within issuer rather than across heterogeneous universes, so rating, sector, and duration composition drop out. And the triangle identity removes nominal-real rate-regime variation before the comparison, so the residual is a clean clientele-segmentation object. The retail-tax benchmark reproduces the sign and bulk of the wedge under the breakeven-adjusted reading (predicted $-733$~bp against observed $-640$~bp, residual $+93$~bp), and over-predicts the raw spread differential under the behavioural reading. The two readings of the decomposition are arithmetically identical; together they bracket the tax channel as the dominant mechanism while leaving room for a residual that institutional pricing on the CDI side, contractual asymmetries, and liquidity differences can plausibly explain.

The Gaussian HJM specification is adequate for the rate blocks and useful for credit spreads, but spread innovations are heavier-tailed than the specification supports, and the mid-2024 credit-fund redemption episode visibly stresses the spread block. The main empirical limitation is therefore distributional, not the no-arbitrage restriction: a stochastic-volatility overlay or a jump component on the spread block would better fit stress episodes without changing the three-currency triangle.

\paragraph{Relation to prior Brazilian estimates.}
\citet{pereira2021nota} estimate the fiscal pass-through of \emph{Lei}~12.431 from the issuer's side, measuring the fraction of foregone federal tax revenue that transfers to project funding costs. \citet{brand2022debentures} tests no-arbitrage appropriation of the same fiscal benefit between incentivised and non-incentivised debentures, finding that the secondary-market spread conforms to the no-arbitrage prediction. Both studies condition on the issuer or instrument. The within-issuer triangle test reads the wedge from the investor's pricing side under the no-arbitrage restriction of the three-currency framework: 640~bp at the 3-year tenor with 26~bp cross-sectional standard deviation across 15 issuers. The investor-side estimate and the issuer-side fiscal-efficiency estimate are complementary lenses on the same regulatory channel.

\paragraph{Three limitations.}
First, the marginal-investor identity is inferred from the size of the residual rather than estimated structurally. Separating the institutional, contractual, and liquidity contributions to the residual would require either holdings data on individual debentures (CVM aggregate ownership disclosures, for example) or a structural model of clientele formation. Second, the test is confined to the 15 dual-listed issuers with at least 800 days of joint observation, and selection into both index families is non-random; the population wedge estimate is conditional on this selection. Third, the Gaussian HJM specification fits rates well but understates the tail behaviour of credit-spread innovations, as the mid-2024 credit-fund redemption episode illustrates. A stochastic-volatility overlay or a jump component on the spread block would address the distributional limitation without altering the HJM drift logic; a deeper Idex-Infra panel would sharpen Model~B factor identification.

\paragraph{Future work.}
The natural next step is to extend the dual-listed test as additional infrastructure debentures meet the 800-day joint-observation threshold, exploiting ownership disclosures to bound the marginal-investor mix on each segment directly. A parallel calibration to the Idex-CDI Low-Rated subuniverse would test whether the within-issuer wedge logic extends to a rating-quality dimension. A future regulatory revision to \emph{Lei}~12.431 would create a natural-experiment opportunity: the framework handles this by calibrating before and after the policy break as separate credit-economy parameter sets.

\paragraph{Conclusion.}
The evidence supports a two-credit-economy reading of Brazilian corporate debenture markets. CDI$^+$ and IPCA$^+$ spreads cannot be reconciled by a single credit forward curve once breakeven inflation is removed. The remaining within-issuer wedge is large, stable, and quantitatively aligned with the retail tax exemption on infrastructure debentures. The three-currency HJM construction provides both the no-arbitrage diagnostic that makes the within-issuer comparison precise and the calibration architecture that prices each segment in its own credit economy.


\appendix

\section{HJM dynamics and change-of-measure derivations}\label{app:hjm}

This appendix develops the formal no-arbitrage conditions, the change-of-measure structure connecting the three pricing measures, and the resulting forward-rate dynamics under the nominal measure. The end product is the forward-rate system that the calibration of Appendix~\ref{app:calibration} estimates and the simulation in Section~\ref{sec:simulation} integrates.

\subsection{Overview of the system under \texorpdfstring{$\QN$}{QN}}\label{subsec:hjm_overview}

Under the nominal pricing measure $\QN$, no-arbitrage between the three economies requires that the discounted values of the real and credit numeraires be martingales. With deterministic volatility loadings and the inner-product convention of \Cref{rem:inner}, the forward-rate system under $\QN$ takes the form
\begin{equation}\label{eq:system}
\begin{aligned}
df^N(t,T) &= \sigvec{N}(t,T)\cdot\Avec{N}(t,T)\,\dd t
          + \sigvec{N}(t,T)\cdot\dd W^N_t,\\
df^R(t,T) &= \big[\sigvec{R}(t,T)\cdot\Avec{R}(t,T)
          - \sigvec{R}(t,T)\cdot\sigI(t)\big]\dd t
          + \sigvec{R}(t,T)\cdot\dd W^N_t,\\
df^C(t,T) &= \big[\sigvec{C}(t,T)\cdot\Avec{C}(t,T)
          - \sigvec{C}(t,T)\cdot\sigJ(t)\big]\dd t
          + \sigvec{C}(t,T)\cdot\dd W^N_t,
\end{aligned}
\end{equation}
where $\Avec{j}(t,T)=\int_t^T \sigvec{j}(t,u)\dd u$. The real-curve drift correction is the Jarrow-Yildirim inflation term; the credit-curve correction is its credit-FX analogue. The rest of the appendix derives \eqref{eq:system} formally and provides the bond-price dynamics consistent with it.

\subsection{Martingale conditions and exchange-rate dynamics}
\label{sec:noarb_full}

\subsubsection{Martingale conditions}
\label{subsec:martingale}

We designate the nominal economy as ``domestic.'' Under $\QN$, absence of arbitrage requires that the discounted nominal values of foreign-economy numeraires be martingales.

\begin{proposition}[Martingale conditions]
\label{prop:martingale}
The following processes are martingales under the measures indicated:
\begin{align}
  \frac{I_t \, B^R_t}{B^N_t} &\quad \text{is a $\QN$-martingale}, \label{eq:mart_IR} \\[6pt]
  \frac{J_t \, B^C_t}{B^N_t} &\quad \text{is a $\QN$-martingale}, \label{eq:mart_JC} \\[6pt]
  \frac{K_t \, B^C_t}{B^R_t} &\quad \text{is a $\QR$-martingale}. \label{eq:mart_KC}
\end{align}
\end{proposition}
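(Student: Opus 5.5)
The plan is to verify each martingale statement directly from the objects defined in \Cref{subsec:fx,subsec:girsanov}. The first condition gets an Itô computation, the second follows from the definition of $J_t$, and the third is reduced to the first by a Bayes-rule argument. Throughout, the volatility loadings are deterministic and bounded on compact time intervals, so Novikov's condition holds for every stochastic exponential that appears.

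For \eqref{eq:mart_IR}, set $Z_t = I_t B^R_t/B^N_t$. Since $B^R$ and $B^N$ have finite variation, with $dB^R_t = r^R(t)B^R_t\,dt$ and $dB^N_t = r^N(t)B^N_t\,dt$, Itô's product rule together with \eqref{eq:I_dyn} gives
\[
\frac{dZ_t}{Z_t} = \bigl[r^N(t)-r^R(t)\bigr]dt + r^R(t)\,dt - r^N(t)\,dt + \sigvec{I}(t)^{\top}dW^N(t) = \sigvec{I}(t)^{\top}dW^N(t).
\]
Hence $Z_t = Z_0\,\mathcal{E}\bigl(\int_0^t \sigvec{I}^{\top}dW^N\bigr)$, which is a true $\QN$-martingale by Novikov. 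With $Z_0 = 1$, $Z_t$ coincides with the density process \eqref{eq:RN_NR}, and I will record this identification for the third step. For \eqref{eq:mart_JC}, the definition $J_t = B^N_t/B^C_t$ makes $J_tB^C_t/B^N_t \equiv 1$, which is trivially a martingale.

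Here lies the point I expect to be the main obstacle. Read literally, this definition forces $J$ to have finite variation. That is inconsistent with the nonzero $\sigvec{J}$ in \eqref{eq:J_dyn} unless the credit account is understood as a stochastic (defaultable or tradable-asset) numeraire. To cover that reading as well, I will also treat $J$ as specified primitively by \eqref{eq:J_dyn}. The same Itô computation as above then gives
\[
\frac{d(JB^C/B^N)}{JB^C/B^N} = \bigl[-s^C(t,t) + r^C(t) - r^N(t)\bigr]dt + \sigvec{J}(t)^{\top}dW^N(t) = \sigvec{J}(t)^{\top}dW^N(t),
\]
because $r^C = r^N + s^C(t,t)$. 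So the drift cancels by the definition of the credit short rate, and the process is again a stochastic exponential. This is exactly the structure that makes the $\QC$ density in \Cref{subsec:girsanov} the analogue of \eqref{eq:RN_NR}.

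For \eqref{eq:mart_KC}, $K$ has not been defined earlier. I will take it to be the real--credit cross rate implied by triangular consistency of the exchange rates, $K_t = J_t/I_t$, which is the value in real units of the credit-versus-nominal position. Then
\[
\frac{K_tB^C_t}{B^R_t} = \frac{J_tB^C_t}{B^N_t}\cdot\frac{B^N_t}{I_tB^R_t} = \frac{M_t}{Z_t},
\]
where $M_t = J_tB^C_t/B^N_t$ is the $\QN$-martingale from the second step. Abstract Bayes with $d\QR/d\QN|_{\mathcal{F}_t} = Z_t$ gives, for $s\le t$,
\[
\mathbb{E}^{\QR}\!\left[\frac{M_t}{Z_t}\,\Big|\,\mathcal{F}_s\right] = \frac{\mathbb{E}^{\QN}\!\left[M_t\mid\mathcal{F}_s\right]}{Z_s} = \frac{M_s}{Z_s}.
\]
This is the required $\QR$-martingale property. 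As a consistency check, Itô's formula shows that $M/Z$ has diffusion coefficient $\sigvec{J}-\sigvec{I}$ against $dW^R$, which matches the composite Girsanov shift stated in \Cref{subsec:girsanov}.
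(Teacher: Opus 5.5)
Your proof is correct. For \eqref{eq:mart_KC} it is the paper's argument: you factor $K_tB^C_t/B^R_t$ as $(J_tB^C_t/B^N_t)\cdot\bigl(B^N_t/(I_tB^R_t)\bigr)$ and apply the abstract Bayes formula with density $I_tB^R_t/B^N_t$. You write out the conditional-expectation step that the paper delegates to Shreve.

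The difference is in \eqref{eq:mart_IR}--\eqref{eq:mart_JC}. The paper does not derive these. It posits them as the standard two-currency no-arbitrage requirements, citing Musiela--Rutkowski. In the next proposition it then runs the logic the other way, using them to pin down the drifts in \eqref{eq:I_dyn}--\eqref{eq:J_dyn}. You instead take those drifts as given and show that the deflated processes are stochastic exponentials. That is legitimate, since Section~\ref{subsec:fx} asserts the equivalence. In the paper's ordering, though, your computation is a consistency check of a modelling postulate rather than an independent derivation.

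Your route buys two things.
\begin{itemize}
\item The Novikov step supplies the converse direction that the paper never states. A vanishing drift alone gives only a local martingale, and that is all the paper's FX-dynamics computation contains.
\item You expose an inconsistency the paper passes over. The literal definition $J_t=B^N_t/B^C_t$ has finite variation, so it forces $\sigJ\equiv 0$, makes \eqref{eq:mart_JC} vacuous, and contradicts the nonzero $\sigvec{J}$ in \eqref{eq:J_dyn}.
\end{itemize}
Treating $J$ as primitive via \eqref{eq:J_dyn} is the reading under which the rest of the appendix makes sense: the cross-rate dynamics, and the D3 test, which reports nonzero Monte Carlo dispersion for $J_TB^C_T/B^N_T$. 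Your Bayes step for \eqref{eq:mart_KC} goes through under either reading.
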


\begin{proof}
Conditions \eqref{eq:mart_IR} and \eqref{eq:mart_JC} are the standard two-currency no-arbitrage requirements applied to the nominal-real and nominal-credit pairs, respectively \citep[see][Chapter~14]{musiela2005martingale}. For \eqref{eq:mart_KC}: define the Radon-Nikodym derivative connecting $\QR$ to $\QN$ as
\[
  \left.\frac{d\QR}{d\QN}\right|_{\Ft} = \frac{I_t B^R_t / B^N_t}{I_0 B^R_0 / B^N_0}.
\]
Under $\QR$, the process $K_t B^C_t / B^R_t$ equals $(J_t B^C_t / B^N_t) \cdot (B^N_t / I_t B^R_t)$. Since $J_t B^C_t / B^N_t$ is a $\QN$-martingale and $I_t B^R_t / B^N_t$ defines the measure change, the standard change-of-measure formula for conditional expectations \citep[see][Sections~1.6 and 5.2]{shreve2004stochastic} implies that $K_t B^C_t / B^R_t$ is a $\QR$-martingale. Thus \eqref{eq:mart_KC} is not independent: it follows from \eqref{eq:mart_IR} to \eqref{eq:mart_JC} and the Girsanov change of measure. We state it explicitly to close the triangle.
\end{proof}

\subsubsection{Exchange-rate dynamics under \texorpdfstring{$\QN$}{QN}}
\label{subsec:fx_dyn}

We specify lognormal dynamics for the two primary exchange rates under $\QN$.

\begin{proposition}[FX dynamics]
\label{prop:fx_dynamics}
Under $\QN$, the martingale conditions \eqref{eq:mart_IR} to \eqref{eq:mart_JC} uniquely determine the drift coefficients:
\begin{align}
  \frac{dI_t}{I_t} &= \big(r^N_t - r^R_t\big)\dd t + \sigI(t)\cdot dW^N_t, \label{eq:dI} \\[6pt]
  \frac{dJ_t}{J_t} &= \big(r^N_t - r^C_t\big)\dd t + \sigJ(t)\cdot dW^N_t, \label{eq:dJ}
\end{align}
where $\sigI(t),\,\sigJ(t)\in\mathbb{R}^m$ are deterministic volatility vectors.
\end{proposition}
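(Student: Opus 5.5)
The plan is to posit a general Itô specification for each exchange rate under $\QN$, namely $dI_t/I_t = \mu^I_t\,\dd t + \sigI(t)\cdot dW^N_t$ and $dJ_t/J_t = \mu^J_t\,\dd t + \sigJ(t)\cdot dW^N_t$ with adapted drifts $\mu^I,\mu^J$ still to be found and deterministic volatilities. These are then substituted into the martingale conditions \eqref{eq:mart_IR} and \eqref{eq:mart_JC} of Proposition~\ref{prop:martingale}. The bank accounts $B^N,B^R,B^C$ are of finite variation, with $dB^j_t = r^j_t B^j_t\,\dd t$, so the covariation terms between them and $I$ or $J$ vanish. The argument therefore reduces to a one-line Itô product rule for each pair.

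For the inflation rate, set $X_t = I_t B^R_t/B^N_t$. By the product rule,
\[
\frac{dX_t}{X_t} = \bigl(\mu^I_t + r^R_t - r^N_t\bigr)\dd t + \sigI(t)\cdot dW^N_t .
\]
The same computation for $Y_t = J_t B^C_t/B^N_t$ gives
\[
\frac{dY_t}{Y_t} = \bigl(\mu^J_t + r^C_t - r^N_t\bigr)\dd t + \sigJ(t)\cdot dW^N_t .
\]
Sufficiency is the easy direction. With $\mu^I = r^N - r^R$ and $\mu^J = r^N - r^C$, the processes $X$ and $Y$ are stochastic exponentials of $\int \sigI\cdot dW^N$ and $\int\sigJ\cdot dW^N$. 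Because $\sigI$ and $\sigJ$ are deterministic, and assuming they are square-integrable on bounded horizons, Novikov's condition holds and $X,Y$ are true $\QN$-martingales. For $J$ the resulting drift is $r^N - r^C = -s^C(t,t)$, which agrees with \eqref{eq:J_dyn}.

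Uniqueness is the step that needs care. Suppose $X$ is a $\QN$-martingale. The process $X_t - X_0 - \int_0^t X_u\,\sigI(u)\cdot dW^N_u$ is then a continuous local martingale. It also equals $\int_0^t X_u(\mu^I_u + r^R_u - r^N_u)\,\dd u$, so it has finite variation. A continuous local martingale of finite variation starting at zero is identically zero. Hence $X_u(\mu^I_u + r^R_u - r^N_u) = 0$ for $\dd t\otimes d\QN$-almost every $(u,\omega)$. Since $X>0$ (it is a product of positive processes), this forces $\mu^I = r^N - r^R$, and the same argument gives $\mu^J$. This is uniqueness of the semimartingale decomposition, and it is where the content of "uniquely determine" lies. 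The only technical point is making sure the stochastic integral is well defined, which holds under the standing integrability assumptions on the rates and the deterministic volatilities.

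A remark at the end will note that \eqref{eq:mart_KC} places no further constraint. It is implied by the other two through the measure change, as shown in Proposition~\ref{prop:martingale}. The implied cross-rate $K = J/I$ therefore inherits its drift by Itô's quotient rule rather than being specified independently.
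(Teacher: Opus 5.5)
Your proof is correct and takes the paper's route: posit $dI_t/I_t=\mu_I\,\dd t+\sigI\cdot dW^N_t$, apply It\^o's product rule to $I_tB^R_t/B^N_t$ (and likewise to $J_tB^C_t/B^N_t$), and set the resulting drift to zero. Your Novikov argument for sufficiency and your use of the fact that a continuous finite-variation local martingale is zero make rigorous the step the paper states in one line (``the drift must vanish''), without changing the argument.
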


\begin{proof}
We prove \eqref{eq:dI}; the argument for \eqref{eq:dJ} is identical with $(R,I)$ replaced by $(C,J)$.

Write $dI_t/I_t = \mu_I(t)\dd t + \sigI(t)\cdot dW^N_t$ for some drift $\mu_I(t)$ to be determined. Consider the process $M_t = I_t B^R_t / B^N_t$. Since $dB^j_t = r^j_t B^j_t \dd t$, an application of It\^o's product rule gives:
\begin{align*}
  \frac{dM_t}{M_t}
  &= \frac{dI_t}{I_t} + \frac{dB^R_t}{B^R_t} - \frac{dB^N_t}{B^N_t} \\
  &= \big[\mu_I(t) + r^R_t - r^N_t\big]\dd t + \sigI(t)\cdot dW^N_t.
\end{align*}
For $M_t$ to be a $\QN$-martingale, the drift must vanish:
\[
  \mu_I(t) + r^R_t - r^N_t = 0 \quad\implies\quad \mu_I(t) = r^N_t - r^R_t. \qedhere
\]
\end{proof}

\begin{proposition}[Cross-rate dynamics with It\^o correction]
\label{prop:cross_rate}
Since $K_t = J_t / I_t$, It\^o's formula for the ratio of two semimartingales yields:
\begin{equation}\label{eq:dK}
  \frac{dK_t}{K_t} = \Big(r^R_t - r^C_t + \big|\sigI(t)\big|^2 - \sigI(t)\cdot\sigJ(t)\Big)\dd t + \big(\sigJ(t) - \sigI(t)\big)\cdot dW^N_t.
\end{equation}
\end{proposition}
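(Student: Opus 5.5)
The plan is to apply It\^o's formula to $K_t = J_t\,I_t^{-1}$ in two stages, first computing the dynamics of $I_t^{-1}$ and then using the product rule with $J_t$. The only inputs are the $\QN$-dynamics \eqref{eq:dI} and \eqref{eq:dJ} from \Cref{prop:fx_dynamics}. Throughout I will use the inner-product convention of \Cref{rem:inner}, so every quadratic-covariation term $d\langle x\cdot W^N, y\cdot W^N\rangle_t$ is written $x\cdot y\,\dd t$, and $|\sigI|^2$ means $\sigI\cdot\sigI$.

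\emph{Step 1 (reciprocal).} Apply It\^o's formula to $g(x)=1/x$, which has $g'(x)=-x^{-2}$ and $g''(x)=2x^{-3}$. This gives
\[
\frac{d(I_t^{-1})}{I_t^{-1}} = -\frac{dI_t}{I_t} + \Big(\frac{dI_t}{I_t}\Big)^2 = \big(-(r^N_t-r^R_t) + |\sigI(t)|^2\big)\dd t - \sigI(t)\cdot dW^N_t .
\]

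\emph{Step 2 (product).} Apply the product rule to $K_t=J_t\cdot I_t^{-1}$:
\[
\frac{dK_t}{K_t} = \frac{dJ_t}{J_t} + \frac{d(I_t^{-1})}{I_t^{-1}} + \frac{dJ_t}{J_t}\cdot\frac{d(I_t^{-1})}{I_t^{-1}} .
\]
The cross-variation term equals $-\sigJ(t)\cdot\sigI(t)\,\dd t$.

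\emph{Step 3 (collect terms).} The diffusion coefficient is $\sigJ-\sigI$. The drift is
\[
(r^N-r^C) - (r^N-r^R) + |\sigI|^2 - \sigI\cdot\sigJ .
\]
The $r^N$ terms cancel, leaving $r^R-r^C+|\sigI|^2-\sigI\cdot\sigJ$, which is \eqref{eq:dK}.

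The argument has no real obstacle; the points needing care are sign and convention bookkeeping. First, the $+|\sigI|^2$ term comes from the second-order term of the reciprocal, and it is easy to drop. Second, both cross terms must be read with the $\bm{\rho}$-weighted product, since $\sigI$ and $\sigJ$ may load on different blocks. As a consistency check, I would verify that the result is compatible with \eqref{eq:mart_KC}. Under $\QR$, the Brownian motion is $W^R = W^N - \int\sigI\,\dd u$, so the diffusion term $(\sigJ-\sigI)\cdot dW^N$ contributes an extra drift $(\sigJ-\sigI)\cdot\sigI\,\dd t$ when rewritten in terms of $dW^R$. This cancels the $|\sigI|^2-\sigI\cdot\sigJ$ correction exactly and leaves the $\QR$-drift $r^R-r^C$. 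That is precisely what makes $K_tB^C_t/B^R_t$ driftless under $\QR$.
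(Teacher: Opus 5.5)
Your proof is correct and is essentially the paper's computation: the paper applies the bivariate It\^o formula to $\varphi(x,y)=x/y$ in one step, while you split it into the reciprocal $I_t^{-1}$ followed by the product rule, which produces the same second-order terms $+|\sigI|^2\,\dd t$ and $-\sigI\cdot\sigJ\,\dd t$. Your $\QR$ consistency check also matches the remark the paper places right after the proposition; note that the paper writes the Girsanov shift there as $\bm{\rho}\,\sigI$, and it is this $\bm{\rho}$ that makes the extra drift equal to the $\bm{\rho}$-weighted $(\sigJ-\sigI)\cdot\sigI$, which your unweighted $W^R = W^N - \int\sigI\,\dd u$ leaves implicit.
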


\begin{proof}
Define $\varphi(x,y) = x/y$ so that $K_t = \varphi(J_t, I_t)$. By It\^o's formula applied to $\varphi$:
\begin{align*}
  d\varphi &= \frac{1}{I_t}\,dJ_t - \frac{J_t}{I_t^2}\,dI_t + \frac{J_t}{I_t^3}\,d\langle I\rangle_t - \frac{1}{I_t^2}\,d\langle J, I\rangle_t.
\end{align*}
Dividing through by $K_t = J_t/I_t$ and substituting:
\begin{itemize}[leftmargin=2em]
  \item $d\langle I\rangle_t = I_t^2 |\sigI(t)|^2\dd t$,
  \item $d\langle J, I\rangle_t = I_t J_t \,\sigI(t)\cdot\sigJ(t)\dd t$,
\end{itemize}
we obtain:
\begin{align*}
  \frac{dK_t}{K_t} &= \frac{dJ_t}{J_t} - \frac{dI_t}{I_t} + |\sigI(t)|^2\dd t - \sigI(t)\cdot\sigJ(t)\dd t \\[4pt]
  &= \big(r^N_t - r^C_t\big)\dd t + \sigJ(t)\cdot dW^N_t \\
  &\quad\; - \big(r^N_t - r^R_t\big)\dd t - \sigI(t)\cdot dW^N_t \\
  &\quad\; + |\sigI(t)|^2\dd t - \sigI(t)\cdot\sigJ(t)\dd t \\[4pt]
  &= \Big(r^R_t - r^C_t + |\sigI(t)|^2 - \sigI(t)\cdot\sigJ(t)\Big)\dd t + \big(\sigJ(t) - \sigI(t)\big)\cdot dW^N_t.
\end{align*}
The drift terms $|\sigI|^2 - \sigI\cdot\sigJ$ constitute the \emph{It\^o convexity correction} (the Siegel paradox correction) that necessarily arises in the ratio of two lognormal processes.
\end{proof}

\begin{remark}[Consistency check under $\QR$]
Under $\QR$, the Girsanov drift adjustment is $dW^R_t = dW^N_t - \bm{\rho}\,\sigI(t)\dd t$ (\Cref{prop:girsanov}). Substituting $dW^N = dW^R + \bm{\rho}\,\sigI\dd t$ into the diffusion of \eqref{eq:dK}:
\begin{align*}
  \frac{dK_t}{K_t} &= \Big(r^R_t - r^C_t + |\sigI|^2 - \sigI\cdot\sigJ\Big)\dd t + (\sigJ - \sigI)\cdot(dW^R_t + \bm{\rho}\,\sigI\dd t) \\
  &= \Big(r^R_t - r^C_t + |\sigI|^2 - \sigI\cdot\sigJ + (\sigJ - \sigI)\cdot\sigI\Big)\dd t + (\sigJ - \sigI)\cdot dW^R_t,
\end{align*}
where the substitution uses $(\sigJ-\sigI)^\top\bm{\rho}\,\sigI = (\sigJ-\sigI)\cdot\sigI = \sigJ\cdot\sigI - |\sigI|^2$ under the inner-product convention of \Cref{rem:inner}. The drift collapses to $r^R_t - r^C_t$, confirming that $K_t B^C_t / B^R_t$ is a $\QR$-martingale as required by \eqref{eq:mart_KC}.
\end{remark}

\begin{remark}[Simulation practice]
For simulation purposes, one evolves $I_t$ and $J_t$ independently via \eqref{eq:dI} to \eqref{eq:dJ} and computes $K_t = J_t/I_t$ directly. The cross-rate dynamics \eqref{eq:dK} are therefore not needed for implementation but serve as a consistency check and appear in certain pricing formulae involving real-credit cross products.
\end{remark}

\subsection{HJM Dynamics Under the Nominal Measure}
\label{sec:hjm}

\subsubsection{HJM framework within each economy}
\label{subsec:hjm_own}

Under each economy's own risk-neutral measure $\mathbb{Q}^j$, the instantaneous forward rate follows a \citet{heath1992bond} specification:
\begin{equation}\label{eq:hjm_generic}
  df^j(t,T) = \mu^j(t,T)\dd t + \sigvec{j}(t,T)\cdot dW^j_t,
\end{equation}
where $\sigvec{j}(t,T)\in\mathbb{R}^m$ is the forward-rate volatility vector. The HJM no-arbitrage drift restriction under $\mathbb{Q}^j$ is:
\begin{equation}\label{eq:hjm_drift}
  \mu^j(t,T) = \sigvec{j}(t,T)\cdot\int_t^T \sigvec{j}(t,u)\dd u \equiv \sigvec{j}(t,T)\cdot\Avec{j}(t,T),
\end{equation}
where we introduce the shorthand $\Avec{j}(t,T) \equiv \int_t^T \sigvec{j}(t,u)\dd u$ for the cumulative volatility integral.

\subsubsection{Measure changes: the Girsanov connections}
\label{app:girsanov_subsec}

The key to expressing foreign-curve dynamics under $\QN$ is the Girsanov relationship between each foreign Brownian motion and $W^N_t$.

\begin{proposition}[Girsanov connections]
\label{prop:girsanov}
The Brownian motions under the three measures are related by:
\begin{align}
  dW^R_t &= dW^N_t - \bm{\rho}\,\sigI(t)\dd t, \label{eq:girsanov_R} \\[4pt]
  dW^C_t &= dW^N_t - \bm{\rho}\,\sigJ(t)\dd t. \label{eq:girsanov_C}
\end{align}
\end{proposition}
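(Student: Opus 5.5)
The plan is to obtain both relations from the Radon--Nikodym densities that \Cref{prop:martingale} makes available, and then apply the multidimensional Girsanov theorem to correlated Brownian motion. I treat $W^N$ as an $m$-dimensional continuous $\QN$-local martingale with $d\langle W^N,(W^N)^\top\rangle_t=\bm{\rho}\,\dd t$. Under the convention of \Cref{rem:inner}, the stochastic integral $\int\sigI\cdot dW^N$ then has quadratic covariation with $W^N$ given by $\bm{\rho}\,\sigI\,\dd t$. It is this covariation that produces the factor $\bm{\rho}$ in \eqref{eq:girsanov_R}--\eqref{eq:girsanov_C}.

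For \eqref{eq:girsanov_R}, I define the density process $Z^R_t = I_tB^R_t/(B^N_t I_0B^R_0/B^N_0)$, exactly as in the proof of \Cref{prop:martingale}. By the computation in the proof of \Cref{prop:fx_dynamics}, $dZ^R_t/Z^R_t=\sigI(t)\cdot dW^N_t$ with zero drift. Since $\sigI$ is deterministic and bounded on compacts, Novikov's condition holds, so $Z^R=\mathcal{E}(\int\sigI\cdot dW^N)$ is a true $\QN$-martingale and defines $\QR$ on each $\Ft$. This is consistent with \eqref{eq:RN_NR}.

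Girsanov's theorem in its general semimartingale form (Girsanov--Meyer) says that for any continuous $\QN$-local martingale $X$, the process $X-\int Z^{-1}\,d\langle X,Z\rangle$ is a $\QR$-local martingale. Applying this componentwise to $X=W^N$, I compute $d\langle W^N,Z^R\rangle_t = Z^R_t\,\bm{\rho}\,\sigI(t)\,\dd t$. Hence $W^R_t:=W^N_t-\int_0^t\bm{\rho}\,\sigI(u)\,\dd u$ is a continuous $\QR$-local martingale. Adding a finite-variation drift does not change quadratic covariation, so $W^R$ still has covariation $\bm{\rho}\,\dd t$. By Lévy's characterisation, applied after the linear whitening $\bm{\rho}^{-1/2}W^R$, $W^R$ is therefore a $\bm{\rho}$-correlated $\QR$-Brownian motion, which gives \eqref{eq:girsanov_R}.

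The argument for \eqref{eq:girsanov_C} is identical, with $Z^C=J B^C/B^N$ normalised and $dZ^C/Z^C=\sigJ\cdot dW^N$ from \eqref{eq:dJ}. This yields $dW^C_t=dW^N_t-\bm{\rho}\,\sigJ(t)\,\dd t$.

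The main obstacle is bookkeeping rather than analysis. The earlier text in \Cref{subsec:girsanov} writes the shift without $\bm{\rho}$, so I must be explicit that the factor appears precisely because $W^N$ has correlated components. I must also state that $\bm{\rho}$ is positive definite, so that the whitening step, and hence Lévy's theorem, applies. As a check, I will verify that substituting \eqref{eq:girsanov_R} reproduces the drift collapse in the consistency remark for $K_t$.
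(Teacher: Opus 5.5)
Your proposal is correct and follows the same route as the paper. The density process is proportional to $I_tB^R_t/B^N_t$, which is the stochastic exponential of $\int\sigma_I\cdot dW^N$, and the Girsanov drift for $W^N$ is its cross-variation with that exponent, $\bm{\rho}\,\sigma_I\,dt$ (likewise with $(C,J)$); you simply make explicit the Novikov and L\'evy-characterisation steps, and the positive-definiteness of $\bm{\rho}$, that the paper's short proof leaves implicit.
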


\begin{proof}
The Radon-Nikodym derivative from $\QN$ to $\QR$ is proportional to $I_t B^R_t / B^N_t$. By \eqref{eq:dI}, the stochastic exponential driving this change of measure has volatility $\sigI(t)$. The Girsanov drift correction for $W^N$ under $\QR$ is the cross-variation $d\langle W^N, \sigI\cdot W^N\rangle_t/dt = \bm{\rho}\,\sigI(t)$, giving \eqref{eq:girsanov_R}. The same argument with $(R,I)$ replaced by $(C,J)$ yields \eqref{eq:girsanov_C}. Substituting either relation into an integrand $\sigvec{j}\cdot dW^j$ produces the drift correction $\sigvec{j}\cdot\sigI$ (resp.\ $\sigvec{j}\cdot\sigJ$) under the inner-product convention of \Cref{rem:inner}.
\end{proof}

\subsubsection{Nominal forward curve under \texorpdfstring{$\QN$}{QN}}
\label{subsec:nominal_hjm}

The nominal forward rate is ``domestic'': no measure change is needed. Under $\QN$:
\begin{equation}\label{eq:dfN}
  df^N(t,T) = \sigvec{N}(t,T)\cdot\Avec{N}(t,T)\dd t + \sigvec{N}(t,T)\cdot dW^N_t.
\end{equation}

\subsubsection{Real forward curve under \texorpdfstring{$\QN$}{QN}: drift correction from inflation}
\label{subsec:real_hjm}

\begin{proposition}[Real curve under $\QN$]
\label{prop:real_drift}
The real forward rate under the nominal measure satisfies:
\begin{equation}\label{eq:dfR}
  df^R(t,T) = \Big[\sigvec{R}(t,T)\cdot\Avec{R}(t,T) - \sigvec{R}(t,T)\cdot\sigI(t)\Big]\dd t + \sigvec{R}(t,T)\cdot dW^N_t.
\end{equation}
\end{proposition}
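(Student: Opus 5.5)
The plan is to start from the real economy's own HJM specification \eqref{eq:hjm_generic} with $j=R$. Under $\QR$ the real forward rate satisfies $df^R(t,T)=\mu^R(t,T)\,\dd t+\sigvec{R}(t,T)\cdot dW^R_t$. The within-curve no-arbitrage restriction \eqref{eq:hjm_drift} then gives $\mu^R(t,T)=\sigvec{R}(t,T)\cdot\Avec{R}(t,T)$. This step is taken as given: it is the standard single-economy HJM argument applied to real zero-coupon bonds discounted by $B^R_t$ under $\QR$. No cross-economy information enters here.

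The second step is to re-express the driving noise under $\QN$ using the Girsanov connection \eqref{eq:girsanov_R} of \Cref{prop:girsanov}, namely $dW^R_t=dW^N_t-\bm{\rho}\,\sigI(t)\,\dd t$. Substituting this into the diffusion term gives
\begin{equation*}
\sigvec{R}(t,T)^{\top}dW^R_t=\sigvec{R}(t,T)^{\top}dW^N_t-\sigvec{R}(t,T)^{\top}\bm{\rho}\,\sigI(t)\,\dd t .
\end{equation*}
Under the inner-product convention of \Cref{rem:inner}, the correction term is exactly $\sigvec{R}(t,T)\cdot\sigI(t)\,\dd t$. Collecting drift terms yields \eqref{eq:dfR}, with the volatility loading unchanged. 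This is as it should be, since Girsanov alters only drifts.

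The step needing most care is the sign and form of the measure change, namely the claim that $d\QR/d\QN$ is the normalised $I_tB^R_t/B^N_t$. I would verify this explicitly. By \Cref{prop:fx_dynamics} and the computation in its proof, $M_t=I_tB^R_t/B^N_t$ satisfies $dM_t/M_t=\sigI(t)\cdot dW^N_t$, so $M_t/M_0=\mathcal{E}\bigl(\int_0^t\sigI\cdot dW^N\bigr)$. This agrees with \eqref{eq:RN_NR}. Girsanov's theorem for a correlated Brownian motion with $d\langle W^N\rangle_t=\bm{\rho}\,\dd t$ then shifts $W^N$ by the covariation $d\langle W^N,\int\sigI\cdot dW^N\rangle_t=\bm{\rho}\,\sigI\,\dd t$. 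This confirms the minus sign in \eqref{eq:dfR}.

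Finally, I would note the integrability conditions under which the stochastic exponential is a true martingale, so that $\QR$ is a genuine equivalent measure. With deterministic $\sigI$, Novikov's condition holds trivially on bounded horizons. One should also check that the rewritten dynamics are consistent with \eqref{eq:mart_IR}.
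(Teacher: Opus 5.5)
Your proposal is correct and follows the paper's proof. You start from the $\QR$ HJM dynamics with drift $\sigvec{R}\cdot\Avec{R}$, substitute $dW^R_t = dW^N_t - \bm{\rho}\,\sigI(t)\dd t$ from \Cref{prop:girsanov}, and read off the correction $-\sigvec{R}\cdot\sigI$ through the inner-product convention. Your additional checks (that the density is the normalised $I_tB^R_t/B^N_t$, the sign of the covariation shift, and Novikov for deterministic $\sigI$) repeat what the paper does in the proof of \Cref{prop:girsanov}, so they add rigour without changing the route.
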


\begin{proof}
Under $\QR$, the real forward rate satisfies HJM with the standard drift \eqref{eq:hjm_drift}:
\[
  df^R(t,T) = \sigvec{R}(t,T)\cdot\Avec{R}(t,T)\dd t + \sigvec{R}(t,T)\cdot dW^R_t.
\]
Substituting the Girsanov connection \eqref{eq:girsanov_R} and using the inner-product convention of \Cref{rem:inner} (so that $\sigvec{R}^\top\bm{\rho}\,\sigI = \sigvec{R}\cdot\sigI$):
\begin{align*}
  df^R(t,T) &= \sigvec{R}(t,T)\cdot\Avec{R}(t,T)\dd t + \sigvec{R}(t,T)\cdot\big(dW^N_t - \bm{\rho}\,\sigI(t)\dd t\big) \\[4pt]
  &= \Big[\sigvec{R}(t,T)\cdot\Avec{R}(t,T) - \sigvec{R}(t,T)\cdot\sigI(t)\Big]\dd t + \sigvec{R}(t,T)\cdot dW^N_t.
\end{align*}
The correction term $-\sigvec{R}(t,T)\cdot\sigI(t)$ is the covariation between the real forward-rate volatility and the inflation exchange-rate volatility. This is the original \citet{jarrow2003hjm} result.
\end{proof}

\subsubsection{Credit forward curve under \texorpdfstring{$\QN$}{QN}: drift correction from credit FX}
\label{subsec:credit_hjm}

\begin{proposition}[Credit curve under $\QN$]
\label{prop:credit_drift}
By exact analogy with \Cref{prop:real_drift}, the credit forward rate under $\QN$ satisfies:
\begin{equation}\label{eq:dfC}
  df^C(t,T) = \Big[\sigvec{C}(t,T)\cdot\Avec{C}(t,T) - \sigvec{C}(t,T)\cdot\sigJ(t)\Big]\dd t + \sigvec{C}(t,T)\cdot dW^N_t.
\end{equation}
\end{proposition}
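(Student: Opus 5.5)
The plan follows the proof of \Cref{prop:real_drift} line by line, with $(R,I)$ replaced by $(C,J)$. The argument has two inputs:
\begin{itemize}
\item the within-economy HJM specification \eqref{eq:hjm_generic}--\eqref{eq:hjm_drift} applied to $j=C$;
\item the credit Girsanov connection \eqref{eq:girsanov_C} of \Cref{prop:girsanov}.
\end{itemize}
Under its own measure $\QC$, the credit forward curve satisfies
\[
df^C(t,T)=\sigvec{C}(t,T)\cdot\Avec{C}(t,T)\dd t+\sigvec{C}(t,T)\cdot dW^C_t .
\]
This is a direct consequence of requiring that credit zero-coupon bonds discounted by $B^C$ be $\QC$-martingales. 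If one wants to make that step explicit, write $P^C(t,T)=\exp(-\int_t^T f^C(t,u)\dd u)$ and apply It\^o's formula. Setting the drift of $P^C/B^C$ to zero and differentiating in $T$ gives \eqref{eq:hjm_drift}.

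Next I would substitute $dW^C_t=dW^N_t-\bm{\rho}\,\sigJ(t)\dd t$ into the diffusion term. This splits it into $\sigvec{C}\cdot dW^N_t$ plus the drift contribution $-\sigvec{C}(t,T)^{\top}\bm{\rho}\,\sigJ(t)\dd t$. By the inner-product convention of \Cref{rem:inner}, that contribution equals $-\sigvec{C}(t,T)\cdot\sigJ(t)\dd t$. Collecting the drift terms yields \eqref{eq:dfC}.

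The step needing the most care is justifying \eqref{eq:girsanov_C} itself. It requires that the density process $d\QC/d\QN|_{\Ft}$ be proportional to $J_tB^C_t/B^N_t$. By \eqref{eq:mart_JC}, this process is a positive $\QN$-martingale. By \eqref{eq:dJ}, its log-volatility is exactly $\sigJ$, since the $B$ factors are of finite variation. It is therefore the stochastic exponential $\mathcal{E}(\int\sigJ\cdot dW^N)$, and this is a true martingale because $\sigJ$ is deterministic, so Novikov's condition holds trivially on finite horizons.

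Girsanov's theorem then gives the drift shift $d\langle W^N,\sigJ\cdot W^N\rangle_t/dt=\bm{\rho}\,\sigJ(t)$, as in the proof of \Cref{prop:girsanov}.

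One consistency point should be flagged. The definition $J_t=B^N_t/B^C_t$ makes $J_tB^C_t/B^N_t\equiv1$. Read literally, this would force $\sigJ=0$ and make the correction vanish. The proposition should therefore be read with $J$ as a genuinely stochastic credit exchange rate obeying \eqref{eq:dJ}, as the appendix treats it. Under that reading the derivation above goes through verbatim.
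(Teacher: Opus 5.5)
Your proposal is correct and follows the paper's proof: start from the HJM drift of $f^C$ under $\QC$, substitute \eqref{eq:girsanov_C}, and rewrite $\sigvec{C}(t,T)^{\top}\bm{\rho}\,\sigJ(t)$ as $\sigvec{C}(t,T)\cdot\sigJ(t)$ via \Cref{rem:inner}. Two of your additions go beyond the paper's proof and are both sound: you justify the density process through Novikov, and you note that the literal definition $J_t=B^N_t/B^C_t$ is of finite variation and would force $\sigJ=0$, so the proposition only has content when $J$ is read as a genuinely stochastic rate obeying \eqref{eq:dJ}.
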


\begin{proof}
Under $\QC$, the credit forward rate satisfies HJM:
\[
  df^C(t,T) = \sigvec{C}(t,T)\cdot\Avec{C}(t,T)\dd t + \sigvec{C}(t,T)\cdot dW^C_t.
\]
Substituting the Girsanov connection \eqref{eq:girsanov_C} (with $\sigvec{C}^\top\bm{\rho}\,\sigJ = \sigvec{C}\cdot\sigJ$ per \Cref{rem:inner}):
\begin{align*}
  df^C(t,T) &= \sigvec{C}(t,T)\cdot\Avec{C}(t,T)\dd t + \sigvec{C}(t,T)\cdot\big(dW^N_t - \bm{\rho}\,\sigJ(t)\dd t\big) \\[4pt]
  &= \Big[\sigvec{C}(t,T)\cdot\Avec{C}(t,T) - \sigvec{C}(t,T)\cdot\sigJ(t)\Big]\dd t + \sigvec{C}(t,T)\cdot dW^N_t.
\end{align*}
The correction $-\sigvec{C}(t,T)\cdot\sigJ(t)$ is the credit analogue of the inflation correction in \eqref{eq:dfR}: it adjusts the credit-curve drift for the covariation between credit forward-rate innovations and credit-FX innovations.
\end{proof}

\subsubsection{Summary: the three-curve system}
\label{subsec:system}

Collecting \eqref{eq:dfN} to \eqref{eq:dfC}, the full system under $\QN$ is:
\begin{equation}\label{eq:system_app}
  \boxed{
  \begin{aligned}
    df^N(t,T) &= \sigvec{N}(t,T)\cdot\Avec{N}(t,T)\dd t + \sigvec{N}(t,T)\cdot dW^N_t, \\[6pt]
    df^R(t,T) &= \Big[\sigvec{R}(t,T)\cdot\Avec{R}(t,T) - \sigvec{R}(t,T)\cdot\sigI(t)\Big]\dd t + \sigvec{R}(t,T)\cdot dW^N_t, \\[6pt]
    df^C(t,T) &= \Big[\sigvec{C}(t,T)\cdot\Avec{C}(t,T) - \sigvec{C}(t,T)\cdot\sigJ(t)\Big]\dd t + \sigvec{C}(t,T)\cdot dW^N_t.
  \end{aligned}
  }
\end{equation}

All three curves are driven by the \emph{same} $m$-dimensional Brownian motion $W^N_t$. Cross-curve correlations arise through the inner products of the respective volatility vectors $\sigvec{N}$, $\sigvec{R}$, $\sigvec{C}$ within $\mathbb{R}^m$. The structure is symmetric: each foreign curve's drift is its ``own-measure'' HJM drift \emph{minus} a correction equal to the covariance of the forward-rate volatility with the relevant exchange-rate volatility.

\subsection{Bond Prices and the Drift Integrals}
\label{sec:bonds}

For completeness, we record the relationship between bond prices and forward rates under the three-curve system.

\begin{proposition}[Bond price dynamics under $\QN$]
\label{prop:bond}
For each economy $j\in\{N,R,C\}$, the zero-coupon bond price satisfies:
\begin{equation}\label{eq:bond_dyn}
  \frac{dP^j(t,T)}{P^j(t,T)} = r^j_t\dd t - \Avec{j}(t,T)\cdot dW^j_t,
\end{equation}
under $\mathbb{Q}^j$. Under $\QN$, the foreign bond prices (expressed in nominal terms) satisfy:
\begin{align}
  \frac{d\big[I_t P^R(t,T)\big]}{I_t P^R(t,T)} &= r^N_t\dd t + \big(\sigI(t) - \Avec{R}(t,T)\big)\cdot dW^N_t, \label{eq:nom_real_bond} \\[4pt]
  \frac{d\big[J_t P^C(t,T)\big]}{J_t P^C(t,T)} &= r^N_t\dd t + \big(\sigJ(t) - \Avec{C}(t,T)\big)\cdot dW^N_t. \label{eq:nom_credit_bond}
\end{align}
\end{proposition}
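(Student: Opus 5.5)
The plan is to derive the three-bond statement of \Cref{prop:bond} in two stages: first the own-measure dynamics \eqref{eq:bond_dyn}, then the nominal-unit dynamics \eqref{eq:nom_real_bond}--\eqref{eq:nom_credit_bond} by an It\^o product with the exchange rates of \Cref{prop:fx_dynamics}, followed by the Girsanov shift of \Cref{prop:girsanov}.

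\textbf{Own-measure dynamics.} For each $j\in\{N,R,C\}$, write $\ln P^j(t,T) = -\int_t^T f^j(t,u)\dd u$. I will differentiate in $t$ using the forward dynamics \eqref{eq:hjm_generic} together with a stochastic Fubini interchange, which is justified by the deterministic, bounded volatilities. The boundary term produces $f^j(t,t)\dd t = r^j_t\dd t$. The drift integral is
\[
\int_t^T \sigvec{j}(t,u)\cdot\Avec{j}(t,u)\dd u = \tfrac12\,\Avec{j}(t,T)\cdot\Avec{j}(t,T),
\]
since $\partial_u \Avec{j}(t,u) = \sigvec{j}(t,u)$. The diffusion term is $-\Avec{j}(t,T)\cdot dW^j_t$. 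Passing from $d\ln P^j$ to $dP^j/P^j$ adds $+\tfrac12\Avec{j}\cdot\Avec{j}\dd t$, which cancels the drift integral exactly; this cancellation is precisely the HJM condition \eqref{eq:hjm_drift}. The result is \eqref{eq:bond_dyn}.

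\textbf{Nominal-unit bonds.} For the real bond I first rewrite \eqref{eq:bond_dyn} under $\QN$ using \eqref{eq:girsanov_R}:
\[
\frac{dP^R}{P^R} = \big(r^R_t + \Avec{R}\cdot\sigI\big)\dd t - \Avec{R}\cdot dW^N_t,
\]
where the inner-product convention of \Cref{rem:inner} absorbs $\bm{\rho}$. I then apply It\^o's product rule to $I_tP^R(t,T)$ together with \eqref{eq:dI}. Collecting terms:
\begin{itemize}
\item the drift is $(r^N-r^R) + (r^R + \Avec{R}\cdot\sigI)$ plus the covariation term $d\langle \ln I,\ln P^R\rangle/\dd t = -\sigI\cdot\Avec{R}$;
\item the two $\sigI\cdot\Avec{R}$ terms cancel, leaving $r^N_t$;
\item the diffusion is $(\sigI-\Avec{R})\cdot dW^N_t$.
\end{itemize}
This gives \eqref{eq:nom_real_bond}. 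The credit case \eqref{eq:nom_credit_bond} follows verbatim with $(R,I)$ replaced by $(C,J)$, using \eqref{eq:dJ} and \eqref{eq:girsanov_C}. As a sanity check, the result says $I_tP^R/B^N$ and $J_tP^C/B^N$ are $\QN$-local martingales, consistent with \Cref{prop:martingale}.

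\textbf{Main obstacle.} The delicate point is consistency of the covariance conventions. The Girsanov shift in \Cref{prop:girsanov} carries $\bm{\rho}\,\sigI$, while the It\^o covariation of $dI$ with $dP^R$ must be computed with the same $\bm{\rho}$-weighted product. Only if both are taken in the convention of \Cref{rem:inner} do the correction terms cancel exactly. I will therefore fix the convention $d\langle x\cdot W^N, y\cdot W^N\rangle = x^\top\bm{\rho}\,y\dd t$ before either computation and verify the cancellation explicitly. The stochastic Fubini step is routine under deterministic volatilities.
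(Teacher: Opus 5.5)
Your proposal is correct and follows the same route as the paper. You obtain the own-measure dynamics \eqref{eq:bond_dyn} by the standard HJM computation; the paper simply cites Heath--Jarrow--Morton here, while you carry it out explicitly via stochastic Fubini and the $\tfrac12\Avec{j}\cdot\Avec{j}$ cancellation. You then apply the Girsanov shift \eqref{eq:girsanov_R} to write $P^R$ under $\QN$, and take an It\^o product with $I_t$, in which the $\bm{\rho}$-weighted covariation $-\sigI\cdot\Avec{R}$ cancels the Girsanov drift $+\Avec{R}\cdot\sigI$; the credit case follows verbatim, exactly as in the paper.
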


\begin{proof}
Equation \eqref{eq:bond_dyn} follows from $P^j(t,T) = \exp\!\big(-\int_t^T f^j(t,u)\dd u\big)$ and the HJM dynamics \eqref{eq:hjm_generic} to \eqref{eq:hjm_drift} by a standard computation \citep[see][Proposition~2]{heath1992bond}.

For \eqref{eq:nom_real_bond}: the nominal value of the real bond is $V_t = I_t P^R(t,T)$. Apply It\^o's product rule using \eqref{eq:dI} and the $\QN$-dynamics of $P^R(t,T)$. The $\QN$-dynamics of $P^R$ are obtained from \eqref{eq:bond_dyn} by the Girsanov substitution $dW^R_t = dW^N_t - \bm{\rho}\,\sigI(t)\dd t$:
\[
  \frac{dP^R(t,T)}{P^R(t,T)} = \Big[r^R_t + \Avec{R}(t,T)\cdot\sigI(t)\Big]\dd t - \Avec{R}(t,T)\cdot dW^N_t.
\]
Then:
\begin{align*}
  \frac{dV_t}{V_t} &= \frac{dI_t}{I_t} + \frac{dP^R}{P^R} + \frac{dI_t}{I_t}\cdot\frac{dP^R}{P^R} \\[4pt]
  &= (r^N_t - r^R_t)\dd t + \sigI\cdot dW^N_t + r^R_t\dd t + \Avec{R}\cdot\sigI\dd t - \Avec{R}\cdot dW^N_t - \sigI\cdot\Avec{R}\dd t \\[4pt]
  &= r^N_t\dd t + (\sigI - \Avec{R})\cdot dW^N_t.
\end{align*}
The covariance terms $+\Avec{R}\cdot\sigI$ and $-\sigI\cdot\Avec{R}$ cancel. Equation \eqref{eq:nom_credit_bond} follows identically.
\end{proof}

\begin{remark}
Equations \eqref{eq:nom_real_bond} to \eqref{eq:nom_credit_bond} confirm that $I_t P^R(t,T)/B^N_t$ and $J_t P^C(t,T)/B^N_t$ are $\QN$-martingales, as required by the general no-arbitrage conditions. These dynamics are needed for pricing nominal instruments with foreign-economy payoffs, e.g., an inflation-protected corporate bond paying in real terms with credit risk.
\end{remark}

\subsection{Spread Dynamics and the Triangle Consistency Identity}
\label{sec:spreads}

\section{Calibration procedure}\label{app:calibration}

The three-curve system of \Cref{sec:framework} is fully specified once a volatility architecture for the forward-rate curves and exchange rates is chosen and a procedure for estimating the parameters from market data is fixed. This appendix develops both. The volatility specification is a parsimonious shared-factor structure that admits closed-form HJM drift integrals; the calibration workflow uses only curve histories and JGP index spread data; the simulation algorithm is a discretised Musiela-grid implementation; and the model outputs are summarised at the end with diagnostic checks.

\subsection{Volatility Architecture}
\label{sec:vol_arch}

\subsubsection{Design requirements}
\label{subsec:vol_requirements}

A practical volatility specification for the three-curve system must satisfy four requirements:
\begin{enumerate}[label=(R\arabic*)]
  \item \label{req:realistic} \textbf{Realistic marginal dynamics.} Each curve (nominal, real, and credit) must exhibit plausible level, slope, and curvature movements.
  \item \label{req:comovement} \textbf{Cross-curve co-movements.} The model must reproduce the empirically observed correlations between nominal rate changes, real rate changes, inflation surprises, and credit spread changes.
  \item \label{req:tractable} \textbf{Analytic tractability.} The HJM drift integrals $\Avec{j}(t,T) = \int_t^T \sigvec{j}(t,u)\dd u$ should be available in closed form to avoid numerical quadrature at each simulation step.
  \item \label{req:calibratable} \textbf{Calibratability from available data.} In the Brazilian market, the typical data are curve histories and index spread histories. No interest-rate option prices (caps, swaptions) are reliably available for the real or credit curves. The volatility architecture must therefore be identifiable from time-series data alone.
\end{enumerate}

\subsubsection{Shared-factor structure}
\label{subsec:shared_factor}

Let $W^N_t \in \mathbb{R}^m$ be the $m$-dimensional Brownian motion under $\QN$ with correlation matrix $\bm{\rho}$ (\Cref{rem:inner}). We decompose each curve's forward-rate volatility using a set of \textbf{shared shape functions} $\{g_q(\tau)\}_{q=1}^{n_g}$, where $\tau = T - t$ is time to maturity, and curve-specific \textbf{amplitude matrices} $A^j(t)\in\mathbb{R}^{m_j\times n_g}$:
\begin{equation}\label{eq:vol_decomp}
  \sigvec{j}(t,T) = \sum_{p=1}^{m_j}\bigg[\sum_{q=1}^{n_g} A^j_{p,q}(t)\,g_q(T-t)\bigg]\mathbf{e}^{(j)}_p, \qquad j \in \{N, R, C\},
\end{equation}
where $\mathbf{e}^{(j)}_p$ denotes the $p$-th canonical basis vector of block $j$ within $\mathbb{R}^m$. Each row $A^j_{p,\cdot}$ specifies how factor $p$ of curve $j$ mixes the basis shapes; the diagonal special case $A^j_{p,q}=a^j_p\,\delta_{p,q}$ (with $n_g=m_j$) recovers a one-shape-per-factor specification. The shape functions are shared across curves, encoding the common maturity structure; the amplitude matrices are curve-specific. We do not impose a sign restriction on $A^j_{p,q}$: the sign reflects the orientation of the corresponding PCA eigenvector and is resolved by the calibration of \Cref{sec:calib_ts}.

\begin{definition}[Exponential volatility family]
\label{def:exp_vol}
We adopt the exponential shape functions:
\begin{equation}\label{eq:shape_fns}
  g_1(\tau) = 1, \qquad g_2(\tau) = e^{-b_2\tau}, \qquad g_3(\tau) = \tau\,e^{-b_3\tau},
\end{equation}
with decay parameters $b_2, b_3 > 0$. These correspond loosely to:
\begin{itemize}[leftmargin=2em]
  \item $g_1$: a \emph{level} factor affecting all maturities equally,
  \item $g_2$: a \emph{slope} factor with exponential decay, concentrated at the short end,
  \item $g_3$: a \emph{curvature} (hump) factor peaking at $\tau^* = 1/b_3$ and decaying thereafter.
\end{itemize}
\end{definition}

The choice of shared decay parameters $b_2, b_3$ across curves is a deliberate parsimony constraint: it forces the ``shape'' of volatility over the maturity axis to be common, while allowing the ``magnitude'' of each factor to differ by curve via the amplitude rows $A^j_{p,\cdot}$. This is empirically well motivated, since PCA analyses of yield-curve changes in different economies and asset classes consistently recover level/slope/curvature factors with similar decay profiles \citep{litterman1991common}.

\begin{proposition}[Closed-form drift integrals]
\label{prop:drift_integrals}
For the shape functions \eqref{eq:shape_fns}, the cumulative volatility integrals $G_q(\tau) = \int_0^\tau g_q(s)\dd s$ are:
\begin{align}
  G_1(\tau) &= \tau, \label{eq:G1_ch4} \\[4pt]
  G_2(\tau) &= \frac{1 - e^{-b_2\tau}}{b_2}, \label{eq:G2_ch4} \\[4pt]
  G_3(\tau) &= \frac{1 - (1 + b_3\tau)\,e^{-b_3\tau}}{b_3^2}. \label{eq:G3_ch4}
\end{align}
Consequently, the cumulative volatility of curve $j$ is $\Avec{j}(t,\tau) = \sum_{p=1}^{m_j}\big[\sum_{r=1}^{n_g} A^j_{p,r}(t)\,G_r(\tau)\big]\mathbf{e}^{(j)}_p$, and the HJM drift under the curve's own measure is the bilinear form
\begin{equation}\label{eq:drift_full_j}
  \mu^j(t,\tau) = \sigvec{j}(t,\tau)\cdot\Avec{j}(t,\tau) = \sum_{q,r=1}^{n_g}\big[A^j(t)^\top A^j(t)\big]_{q,r}\,g_q(\tau)\,G_r(\tau),
\end{equation}
which collapses to $\sum_{p=1}^{m_j} (a^j_p)^2\,g_p(\tau)\,G_p(\tau)$ in the diagonal special case.
\end{proposition}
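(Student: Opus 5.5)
The plan is direct integration of each shape function over $[0,\tau]$, followed by substitution into the block decomposition \eqref{eq:vol_decomp} and the HJM drift \eqref{eq:hjm_drift}. Because the volatility depends on $(t,u)$ only through $u-t$ and the amplitudes $A^j(t)$ are fixed in $u$, I will change variables $s=u-t$ in $\Avec{j}(t,T)=\int_t^T\sigvec{j}(t,u)\dd u$. Linearity of the integral then pulls the amplitudes and the basis vectors $\mathbf{e}^{(j)}_p$ outside, which gives $\Avec{j}(t,\tau)=\sum_p\bigl[\sum_r A^j_{p,r}(t)G_r(\tau)\bigr]\mathbf{e}^{(j)}_p$ immediately once the scalar integrals $G_q$ are known.

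For the scalar integrals, $G_1(\tau)=\tau$ is trivial and $G_2(\tau)=(1-e^{-b_2\tau})/b_2$ is elementary. For $G_3$ I will integrate by parts with $u=s$ and $dv=e^{-b_3 s}\dd s$:
\[
G_3(\tau)=\Bigl[-\tfrac{s}{b_3}e^{-b_3 s}\Bigr]_0^\tau+\tfrac{1}{b_3}\int_0^\tau e^{-b_3 s}\dd s=-\tfrac{\tau}{b_3}e^{-b_3\tau}+\tfrac{1-e^{-b_3\tau}}{b_3^2}.
\]
Combining over the common denominator $b_3^2$ gives \eqref{eq:G3_ch4}. As a check, I will confirm $G_3(0)=0$ and $G_3'=g_3$.

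For the drift, I will form $\sigvec{j}\cdot\Avec{j}$ using the inner-product convention of \Cref{rem:inner}. The main point of care is this step: both vectors are supported on the single block $j$, where $\bm{\rho}$ is the identity by the PC-aligned construction. That gives $\mathbf{e}^{(j)}_p\cdot\mathbf{e}^{(j)}_{p'}=\delta_{pp'}$, so the product reduces to
\[
\sum_p\Bigl(\sum_q A^j_{p,q}g_q\Bigr)\Bigl(\sum_r A^j_{p,r}G_r\Bigr).
\]
Exchanging the order of summation turns this into the bilinear form $\sum_{q,r}\bigl(\sum_p A^j_{p,q}A^j_{p,r}\bigr)g_qG_r=\sum_{q,r}[A^{j\top}A^j]_{q,r}\,g_q(\tau)G_r(\tau)$, which is \eqref{eq:drift_full_j}. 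In the diagonal case $A^j_{p,q}=a^j_p\delta_{pq}$, the matrix $A^{j\top}A^j$ is $\mathrm{diag}((a^j_p)^2)$, and the form collapses to $\sum_p (a^j_p)^2 g_pG_p$.

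The only step that requires more than routine calculus is the reliance on within-block orthogonality. If $\bm{\rho}$ were not the identity on the block, the drift would instead read $\sum_{q,r}[A^{j\top}\bm{\rho}_{jj}A^j]_{q,r}\,g_qG_r$. I will therefore state the block-identity assumption explicitly where it is used.
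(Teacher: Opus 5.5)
Your proposal is correct and follows the paper's proof. Both compute the three shape-function integrals directly; your integration by parts for $G_3$ just derives the antiderivative the paper quotes. Both then use within-block orthogonality under $\bm{\rho}_{jj}=\mathbf{I}_{m_j}$ to reduce $\sigvec{j}\cdot\Avec{j}$ to a single sum over factors $p$, and exchange the order of summation to obtain $A^{j\top}A^j$. Your explicit change of variables $s=u-t$ and your note that a non-identity $\bm{\rho}_{jj}$ would replace $A^{j\top}A^j$ by $A^{j\top}\bm{\rho}_{jj}A^j$ are correct refinements that the paper leaves implicit.
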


\begin{proof}
Direct integration using $\int_0^\tau e^{-b s}\dd s = (1-e^{-b\tau})/b$ and $\int_0^\tau s\,e^{-b s}\dd s = [1-(1+b\tau)e^{-b\tau}]/b^2$ yields \eqref{eq:G1_ch4} to \eqref{eq:G3_ch4}. For \eqref{eq:drift_full_j}, the basis vectors of block $j$ are mutually orthogonal under $\bm{\rho}_{jj}=\mathbf{I}_{m_j}$ (\Cref{rem:inner}), so $\sigvec{j}\cdot\Avec{j} = \sum_p \sigvec{j}_p\,A^j_p$ where $\sigvec{j}_p(\tau)=\sum_q A^j_{p,q}\,g_q(\tau)$ and $A^j_p(\tau)=\sum_r A^j_{p,r}\,G_r(\tau)$. Expanding and exchanging the order of summation gives $\sum_{q,r}\big[\sum_p A^j_{p,q} A^j_{p,r}\big] g_q(\tau)\,G_r(\tau) = \sum_{q,r} (A^{j\top} A^j)_{q,r}\, g_q(\tau)\, G_r(\tau)$.
\end{proof}

Requirement \ref{req:tractable} is therefore satisfied: no numerical quadrature is needed within each simulation time step.

\subsubsection{Base-plus-spread decomposition}
\label{subsec:base_spread}

Rather than modeling the credit forward rate $f^C$ directly via \eqref{eq:vol_decomp}, it is empirically more natural, and more stable, to decompose:
\begin{equation}\label{eq:base_spread_decomp}
  f^C(t,T) = f^N(t,T) + s^{CDI}(t,T),
\end{equation}
and to assign separate factor structures to the nominal curve and the CDI+ spread process.

\begin{definition}[Rate and spread factor blocks]
\label{def:factor_blocks}
Partition the $m$-dimensional Brownian motion into three blocks of dimensions $m_N$, $m_R$, and $m_S$ (with $m = m_N + m_R + m_S$). With rate shape functions $\{g_q\}_{q=1}^{n_g}$ and spread shape functions $\{h_q\}_{q=1}^{n_h}$, the block volatility specifications are:
\begin{align}
  \sigvec{N}(t,T) &= \sum_{p=1}^{m_N}\bigg[\sum_{q=1}^{n_g} A^N_{p,q}(t)\,g_q(T-t)\bigg]\mathbf{e}_p, \label{eq:sigN_block} \\[4pt]
  \sigvec{R}(t,T) &= \sum_{p=1}^{m_R}\bigg[\sum_{q=1}^{n_g} A^R_{p,q}(t)\,g_q(T-t)\bigg]\mathbf{e}_{m_N+p}, \label{eq:sigR_block} \\[4pt]
  \sigvec{S}(t,T) &= \sum_{p=1}^{m_S}\bigg[\sum_{q=1}^{n_h} A^S_{p,q}(t)\,h_q(T-t)\bigg]\mathbf{e}_{m_N+m_R+p}, \label{eq:sigS_block}
\end{align}
where $\sigvec{S}$ is the volatility of the CDI+ spread process, the spread shapes $\{h_q\}$ may differ from $\{g_q\}$, and the canonical basis vectors are chosen from non-overlapping blocks. Cross-block correlations enter through the off-diagonal entries of $\bm{\rho}$ (\Cref{rem:inner,sec:calib_corr}).
\end{definition}

The credit-curve volatility is then recovered as:
\begin{equation}\label{eq:sigC_from_decomp}
  \sigvec{C}(t,T) = \sigvec{N}(t,T) + \sigvec{S}(t,T),
\end{equation}
and the IPCA+ spread is implied by the triangle identity (Chapter~3, Proposition~5.2):
\begin{equation}\label{eq:sIPCA_implied}
  s^{IPCA}(t,T) = s^{CDI}(t,T) + f^N(t,T) - f^R(t,T).
\end{equation}

\begin{remark}[Why not model three curves independently?]
\label{rem:why_decomp}
One could in principle assign independent factor structures to $f^N$, $f^R$, and $f^C$ and calibrate all correlations from time-series data. The base-plus-spread decomposition has two advantages. First, it automatically enforces that the credit forward rate exceeds the nominal forward rate when spreads are positive, a constraint that independent modeling could violate under extreme scenarios. Second, the spread process $s^{CDI}$ has substantially lower dimensionality and volatility than $f^C$ itself, leading to more stable PCA decompositions and parameter estimates.
\end{remark}

\begin{remark}[Spread shape functions]
\label{rem:spread_shapes}
For the spread factors $\{h_k\}$, a natural choice mirrors the rate factors:
\begin{equation}\label{eq:spread_shapes}
  h_1(\tau) = 1, \qquad h_2(\tau) = e^{-c_2\tau},
\end{equation}
with $c_2 > 0$. Two spread factors typically suffice: a level factor capturing parallel spread moves and a slope factor capturing steepening/flattening. A curvature factor may be added if the data support it, but in practice credit spread curves are lower-dimensional than rate curves.
\end{remark}

\subsubsection{Exchange-rate volatilities}
\label{subsec:fx_vol_spec}

The inflation and credit exchange rates require volatility specifications compatible with the factor structure:
\begin{equation}\label{eq:sigI_spec}
  \sigI(t) = \sum_{k=1}^m \alpha^I_k(t)\,\mathbf{e}_k, \qquad \sigJ(t) = \sum_{k=1}^m \alpha^J_k(t)\,\mathbf{e}_k,
\end{equation}
where $\alpha^I_k, \alpha^J_k \in\mathbb{R}$ are scalar loadings on each Brownian factor; signs reflect the orientation of the corresponding eigenvector.

We impose the following \textbf{sparsity restrictions} to limit dimensionality while preserving economic interpretability:

\begin{enumerate}[label=(S\arabic*)]
  \item \label{spar:inflation} $\sigI$ loads primarily on the real-curve factor block (indices $m_N+1,\ldots,m_N+m_R$), so that inflation shocks are naturally correlated with the nominal-real differential. A small loading on the nominal level factor may be included.
  \item \label{spar:credit} $\sigJ$ loads primarily on the spread factor block (indices $m_N+m_R+1,\ldots,m$), so that credit-FX shocks are correlated with spread movements. A small loading on the nominal level factor captures the empirical observation that spreads widen during rate rallies.
  \item \label{spar:zero} All other cross-loadings are set to zero as a baseline, yielding a model with $m_R + 1$ free parameters for $\sigI$ and $m_S + 1$ free parameters for $\sigJ$.
\end{enumerate}

\begin{remark}[Equivalent parametrizations]
\label{rem:equiv_param}
An alternative formulation replaces the canonical-basis amplitudes with a non-diagonal correlation matrix $\Sigma$ among the Brownian factors and uses unit loadings. The two representations produce the same joint distribution of curve increments. We adopt the amplitude-based parametrization because it facilitates PCA-based estimation: the amplitude rows $A^j_{p,\cdot}$ are directly interpretable as factor volatilities in units of basis points per square root of time, and the cross-block correlations sit explicitly in $\bm{\rho}$ (\Cref{rem:inner,sec:calib_corr}).
\end{remark}

\subsubsection{Parameter count}
\label{subsec:param_count}

\Cref{tab:params} summarizes the free parameters for a typical specification with $m_N = 3$, $m_R = 2$, $m_S = 2$ (total $m = 7$ Brownian factors), $n_g=3$ rate shape functions, and $n_h=2$ spread shape functions.

\noindent
Of these, the 19 amplitude-matrix entries (9 nominal, 6 real, 4 spread) and 3 decay parameters are identified from PCA followed by NLS in \Cref{sec:calib_ts}, the 6 FX loadings from moment matching in \Cref{sec:calib_fx}, and the correlations from direct estimation in \Cref{sec:calib_corr}. The triangle identity provides out-of-sample validation throughout. In practice, many of the 15 cross-block correlations are negligibly small and may be set to zero, reducing the effective parameter count further; sparsity restrictions \ref{spar:inflation}--\ref{spar:zero} also collapse most of the FX-loading dimensionality (\Cref{subsec:fx_vol_spec}).

\subsection{Calibration Step A: Cross-Sectional Curve Fitting}
\label{sec:calib_cross}

\subsubsection{Inputs}
\label{subsec:inputs}

At each calibration date $t_0$, we require:
\begin{enumerate}[label=(\alph*)]
  \item The \textbf{nominal base curve} $\{P^N(t_0, T_i)\}_{i=1}^{n_N}$, obtained from the DI futures strip via standard bootstrapping.
  \item The \textbf{real base curve} $\{P^R(t_0, T_i)\}_{i=1}^{n_R}$, obtained from NTN-B (IPCA-linked government bond) yields stripped to zero-coupon form using the \citet{svensson1994estimating} parametrization or cubic-spline interpolation.
  \item The \textbf{CDI+ spread term structure} $\{s^{CDI}_{mkt}(t_0, T_i)\}$ from JGP Idex-CDI index data.
  \item The \textbf{IPCA+ spread term structure} $\{s^{IPCA}_{mkt}(t_0, T_i)\}$ from JGP Idex-IPCA index data.
  \item \textbf{Historical time series} of all four objects above, at daily or weekly frequency, over a window of length $L$ (typically $L = 1$ to $3$ years).
\end{enumerate}

Items (a) to (d) are used for cross-sectional fitting at $t_0$; item (e) is used for time-series estimation (Sections~\ref{sec:calib_ts} to \ref{sec:calib_fx}).

\subsubsection{Constructing the credit curve}
\label{subsec:credit_curve}

The credit forward curve is constructed by adding CDI+ spreads to the nominal forwards:
\begin{equation}\label{eq:fC_construct}
  f^C(t_0, T) := f^N(t_0, T) + s^{CDI}_{mkt}(t_0, T).
\end{equation}

This definition is adopted rather than estimated: it is a direct consequence of identification \eqref{eq:base_spread_decomp}.

\subsubsection{Triangle consistency check}
\label{subsec:triangle_check}

Compute the model-implied IPCA+ spread:
\begin{equation}\label{eq:sIPCA_impl}
  s^{IPCA}_{impl}(t_0, T) := f^C(t_0, T) - f^R(t_0, T) = s^{CDI}_{mkt}(t_0, T) + f^N(t_0, T) - f^R(t_0, T),
\end{equation}
and assess the \textbf{triangle consistency gap}:
\begin{equation}\label{eq:gap}
  \Delta(t_0, T) := s^{IPCA}_{impl}(t_0, T) - s^{IPCA}_{mkt}(t_0, T).
\end{equation}

A non-zero $\Delta$ may arise from several sources:
\begin{enumerate}[label=(\roman*)]
  \item \textbf{Base-curve mismatch.} The real curve derived from NTN-B bootstrapping may differ from the IPCA base curve used internally by JGP in constructing its IPCA+ indices. Differences in interpolation method, daycount convention, or the treatment of seasonality in inflation can each contribute several basis points.
  \item \textbf{Index methodology differences.} The CDI+ and IPCA+ JGP indices may use different eligibility screens, weighting schemes, or rebalancing frequencies, so their spread levels need not be perfectly consistent even for the same underlying credit universe.
  \item \textbf{Convexity and liquidity effects.} CDI-linked and IPCA-linked debentures of the same issuer may trade at different effective spreads due to convexity differences (the IPCA linker has a real-rate duration plus an inflation duration) or differential liquidity.
\end{enumerate}

\subsubsection{Deterministic reconciliation adjustment}
\label{subsec:reconciliation}

When $\Delta(t_0, T)$ is structurally non-zero, we introduce a small deterministic adjustment $\chi(T)$:
\begin{equation}\label{eq:adjustment}
  f^C(t_0, T) := f^N(t_0, T) + s^{CDI}_{mkt}(t_0, T) + \chi(T),
\end{equation}
where $\chi$ is calibrated by minimizing a weighted least-squares objective:
\begin{equation}\label{eq:chi_opt}
  \min_{\chi \in \mathcal{C}} \sum_{i=1}^{n} w(T_i)\,\Delta(t_0, T_i)^2,
\end{equation}
subject to $\chi \in \mathcal{C}$, a space of smooth functions (e.g., cubic splines with a small number of knots, or polynomials of degree $\leq 3$).

\begin{remark}[Interpretation and magnitude]
\label{rem:chi_magnitude}
The adjustment $\chi$ absorbs the ``residual'' of the triangle identity in market data. It should be smooth, small in magnitude (typically 2 to 8 basis points), and stable across consecutive calibration dates. A large or volatile $\chi$ signals fundamental data issues, for instance, a mismatched real-curve source, that should be resolved before proceeding to time-series estimation. In production, we recommend tracking $\|\chi\|_\infty$ as a data-quality diagnostic.
\end{remark}

\subsection{Calibration Step B: Time-Series Factor Estimation}
\label{sec:calib_ts}

\subsubsection{Computing historical changes}
\label{subsec:hist_changes}

Over the estimation window $\{t_1, t_2, \ldots, t_L\}$ (daily or weekly dates), compute forward-rate changes at each pillar maturity $T_i$:
\begin{align}
  \Delta f^N(t_\ell, T_i) &= f^N(t_{\ell+1}, T_i) - f^N(t_\ell, T_i), \label{eq:dfN_hist} \\[4pt]
  \Delta f^R(t_\ell, T_i) &= f^R(t_{\ell+1}, T_i) - f^R(t_\ell, T_i), \label{eq:dfR_hist} \\[4pt]
  \Delta s^{CDI}(t_\ell, T_i) &= s^{CDI}(t_{\ell+1}, T_i) - s^{CDI}(t_\ell, T_i). \label{eq:dsS_hist}
\end{align}

Let $\mathbf{X}^j \in \mathbb{R}^{(L-1) \times n_j}$ denote the matrix of changes for block $j \in \{N, R, S\}$, where each row is a date and each column a maturity pillar.

\begin{remark}[Maturity alignment]
\label{rem:maturity_alignment}
A subtlety arises because, on successive dates, the same calendar maturity $T_i$ corresponds to a \emph{decreasing} time-to-maturity $\tau_i = T_i - t_\ell$. In the Musiela parametrization, forward rates are indexed by $\tau$ rather than $T$, and the change $\Delta f^j(t_\ell, \tau_i)$ must be computed at constant $\tau$ (constant time-to-maturity), which requires interpolation. For daily data and maturities beyond one year, the difference is negligible (one business day out of $\geq 252$), but for short maturities it should be handled explicitly.
\end{remark}

\subsubsection{Principal component analysis}
\label{subsec:pca}

For each block $j$, perform the eigendecomposition of the sample covariance matrix of mean-centered weekly changes:
\begin{equation}\label{eq:cov_decomp}
  \hat{\Sigma}^j = \frac{1}{\Delta t\,(L-2)}\,(\tilde{\mathbf{X}}^j)^\top \tilde{\mathbf{X}}^j = \sum_{k=1}^{n_j} \lambda^j_k\,\mathbf{v}^j_k\,(\mathbf{v}^j_k)^\top, \qquad \tilde{\mathbf{X}}^j_{\ell,i} = \mathbf{X}^j_{\ell,i} - \bar{X}^j_i,
\end{equation}
with $\bar X^j_i$ the column mean and $\Delta t = 1/52$~yr the weekly time step that annualizes the variance. The eigenvalues $\lambda^j_1 \geq \lambda^j_2 \geq \cdots \geq 0$ are reported in bp$^2$/yr and the eigenvectors $\mathbf{v}^j_k\in\mathbb{R}^{n_j}$ are unit norm. Empirical weekly increments have mean two orders of magnitude smaller than their standard deviation, so centering is a minor correction in practice; it is retained for conformance with the standard sample-covariance estimator.

We retain factors by parsimony: $m_N=3$, $m_R=2$, $m_S=2$, with the third nominal factor and second spread factor included only if each individually explains $\geq 3\%$ of the block variance. This rule reflects the standard \citet{litterman1991common} level/slope/curvature decomposition for rate blocks and a level-plus-slope decomposition for spread blocks. The cumulative variance shares for the retained set, reported in \Cref{subsec:dual_listed}, range from 82\% (CDI$^+$) to 90.5\% (real). A 95\% retention rule would require an additional fourth nominal factor (4.6\% incremental variance) and third real factor (6.6\%) without obvious term-structure interpretation; reconstruction RMSE varies monotonically with the factor count but none of the perturbations shifts OOS RMSE by more than 2~bp, so the baseline $(3,2,2)$ retention is not material for the headline wedge.

The retained eigenvectors $\mathbf{v}^j_1, \ldots, \mathbf{v}^j_{m_j}$, evaluated at the pillar maturities $\{\tau_i\}$, serve as empirical loading shapes $\sqrt{\lambda^j_k}\,v^j_{k,i}$ that the parametric fit below maps to a curve-specific amplitude matrix $A^j$.

\subsubsection{Parametric fitting of loading shapes}
\label{subsec:loading_fit}

Map the empirical PCA loadings to the parametric family by nonlinear least squares. For the nominal block:
\begin{equation}\label{eq:param_fit}
  \min_{A^N\in\mathbb{R}^{m_N\times n_g},\, b_2,\, b_3} \sum_{p=1}^{m_N} \sum_{i=1}^{n_N} \left[\sqrt{\lambda^N_p}\,v^N_{p,i} - \sum_{q=1}^{n_g} A^N_{p,q}\,g_q(\tau_i)\right]^2,
\end{equation}
where $v^N_{p,i}$ is the $i$-th component (across pillar maturities $\tau_i$) of the $p$-th eigenvector, and $\sqrt{\lambda^N_p}$ converts the unit-norm eigenvector into a volatility-scaled loading. Each row $A^N_{p,\cdot}$ allows factor $p$ to mix the $n_g=3$ basis shapes; the diagonal restriction $A^N_{p,q}=a^N_p\,\delta_{p,q}$ recovers the one-shape-per-factor fit but achieves $R^2$ noticeably below the unrestricted value reported in \Cref{tab:params_nominal}, so the unrestricted form is adopted in the calibration.

\begin{proposition}[Identification of decay parameters]
\label{prop:identification}
The decay parameters $b_2$ and $b_3$ are identified from the maturity profile of the second and third PCA loadings, respectively:
\begin{enumerate}[label=(\roman*)]
  \item $b_2$ controls the rate at which the slope factor decays with maturity. If the empirical second loading has half-life $\tau_{1/2}$ (i.e., $v^N_{2,i}$ reaches half its short-end value at $\tau_{1/2}$), then $b_2 \approx \ln 2 / \tau_{1/2}$.
  \item $b_3$ controls the peak location of the curvature factor: $g_3(\tau)$ peaks at $\tau^* = 1/b_3$. Matching $\tau^*$ to the empirical peak of the third PCA loading identifies $b_3$.
\end{enumerate}
These initial estimates serve as starting values for the joint optimization \eqref{eq:param_fit}.
\end{proposition}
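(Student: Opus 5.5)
The plan is to treat this as a heuristic identification statement rather than a sharp theorem. I will make precise what ``identified'' means through the first-order approximations $b_2\approx \ln 2/\tau_{1/2}$ and $b_3 = 1/\tau^*$. The starting point is the parametric form \eqref{eq:vol_decomp} with the shapes of \Cref{def:exp_vol}. It is also the fitting objective \eqref{eq:param_fit}, whose population version matches $\sqrt{\lambda^N_p}\,v^N_{p}(\tau)$ to $\sum_q A^N_{p,q} g_q(\tau)$. Throughout I will work in the idealized setting in which the diagonal restriction $A^N_{p,q}=a^N_p\delta_{p,q}$ holds approximately, after removal of the level component. In that setting the second loading is, up to a constant level offset, proportional to $g_2$, and the third is proportional to $g_3$.

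For part (i), I write the second empirical loading as $\ell_2(\tau) = c + a\,e^{-b_2\tau}$. Here the constant $c$ is the projection onto $g_1$ that the unrestricted $A^N$ allows. If $c=0$, the loading equals half its short-end value $\ell_2(0)=a$ exactly when $e^{-b_2\tau_{1/2}}=1/2$. This gives $b_2=\ln 2/\tau_{1/2}$ as an equality, not merely an approximation. When $c\neq 0$, I will measure the half-life on the demeaned-at-long-end profile $\ell_2(\tau)-\ell_2(\infty)$, which restores the same formula. This explains the ``$\approx$'': it records the level contamination and the fact that the profile is known only on the discrete pillar grid $\{\tau_i\}$.

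For part (ii), I differentiate $g_3(\tau)=\tau e^{-b_3\tau}$ to get $g_3'(\tau)=(1-b_3\tau)e^{-b_3\tau}$. The derivative vanishes only at $\tau^*=1/b_3$, with $g_3''(\tau^*)<0$, so the hump has a unique interior maximum. The map $b_3\mapsto 1/b_3$ is a bijection on $(0,\infty)$. Matching $\tau^*$ to the empirical peak of the third loading therefore pins $b_3$ down uniquely. Identification in the usual sense follows in both parts: the functions $\tau\mapsto e^{-b\tau}$ for distinct $b$, and likewise $\tau\mapsto \tau e^{-b\tau}$, are pairwise distinct on any set containing at least two positive pillars. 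Hence the population objective in \eqref{eq:param_fit} cannot be minimized at two different decay values when the loadings have the assumed form.

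The main obstacle is that the unrestricted $A^N$ lets each factor mix all three shapes. In that case the second and third PCA loadings need not be pure $g_2$ or $g_3$ profiles, and the half-life and peak readings are then only starting values. This is exactly how the statement frames them. I will handle it by stating the exact results under the diagonal (or level-offset-only) structure. I will then argue by continuity that small off-diagonal entries in $A^N$ perturb $\tau_{1/2}$ and $\tau^*$ only slightly, because the loading profiles depend smoothly on the entries of $A^N$. The resulting values are thus sensible initializations for the joint optimization \eqref{eq:param_fit}, and I will not claim a closed-form estimator beyond that.
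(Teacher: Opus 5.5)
Your two core computations are correct, and they are really all the statement rests on. Solving $e^{-b_2\tau_{1/2}}=\tfrac12$ gives (i), and $g_3'(\tau)=(1-b_3\tau)e^{-b_3\tau}$ vanishes only at the maximum $\tau^*=1/b_3$, which gives (ii). The paper gives no proof; the proposition is a recipe for starting values.

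The gap is in the idealization you use to make these readings \emph{exact} for the PCA loadings, because that idealization is internally inconsistent.
The retained loadings are mutually orthogonal on the pillar grid, while $g_1,g_2,g_3$ are all strictly positive there. So no choice of $A^N$, diagonal or not, makes the loadings proportional to the shapes.
In particular, the third loading cannot be proportional to $g_3$. It must be orthogonal to the first loading, which is one-signed (by Perron--Frobenius, whenever all pillar covariances are positive), so the third loading must change sign.
Take your most favourable case: first loading constant, second loading $c+a\,e^{-b_2\tau}$, and third loading $\alpha+\gamma g_3$ with an offset. Orthogonality of the second and third loadings then forces $g_3$ to be orthogonal to $g_2$ minus its grid mean. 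That is a non-generic condition, and it fails at the calibrated decays, since both vectors are positive on the short pillars where $g_3$ is concentrated.
Hence the third loading generically carries an $e^{-b_2\tau}$ component, which moves its peak away from $1/b_3$. The discrepancy is structural, not a small perturbation.

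Since the exact base case does not exist, your continuity argument has nothing to perturb from. It also presupposes small off-diagonal entries of $A^N$, whereas the reported fit in \Cref{tab:params_nominal} is far from diagonal. In the curvature row, for example, the $g_2$ term $-512.2\,e^{-b_2\tau}$ is comparable in size to the $g_3$ term, whose peak is about $446$.
A smaller point: if the half-life is measured from the shortest pillar $\tau_{\min}=0.25$, as ``short-end value'' suggests, even a pure exponential gives $b_2=\ln 2/(\tau_{1/2}-\tau_{\min})$.

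Your identification claim is also off target. In \eqref{eq:param_fit} the amplitudes are free and $g_1$ is present. So the question is whether the spans of $\{1,e^{-b_2\tau},\tau e^{-b_3\tau}\}$ for two different $(b_2,b_3)$ can fit the same loadings on the grid, and pairwise distinctness of unscaled exponentials on two pillars does not address that.
A zero-count bound would settle it on the eight nominal pillars, since a nonzero combination of $1$, $e^{-b_2\tau}$, $e^{-b_2'\tau}$, $\tau e^{-b_3\tau}$, $\tau e^{-b_3'\tau}$ has at most six real zeros. But that gives uniqueness only under an exact fit, which the reported $88.2\%$ parametric $R^2$ rules out.

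The repair is to claim less. State (i) and (ii) as exact facts about $g_2$ and $g_3$, and treat reading them off the empirical loadings purely as a heuristic initialization for \eqref{eq:param_fit}, which is all the proposition asserts.
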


The same procedure is applied to the real curve (with possibly shared $b_2, b_3$ if fitting quality permits) and to the spread curve (with decay parameter $c_2$).

\begin{remark}[Non-parametric alternative]
\label{rem:nonparametric}
If the parametric fit is unsatisfactory, for instance, if the empirical loading shapes deviate substantially from the exponential family, the PCA loadings may be retained directly as spline-interpolated functions of $\tau$. This forfeits the closed-form drift integrals of \Cref{prop:drift_integrals}, requiring numerical integration of $\Avec{j}(t,\tau) = \int_0^\tau \sigvec{j}(t,u)\dd u$ at each step. In our experience with Brazilian DI and real curves, the exponential family provides an adequate fit with $m_N = 3$ and $m_R = 2$ factors.
\end{remark}

\begin{figure}[ht]
\centering
\includegraphics[width=0.48\linewidth]{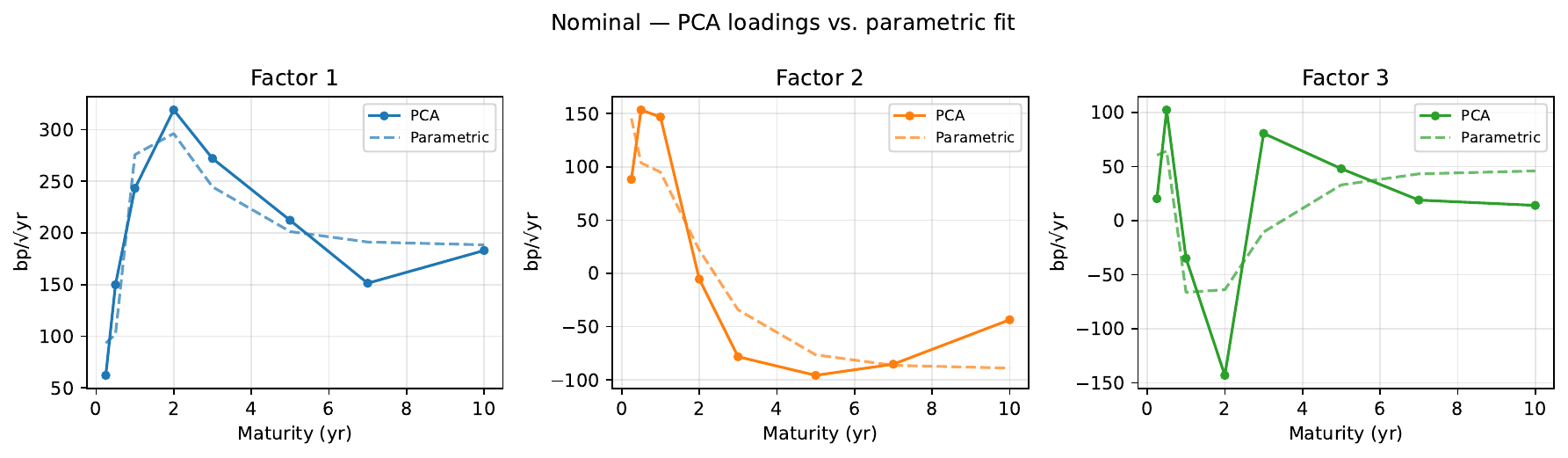}\hfill
\includegraphics[width=0.48\linewidth]{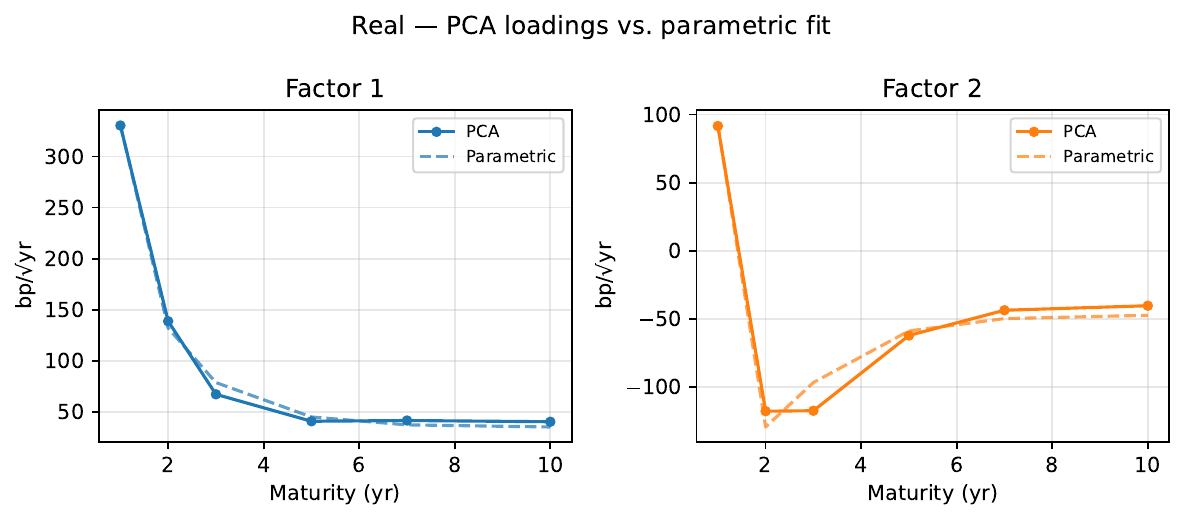}

\medskip

\includegraphics[width=0.48\linewidth]{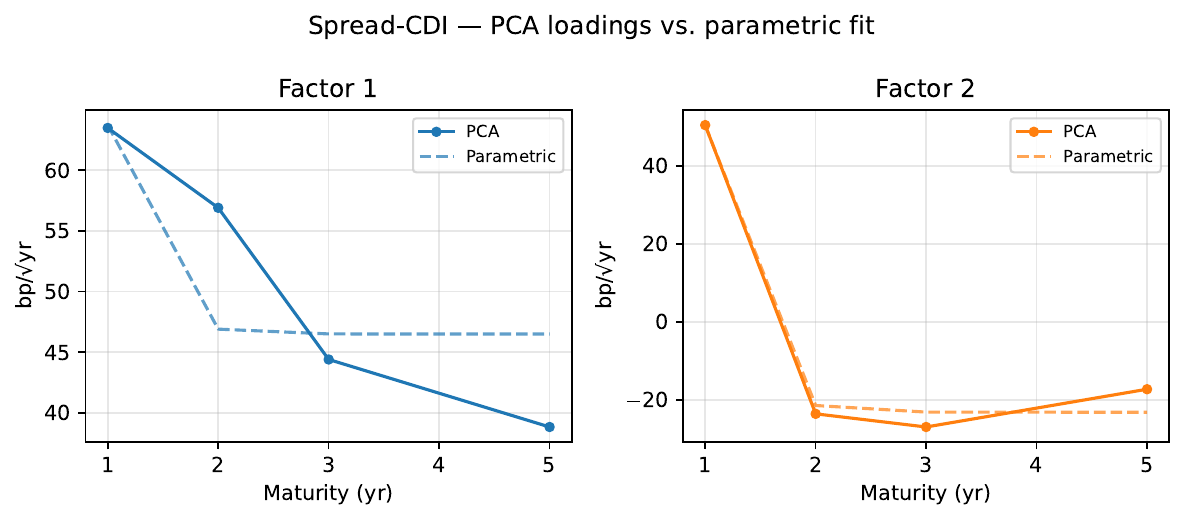}\hfill
\includegraphics[width=0.48\linewidth]{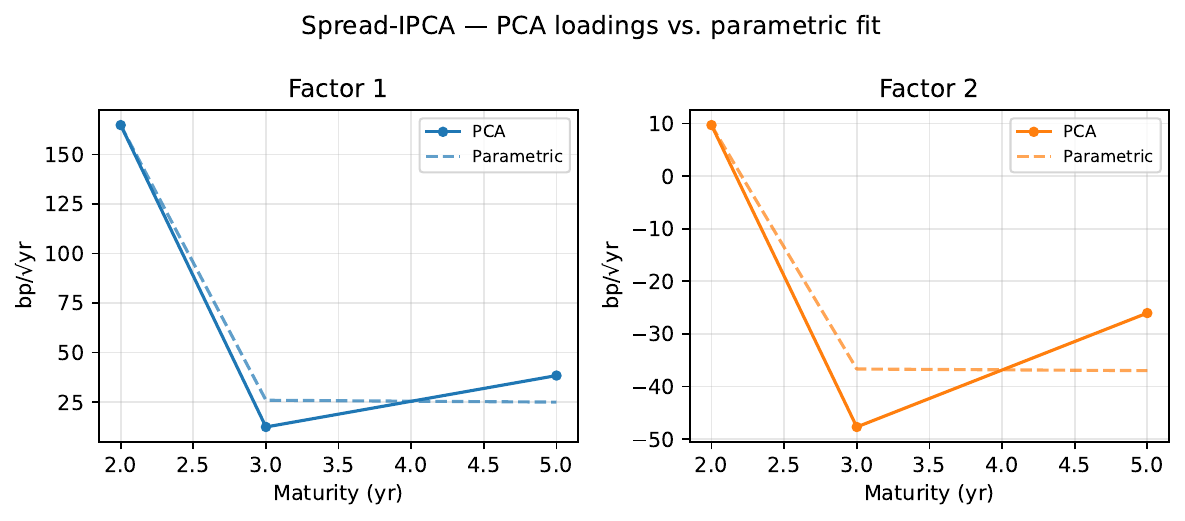}
\caption{Empirical PCA loadings $\sqrt{\lambda^j_k}\,v^j_k$ for the four factor blocks: nominal forwards (top-left), real forwards (top-right), CDI-linked credit spread (bottom-left), and IPCA-linked credit spread (bottom-right). Each panel shows the retained eigenvectors as functions of pillar maturity, after eigendecomposition of the covariance of weekly forward-rate changes over the 2021-01-04 to 2023-12-29 estimation window. Loadings are reported in bp/$\sqrt{\text{yr}}$. The nominal and real blocks exhibit the standard level/slope/curvature shape pattern; the spread blocks decompose into a level factor and a residual short-end factor.}
\label{fig:pca_loadings}
\end{figure}

\subsection{Calibration Step C: Cross-Block Correlation Estimation}
\label{sec:calib_corr}

\subsubsection{Factor innovations}
\label{subsec:factor_innovations}

Project the historical changes onto the retained principal components to obtain factor innovation series. For each block $j$ and date $t_\ell$:
\begin{equation}\label{eq:factor_innovations}
  \Delta Z^j_k(t_\ell) = \sum_{i=1}^{n_j} v^j_{k,i}\,\Delta X^j(t_\ell, \tau_i), \qquad k = 1, \ldots, m_j,
\end{equation}
where $\Delta X^j(t_\ell, \tau_i)$ denotes the forward-rate (or spread) change for block $j$ at maturity $\tau_i$ on date $t_\ell$.

\subsubsection{Joint covariance and correlation}
\label{subsec:joint_cov}

Stack the factor innovations into a combined vector:
\begin{equation}\label{eq:stacked}
  \mathbf{Z}(t_\ell) = \Big(\Delta Z^N_1, \ldots, \Delta Z^N_{m_N},\;\; \Delta Z^R_1, \ldots, \Delta Z^R_{m_R},\;\; \Delta Z^S_1, \ldots, \Delta Z^S_{m_S}\Big)^\top \in \mathbb{R}^m.
\end{equation}

Compute the sample covariance matrix:
\begin{equation}\label{eq:sample_cov}
  \hat{\Sigma} = \frac{1}{L-2}\sum_{\ell=1}^{L-1} \mathbf{Z}(t_\ell)\,\mathbf{Z}(t_\ell)^\top,
\end{equation}
and extract the correlation matrix $\hat{\rho}$ via standardization:
\begin{equation}\label{eq:corr_matrix}
  \hat{\rho}_{ij} = \frac{\hat{\Sigma}_{ij}}{\sqrt{\hat{\Sigma}_{ii}\,\hat{\Sigma}_{jj}}}.
\end{equation}

\subsubsection{Positive-definiteness correction}
\label{subsec:pd_correction}

If $\hat{\rho}$ is not positive definite, which can occur with short estimation windows or near-collinear factors, we apply the \citet{higham2002computing} alternating-projections algorithm to compute the nearest positive-definite correlation matrix:
\begin{equation}\label{eq:higham}
  \rho^* = \arg\min_{\rho \succ 0,\;\text{diag}(\rho)=\mathbf{1}} \|\rho - \hat{\rho}\|_F,
\end{equation}
where $\|\cdot\|_F$ denotes the Frobenius norm. The corrected matrix $\rho^*$ is used for simulation.

\subsubsection{Economically important correlations}
\label{subsec:econ_corr}

Several cross-block correlations have direct economic content and serve as sanity checks:
\begin{enumerate}[label=(\roman*)]
  \item \textbf{Rate-spread correlation.} The correlation between the nominal level factor $\Delta Z^N_1$ and the spread level factor $\Delta Z^S_1$ captures the ``flight to quality'' effect: negative correlation indicates that spreads widen when rates fall. In the Brazilian market, this correlation is typically between $-0.2$ and $-0.5$ during stress episodes and near zero during calm periods.
  \item \textbf{Nominal-real correlation.} The correlation between nominal and real level factors reflects the decomposition of rate moves into real-rate and inflation components. In Brazil, this is typically high ($> 0.7$) at the short end (where both curves are anchored by the Selic rate) and lower at longer maturities.
  \item \textbf{Real-spread correlation.} A negative correlation here implies that IPCA+ debenture spreads tighten when real rates rise, consistent with a ``reaching for yield'' dynamic in which investors accept lower credit compensation when real rates are attractive.
\end{enumerate}

\subsection{Calibration Step D: Exchange-Rate Volatility Estimation}
\label{sec:calib_fx}

Under the two-model framing (Chapter~3, Remark~3.5), the inflation FX $\sigma_I$ is a shared parameter, calibrated once from the realized dynamics of the common inflation index $I_t$. The credit FX is model-specific: Model A has $\sigma_J^A$ (associated with $J^A$, the CDI$^+$ credit-economy exchange rate), and Model B has $\sigma_J^B$ (associated with $J^B$, the IPCA$^+$ credit-economy exchange rate). Neither $\sigma_I$ nor $\sigma_J$ is observable directly, but each can be identified from realized volatility of an appropriate primary object.

\subsubsection{Inflation FX volatility $\sigma_I$ (shared)}
\label{subsec:sigI_calib}

Under $\mathbb{Q}^N$, $dI_t/I_t = (r^N_t - r^R_t)\,dt + \sigma_I \cdot dW^N_t$. Because Girsanov preserves diffusion coefficients, $|\sigma_I|^2$ equals the instantaneous variance of $d\ln I_t$ under the physical measure as well. We therefore identify $|\sigma_I|$ from realized monthly IPCA log-returns:
\begin{enumerate}[label=\textbf{D1.\arabic*}, leftmargin=2.5em]
  \item Collect monthly IPCA values $I_\ell$ from Banco Central do Brasil SGS series 433, align each observation to its reference month (correcting for the $\sim$15-day publication lag), and compute log-returns $r_\ell = \ln(1 + \pi_\ell)$.
  \item De-seasonalize $r_\ell$ by subtracting calendar-month means.
  \item Annualize the variance:
  \[
    |\widehat{\sigma_I}|^2 \;=\; 12 \cdot \widehat{\mathrm{Var}}(r_{\ell,\text{deseas}}).
  \]
  \item Identify the direction $\alpha^I \in \mathbb{R}^m$ by OLS regression of the deseasonalized monthly residuals on monthly-aggregated unit-variance factor innovations (summed weekly PC scores within each calendar month). The regression R$^2$ reports the fraction of realized inflation variance spanned by rate factors; the residual vol is the genuinely idiosyncratic inflation-specific component.
\end{enumerate}

This replaces the ``breakeven residual'' procedure of earlier drafts, under which $\sigma_I$ does not appear in the diffusion of $\beta = f^N - f^R$: that residual does not identify $|\sigma_I|^2$ in the model.

\subsubsection{Credit FX volatility (model-specific)}
\label{subsec:sigJ_calib}

Unlike $\sigma_I$, neither $J^A_t$ nor $J^B_t$ has a directly observable spot proxy, and $\sigma_J$ enters model observables only through the drift correction $-\sigvec{C}\cdot\sigJ$, which is $O(\Delta t)$ in the mean and $O(\Delta t^2)$ in the covariance. $\sigma_J$ is therefore structurally unidentified from curve-plus-spread covariance: the diffusion of $\Delta s^{CDI}$ and $\Delta s^{IPCA}$ is a linear combination of $\sigvec{N}, \sigvec{R}, \sigvec{S}$, none of which contains $\sigJ$. We adopt a \emph{structural identification convention} analogous to the Jarrow-Yildirim treatment of inflation: equate $|\sigJ|$ to the annualized realized vol of the corresponding short-pillar spread series,
\[
  |\widehat{\sigma_J^{A}}| \;:=\; \sqrt{52}\,\cdot\,\widehat{\mathrm{std}}\!\big(\Delta s^{CDI}(t, \tau_{\min})\big),
  \qquad
  |\widehat{\sigma_J^{B}}| \;:=\; \sqrt{52}\,\cdot\,\widehat{\mathrm{std}}\!\big(\Delta s^{IPCA}(t, \tau_{\min})\big),
\]
with the direction $\alpha^J$ obtained by OLS regression of the short-pillar spread change on unit-variance factor innovations. Under \ref{spar:credit}, most of the realized variance should load on the $S$-block factors, with a small residual N-factor loading capturing flight-to-quality co-movement.

\begin{remark}[$\sigma_J$ is set by convention, not estimated]
\label{rem:sigJ_convention}
\eqref{eq:sigI_spec} and the convention above pin $|\sigJ|$ as a model parameter, but the procedure is identification by convention, not estimation from data the model can discriminate. The numerical value affects pricing of credit-FX-sensitive contingent claims (e.g., quanto credit derivatives) but does not change the simulated $f^N$, $f^R$, or $s^{CDI}$ paths, nor the within-issuer wedge of \Cref{subsec:dual_listed}: those depend only on $\sigvec{N}, \sigvec{R}, \sigvec{S}, \sigI$. Should listed credit-derivative prices on JGP-universe names become available, $\sigJ$ could be estimated rather than imposed; the convention places it within the bounds suggested by short-pillar realized volatility.
\end{remark}

\begin{remark}[Why not covariance-matching]
\label{rem:why_not_cov_match}
An earlier draft proposed matching the model-implied $(\Delta s^{CDI}, \Delta s^{IPCA})$ covariance to its sample analog. That objective is degenerate in $\sigJ$ for the reason above: the spread covariance does not contain $\sigJ$. The realized-vol convention is the only route that pins $|\sigJ|$ to a finite value without invoking additional data (e.g., a CDS on a JGP-universe name).
\end{remark}

\subsection{Calibration Step E: Regularization and Stability}
\label{sec:regularization}

To prevent overfitting and ensure robust out-of-sample behavior, we impose the following regularization measures.

\subsubsection{Dimensionality control}
\label{subsec:dim_control}

The total number of factors $m = m_N + m_R + m_S$ should be kept small, typically $5 \leq m \leq 8$. A practical guideline is:
\begin{equation}\label{eq:dim_guide}
  m_N = 3, \qquad m_R = 2, \qquad m_S = 2,
\end{equation}
yielding $m = 7$. The third nominal factor (curvature) and the second spread factor (spread slope) are included only if they individually explain $\geq 3$\% of their block's variance; otherwise they are dropped.

\subsubsection{Shared decay parameters}
\label{subsec:shared_decay}

The decay parameters $b_2, b_3$ are shared across the nominal and real curve blocks. This constraint is motivated by the empirical observation that level/slope/curvature factor shapes are similar across nominal and real yield curves. It reduces the parameter count by 2 and improves estimation stability, particularly for the real curve where the pillar grid $\{T_i\}$ from NTN-B maturities is coarser than the DI strip.

\subsubsection{Loading smoothness penalty}
\label{subsec:smoothness}

When the parametric fit \eqref{eq:param_fit} is replaced by a non-parametric (spline-interpolated) loading function, we add a roughness penalty:
\begin{equation}\label{eq:smoothness_penalty}
  \min \sum_{k,i} \left[\sqrt{\lambda^j_k}\,v^j_{k,i} - \hat{\sigma}^j_k(\tau_i)\right]^2 + \eta \sum_k \int_0^{\tau_{max}} \left[\frac{d^2\hat{\sigma}^j_k(\tau)}{d\tau^2}\right]^2 d\tau,
\end{equation}
where $\eta > 0$ is a smoothness parameter chosen by leave-one-out cross-validation or the generalized cross-validation criterion \citep{wahba1990spline}.

\subsubsection{Out-of-sample validation via the triangle}
\label{subsec:oos_validation}

The triangle identity provides a natural out-of-sample check. At each calibration date in a hold-out period:
\begin{enumerate}[label=(\roman*)]
  \item Compute model-implied $s^{IPCA}_{model}(t, T) = s^{CDI}_{model}(t, T) + f^N_{model}(t, T) - f^R_{model}(t, T)$.
  \item Compare against observed $s^{IPCA}_{mkt}(t, T)$.
  \item Track the root-mean-square error:
  \begin{equation}\label{eq:rmse_triangle}
    \text{RMSE}_{triangle}(t) = \sqrt{\frac{1}{n}\sum_{i=1}^{n}\left[s^{IPCA}_{model}(t, T_i) - s^{IPCA}_{mkt}(t, T_i)\right]^2}.
  \end{equation}
\end{enumerate}

A well-calibrated model should produce $\text{RMSE}_{triangle}$ of the same order as the deterministic adjustment $\chi$ from Step~A (typically 2 to 8 basis points). Systematic deterioration signals model misspecification or parameter instability and should trigger recalibration.

\subsection{Simulation Algorithm}
\label{sec:simulation}

We discretize the three-curve system on a Musiela time-to-maturity grid. Let $\{\tau_0, \tau_1, \ldots, \tau_{N_\tau}\}$ be the maturity grid (with $\tau_0 = 0$ corresponding to the short rate) and $\Delta t$ the simulation time step.

\subsubsection{Musiela transport}
\label{subsec:musiela}

In the Musiela parametrization, forward rates are functions of $(t, \tau)$ where $\tau = T - t$. As time advances by $\Delta t$, each maturity point ``ages'': the maturity that was at $\tau$ is now at $\tau - \Delta t$. This transport effect must be accounted for at each step by interpolating the previous curve onto the shifted grid:
\begin{equation}\label{eq:transport}
  f^j(t, \tau - \Delta t) \approx f^j(t, \tau) - \frac{\partial f^j}{\partial \tau}(t, \tau)\,\Delta t,
\end{equation}
or, more robustly, by linear interpolation between adjacent grid points. The transport term is the continuous-time analogue of the roll-down effect familiar to fixed-income practitioners.

\subsubsection{Discretized dynamics}
\label{subsec:discrete_dynamics}

At each step $t \to t + \Delta t$:

\medskip
\noindent\textbf{Step 1. Generate correlated Brownian increments.}
\begin{equation}\label{eq:bm_gen}
  \Delta W^N = L\,\xi\,\sqrt{\Delta t}, \qquad \xi \sim \mathcal{N}(\mathbf{0}, \mathbf{I}_m),
\end{equation}
where $L$ is the Cholesky factor of the correlation matrix: $\rho^* = L L^\top$.

\medskip
\noindent\textbf{Step 2. Update nominal forward rates.}
For each maturity grid point $\tau_i$:
\begin{equation}\label{eq:update_N}
  f^N(t+\Delta t, \tau_i) = f^N\!(t, \tau_i + \Delta t) + \mu^N(t,\tau_i)\,\Delta t + \sigvec{N}(t,\tau_i)\cdot\Delta W^N,
\end{equation}
where $\mu^N(t,\tau_i) = \sum_{q,r=1}^{n_g} (A^{N\top} A^N)_{q,r}\,g_q(\tau_i)\,G_r(\tau_i)$ from \eqref{eq:drift_full_j}, and $f^N\!(t, \tau_i + \Delta t)$ is obtained by interpolation (the Musiela transport term).

\medskip
\noindent\textbf{Step 3. Update real forward rates} (with inflation drift correction).
\begin{equation}\label{eq:update_R}
  f^R(t+\Delta t, \tau_i) = f^R\!(t, \tau_i + \Delta t) + \Big[\mu^R(t,\tau_i) - \sigvec{R}(t,\tau_i)\cdot\sigI(t)\Big]\,\Delta t + \sigvec{R}(t,\tau_i)\cdot\Delta W^N,
\end{equation}
where $\mu^R(t,\tau_i) = \sum_{q,r=1}^{n_g} (A^{R\top} A^R)_{q,r}\,g_q(\tau_i)\,G_r(\tau_i)$.

\medskip
\noindent\textbf{Step 4. Update credit spread.}

If using the base-plus-spread decomposition:
\begin{equation}\label{eq:update_S}
  s^{CDI}(t+\Delta t, \tau_i) = s^{CDI}\!(t, \tau_i + \Delta t) + \mu^S(t,\tau_i)\,\Delta t + \sigvec{S}(t,\tau_i)\cdot\Delta W^N,
\end{equation}
where $\mu^S$ is the spread drift derived in \Cref{app:spread_drift}, eq.~\eqref{eq:spread_drift} (it includes the within-block term $\sigvec{S}\cdot\Avec{S}$, the cross-block terms $\sigvec{N}\cdot\Avec{S} + \sigvec{S}\cdot\Avec{N}$ proportional to $\bm{\rho}_{NS}$, and the credit-FX correction $-\sigvec{C}\cdot\sigJ$). The credit forward rate is then:
\begin{equation}\label{eq:fC_recover}
  f^C(t+\Delta t, \tau_i) = f^N(t+\Delta t, \tau_i) + s^{CDI}(t+\Delta t, \tau_i).
\end{equation}

\medskip
\noindent\textbf{Step 5. Update exchange rates} (frozen-rate exponential step).
\begin{align}
  I_{t+\Delta t} &= I_t\exp\!\left[\Big(r^N_t - r^R_t - \tfrac{1}{2}|\sigI(t)|^2\Big)\Delta t + \sigI(t)\cdot\Delta W^N\right], \label{eq:update_I} \\[4pt]
  J_{t+\Delta t} &= J_t\exp\!\left[\Big(r^N_t - r^C_t - \tfrac{1}{2}|\sigJ(t)|^2\Big)\Delta t + \sigJ(t)\cdot\Delta W^N\right], \label{eq:update_J} \\[4pt]
  K_{t+\Delta t} &= \frac{J_{t+\Delta t}}{I_{t+\Delta t}}. \label{eq:update_K}
\end{align}

Equations \eqref{eq:update_I}--\eqref{eq:update_J} replace the integrated short-rate differential $\int_t^{t+\Delta t}(r^N_s - r^R_s)\,ds$ with its frozen-coefficient evaluation $(r^N_t - r^R_t)\Delta t$; the step is exact in the lognormal sense conditional on this freezing, and preserves positivity of the exchange rates without the discretization bias of a direct Euler--Maruyama scheme on $I_t$ and $J_t$.

\medskip
\noindent\textbf{Step 6. Extract short rates and implied spreads.}
\begin{align}
  r^j_t &= f^j(t, 0), \qquad j \in \{N, R, C\}, \label{eq:short_rate} \\[4pt]
  s^{IPCA}(t, \tau_i) &= f^C(t, \tau_i) - f^R(t, \tau_i) = s^{CDI}(t,\tau_i) + f^N(t,\tau_i) - f^R(t,\tau_i). \label{eq:sIPCA_sim}
\end{align}

\subsubsection{Pseudocode}
\label{subsec:pseudocode}

\Cref{alg:simulation} presents the complete simulation loop in pseudocode.

\begin{algorithm}[ht]
\caption{Three-Currency HJM Simulation}
\label{alg:simulation}
\begin{algorithmic}[1]
\Require Initial curves $f^N_0(\tau),\,f^R_0(\tau),\,s^{CDI}_0(\tau)$; initial FX $I_0,\,J_0$
\Require Amplitude matrices $A^N, A^R, A^S$; decay parameters $b_2,b_3,c_2$; FX loadings $\sigI,\,\sigJ$
\Require Correlation Cholesky factor $L$ with $LL^\top = \bm{\rho}$; time step $\Delta t$; horizon $T_{sim}$
\Ensure Paths of $\{f^N, f^R, f^C, s^{CDI}, s^{IPCA}, I, J, K\}$
\State $N_{steps} \gets T_{sim}/\Delta t$
\For{$n = 0$ to $N_{steps}-1$}
  \State $t \gets n\,\Delta t$
  \State $r^N_n \gets f^N_n(0),\;\; r^R_n \gets f^R_n(0),\;\; r^C_n \gets f^N_n(0) + s^{CDI}_n(0)$ \Comment{Pre-step short rates}
  \State Draw $\xi \sim \mathcal{N}(\mathbf{0}, \mathbf{I}_m)$
  \State $\Delta W \gets L\,\xi\,\sqrt{\Delta t}$ \Comment{Correlated increments, $\mathrm{Cov}(\Delta W)=\bm{\rho}\,\Delta t$}
  \For{each maturity $\tau_i$ in grid}
    \State $f^N_{n+1}(\tau_i) \gets \textsc{Transport}(f^N_n, \tau_i, \Delta t) + \mu^N(\tau_i)\,\Delta t + \sigvec{N}(\tau_i)\cdot\Delta W$
    \State $f^R_{n+1}(\tau_i) \gets \textsc{Transport}(f^R_n, \tau_i, \Delta t) + [\mu^R(\tau_i) - \sigvec{R}(\tau_i)\cdot\sigI]\,\Delta t + \sigvec{R}(\tau_i)\cdot\Delta W$
    \State $s^{CDI}_{n+1}(\tau_i) \gets \textsc{Transport}(s^{CDI}_n, \tau_i, \Delta t) + \mu^S(\tau_i)\,\Delta t + \sigvec{S}(\tau_i)\cdot\Delta W$
  \EndFor
  \State $f^C_{n+1} \gets f^N_{n+1} + s^{CDI}_{n+1}$ \Comment{Recover credit curve}
  \State $I_{n+1} \gets I_n\exp\!\big[(r^N_n - r^R_n - \tfrac{1}{2}|\sigI|^2)\Delta t + \sigI\cdot\Delta W\big]$
  \State $J_{n+1} \gets J_n\exp\!\big[(r^N_n - r^C_n - \tfrac{1}{2}|\sigJ|^2)\Delta t + \sigJ\cdot\Delta W\big]$
  \State $K_{n+1} \gets J_{n+1}/I_{n+1}$
  \State $s^{IPCA}_{n+1} \gets s^{CDI}_{n+1} + f^N_{n+1} - f^R_{n+1}$ \Comment{Triangle identity}
\EndFor
\end{algorithmic}
\end{algorithm}

\subsection{Model Outputs and Diagnostics}
\label{sec:outputs}

A single run of \Cref{alg:simulation} produces, at each simulated date, the following objects:

\begin{enumerate}[label=(\roman*)]
  \item \textbf{Nominal forward-rate curve} $f^N(t, \tau)$ and the corresponding zero-coupon curve $P^N(t, T)$.
  \item \textbf{Real forward-rate curve} $f^R(t, \tau)$ and the corresponding real zero-coupon curve $P^R(t, T)$.
  \item \textbf{Inflation index} $I_t$, from which period-over-period inflation rates can be computed.
  \item \textbf{Corporate credit forward-rate curve} $f^C(t, \tau)$ and the credit zero-coupon curve $P^C(t, T)$.
  \item \textbf{CDI+ credit spread curve} $s^{CDI}(t, \tau) = f^C(t,\tau) - f^N(t,\tau)$.
  \item \textbf{IPCA+ credit spread curve} $s^{IPCA}(t, \tau) = f^C(t,\tau) - f^R(t,\tau)$.
  \item \textbf{Credit FX} $J_t$ and the \textbf{cross rate} $K_t = J_t / I_t$.
\end{enumerate}

\subsubsection{Diagnostic checks}
\label{subsec:diagnostics}

We recommend the following diagnostics to validate simulation quality:

\medskip
\noindent\textbf{D1. Initial curve reproduction.} At $t = 0$, the simulated curves must exactly equal the calibrated initial curves. This is guaranteed by construction of the HJM framework but should be verified numerically to catch implementation errors.

\medskip
\noindent\textbf{D2. Triangle consistency.} At every simulated step and maturity, verify:
\begin{equation}\label{eq:diag_triangle}
  \left|s^{IPCA}(t, \tau) - s^{CDI}(t, \tau) - \big[f^N(t,\tau) - f^R(t,\tau)\big]\right| < \epsilon,
\end{equation}
for a tight numerical tolerance $\epsilon$ (e.g., $10^{-10}$ in continuous-compounding terms). Any violation indicates a coding error, since the triangle is enforced algebraically by \eqref{eq:sIPCA_sim}.

\medskip
\noindent\textbf{D3. Martingale test.} Verify that the deflated foreign numeraires are martingales. For a set of $N_{sim}$ simulation paths, the sample mean of $I_T B^R_T / B^N_T$ at each horizon $T$ should be statistically indistinguishable from $I_0$:
\begin{equation}\label{eq:mart_test}
  \left|\frac{1}{N_{sim}}\sum_{n=1}^{N_{sim}}\frac{I^{(n)}_T B^{R,(n)}_T}{B^{N,(n)}_T} - I_0\right| \leq z_{\alpha/2}\,\frac{\hat{\sigma}}{\sqrt{N_{sim}}},
\end{equation}
where $z_{\alpha/2}$ is the standard normal critical value and $\hat{\sigma}$ is the sample standard deviation across paths. The analogous test applies to $J_T B^C_T / B^N_T$.

\medskip
\noindent\textbf{D4. Marginal statistics.} Compare simulated marginal distributions against historical data:
\begin{enumerate}[label=(\alph*)]
  \item Annualized volatility of forward-rate changes at key maturities (1y, 3y, 5y, 10y) for each curve.
  \item Annualized volatility of CDI+ and IPCA+ spread changes.
  \item Cross-correlations between rate and spread factor innovations.
  \item Distribution of simulated inflation rates versus historical CPI data.
\end{enumerate}

\medskip
\noindent\textbf{D5. Term structure smoothness.} At each simulated date, the forward-rate curves and spread curves should be smooth functions of maturity. Excessive oscillation, particularly at the long end of the grid, signals numerical instability in the Musiela transport step and may require grid refinement or smoothing.

\subsubsection{Computational considerations}
\label{subsec:computational}

The per-step computational cost is $\mathcal{O}(m\,N_\tau)$: for each of the $N_\tau$ maturity grid points, we compute $m$ factor contributions (each involving a closed-form $g_k(\tau)\,G_k(\tau)$ evaluation) and $m$ Brownian increments. The Cholesky factorization of $\rho^*$ is performed once and costs $\mathcal{O}(m^3)$. For $m = 7$ and $N_\tau = 50$, the per-step cost is negligible; a 10-year daily simulation ($\approx 2{,}520$ steps) with $10{,}000$ paths completes in seconds on commodity hardware.

Memory requirements are modest: each path stores $3 \times N_\tau$ forward-rate values plus $3$ exchange-rate values at each step, totaling $\mathcal{O}(N_{steps} \times N_\tau)$ per path. For the baseline parameters, a single path occupies approximately 4~MB of double-precision storage over a 10-year horizon.

\newpage

\section{Out-of-sample diagnostics}\label{app:oos}

All diagnostics use the out-of-sample window (2024-01-02 to 2026-02-05, 110 weekly observations), which was excluded from parameter estimation. Under the two-model framing the within-model triangle identity holds by construction at every simulation step, so there is no ``triangle RMSE'' to track. The diagnostics instead evaluate whether calibrated volatilities match OOS realized changes, whether in-sample correlations persist, and whether the 90\% Gaussian prediction intervals achieve their nominal coverage.

\subsection{Volatility reproduction}\label{subsec:oos_vol}

The rate blocks reproduce OOS realized volatility reasonably well: nominal is 80 to 90\% of model magnitude at short/medium maturities and real is within 10\% at all tenors, reflecting that the OOS window (2024-01 onward) was indeed calmer than the estimation window (which straddled the Selic peak). CDI$^+$ spread vol at the 1-year pillar \emph{underpredicts} the OOS realization by 30\%, traceable to the 2024 credit-fund redemption episode discussed below, while the 3- and 5-year pillars are in line with model.

\subsection{Correlation reproduction}\label{subsec:oos_corr}

The rate-block correlations are stable out-of-sample (difference $\leq 0.02$). The rate-spread correlation flips sign (from weakly negative to weakly positive) without changing magnitude meaningfully. The only large drift is in the within-spread level-slope correlation, which jumps from essentially zero in-sample to $+0.37$ OOS, driven by the 2024 event where CDI$^+$ short spreads widened sharply while long spreads remained stable, creating a temporarily steep curve shift. This is a known regime-specific pattern around credit-fund redemption episodes.

\subsection{One-week-ahead coverage rates}\label{subsec:coverage}

For each out-of-sample date, we compute the model-implied 90\% prediction interval for the 1-week-ahead change in each variable and check whether the realization falls inside.

\begin{table}[ht]
\centering
\caption{Empirical coverage of 90\% Gaussian prediction intervals (weekly horizon, out-of-sample, $n = 110$). Nominal target is 90\%. Values in the 85 to 95\% band are consistent with Gaussian null at conventional levels.}
\label{tab:coverage}
\begin{tabular}{@{}lcccc@{}}
\toprule
& $\tau = 1$y & $\tau = 3$y & $\tau = 5$y & $\tau = 7$y \\
\midrule
$\Delta f^N$       & 94.5\% & 98.2\% & 92.7\% & 90.9\% \\
$\Delta f^R$       & 91.8\% & 94.5\% & 75.5\% & 69.1\% \\
$\Delta s^{CDI}$   & 85.5\% & 92.7\% & 90.9\% & n/a \\
\bottomrule
\end{tabular}
\end{table}

Nominal-curve coverage is on target across maturities. Real-curve coverage is good at short maturities but drops sharply at $\tau \geq 5$~y; the model \emph{underpredicts} the OOS volatility there because 2024 to 2026 long-end NTN-B moves were larger than those in the estimation window. CDI$^+$ spread coverage is close to 90\% across the term structure. The distributional-statistics diagnostic attributes most of the coverage shortfall to heavy tails rather than bias.

Rates are close to Gaussian ($<2$ excess kurtosis). Credit spreads are heavier-tailed, with CDI$^+$ excess kurtosis ranging from 4 to 10. This is the main driver of residual coverage failure at the spread level and motivates stochastic-volatility or jump extensions to $\sigma^S$.

\subsubsection{Extreme-event analysis}
\label{subsec:extreme_events}

Rather than enumerate individual dates, we summarize the extreme-event diagnostic via the realized-to-model vol ratio at the 1-year CDI$^+$ pillar, which is the canonical "credit stress" measure: $\sigma^{OOS}_{realized} / \sigma^{in\text{-}sample}_{model} = 115.1 / 81.1 = 1.42$. The OOS realization is therefore 42\% more volatile than the in-sample calibration predicts at the 1-year CDI$^+$ pillar, a large but not catastrophic miss concentrated entirely in the mid-2024 credit-fund redemption episode. At the 3- and 5-year CDI$^+$ pillars, model and realized vol agree to within 10\%.

\subsubsection{Simulation diagnostics (D3 martingale test, D5 smoothness)}
\label{subsec:sim_diagnostics}

We report numerical realizations of the D3 and D5 diagnostics defined in \Cref{subsec:diagnostics}. The simulation is initialized at the last in-sample business day (2023-12-28) from the calibrated Model~A parameters, run weekly for five years with $N_{sim} = 2000$ paths under $\QN$, with $\chi(\tau) \equiv 0$, absorbing the Step~A reconciliation into the initial $f^C$ curve, which has no effect on either martingale test.

\medskip
\noindent\textbf{D3. Martingale test.} \Cref{tab:d3_martingale} reports sample means of the deflated foreign numeraires $I_T B^R_T / B^N_T$ and $J_T B^C_T / B^N_T$ against their $t=0$ values of unity, at horizons $T \in \{0.25, 0.5, 1, 2, 3, 5\}$ years. Biases are reported in basis points of relative deviation; $z$-scores use the Monte Carlo standard error. All $|z|$ values are well below the 1.96 threshold for a two-sided $5\%$ test, with the maximum across both ratios and all horizons being $|z|_{\max} = 0.90$. The simulation discretization (exponential update for $I, J$; Euler-left-endpoint accumulation of the money-market integrals) therefore preserves the martingale property to within Monte Carlo noise, confirming that the Girsanov-drift corrections in the HJM system are implemented consistently.

\begin{table}[ht]
\centering
\begin{tabular}{lcccc}
\toprule
$T$ (yr) & sample mean & bias (bp) & MC s.e. (bp) & $z$ \\
\midrule
\multicolumn{5}{l}{\emph{Inflation: }$I_T B^R_T / B^N_T$, $I_0 = 1$}\\
0.25 & 1.000031 & +0.31 & 0.77 & +0.40 \\
0.50 & 1.000030 & +0.30 & 1.09 & +0.27 \\
1.00 & 0.999992 & -0.08 & 1.54 & -0.05 \\
2.00 & 1.000065 & +0.65 & 2.18 & +0.30 \\
3.00 & 0.999856 & -1.44 & 2.68 & -0.54 \\
5.00 & 0.999951 & -0.49 & 3.50 & -0.14 \\
\midrule
\multicolumn{5}{l}{\emph{Credit: }$J_T B^C_T / B^N_T$, $J_0 = 1$}\\
0.25 & 1.000068 & +0.68 & 0.91 & +0.75 \\
0.50 & 1.000114 & +1.14 & 1.26 & +0.90 \\
1.00 & 1.000143 & +1.43 & 1.81 & +0.79 \\
2.00 & 1.000161 & +1.61 & 2.63 & +0.61 \\
3.00 & 1.000096 & +0.96 & 3.27 & +0.29 \\
5.00 & 1.000063 & +0.63 & 4.23 & +0.15 \\
\bottomrule
\end{tabular}

\caption{D3 martingale test. Sample means of $I_T B^R_T / B^N_T$ and $J_T B^C_T / B^N_T$ under $\QN$, compared with $t=0$ values of unity. Biases in basis points; $z = \text{bias}/\text{s.e.}$ $N_{sim} = 2000$ paths over five years with weekly time steps.}
\label{tab:d3_martingale}
\end{table}

\medskip
\noindent\textbf{D5. Term-structure smoothness.} As a scale-invariant measure of oscillation, we compute the normalized discrete second difference $|\Delta^2 f^j(t, \tau_i)| / (\Delta\tau_i^2 \cdot \max_\tau |f^j(t, \tau)|)$ at each interior grid point, across all simulated (path, date) pairs after $t_0$, where $\Delta\tau_i^2 = (\tau_{i+1}-\tau_i)(\tau_i-\tau_{i-1})$. The diagnostic compares simulated curvature with the curvature already present in the initial curve. In our run the long-end ($\tau \in (2, 10]$~yr) ratio remains below 2 for all five curves and below 1 for $f^N$, $f^R$, and $s^{CDI}$; the short-end amplification of 2 to 3.5 is consistent with stochastic diffusion of the parametric loading shapes.

The long end ($\tau \in (2, 10]$~yr), where Musiela-transport instability would first appear, has a ratio of simulated p95 to initial baseline below $2$ for all five curves and below $1$ for $f^N$, $f^R$, and $s^{CDI}$. The short end ($\tau \in [0.25, 2]$~yr) shows ratios of $2$ to $3.5$, which is consistent with ordinary stochastic amplification. Extreme values reach at most $\sim 7$ for $f^R$ and $s^{IPCA}$ in the short-end window; none represent numerical instability.

Both D3 and D5 are within the stated numerical tolerances.


\section{Detailed parameter tables}\label{app:tables}

This appendix collects three sets of supporting material: the closed-form HJM drift integrals under the exponential volatility family of \Cref{def:exp_vol}, the spread-curve drift correction under the base-plus-spread decomposition used in simulation, and the detailed calibrated parameter tables referenced throughout the paper.

\subsection{Closed-Form Drift Integrals}\label{app:drift_integrals}

For a volatility architecture based on the exponential shape functions
\[
  g_1(\tau) = 1, \qquad g_2(\tau) = e^{-b_2\tau}, \qquad g_3(\tau) = \tau\,e^{-b_3\tau},
\]
the cumulative volatility integrals $G_k(\tau) = \int_0^\tau g_k(s)\dd s$ are:
\begin{align}
  G_1(\tau) &= \tau, \label{eq:G1} \\[4pt]
  G_2(\tau) &= \frac{1 - e^{-b_2\tau}}{b_2}, \label{eq:G2} \\[4pt]
  G_3(\tau) &= \frac{1 - (1+b_3\tau)\,e^{-b_3\tau}}{b_3^2}. \label{eq:G3}
\end{align}

For a curve $j$ with amplitude matrix $A^j\in\mathbb{R}^{m_j\times n_g}$, the within-block HJM drift is the bilinear form
\begin{equation}\label{eq:hjm_drift_cf}
  \mu^j(t,\tau) = \sigvec{j}(t,\tau)\cdot\Avec{j}(t,\tau) = \sum_{q,r=1}^{n_g}\big[A^j(t)^\top A^j(t)\big]_{q,r}\,g_q(\tau)\,G_r(\tau),
\end{equation}
which collapses to $\sum_p (a^j_p)^2 g_p(\tau)G_p(\tau)$ in the diagonal special case $A^j_{p,q}=a^j_p\,\delta_{p,q}$. The foreign-curve correction picks up the FX loading $\alpha^I\in\mathbb{R}^m$ through the inner-product convention of \Cref{rem:inner}; for the real curve,
\begin{equation}\label{eq:foreign_corr_cf}
  \sigvec{R}(t,\tau)\cdot\sigI(t) = \sum_{p=1}^{m_R}\bigg[\sum_{q=1}^{n_g} A^R_{p,q}\,g_q(\tau)\bigg]\alpha^I_{m_N+p} \;+\; \text{(cross-block }\bm{\rho}\text{ corrections)},
\end{equation}
where the cross-block term is non-zero only if $\sigI$ has loadings outside the R-block. These closed-form expressions remove the need for numerical quadrature within each simulation time step.


\subsection{Spread Drift Under the Base-Plus-Spread Decomposition}
\label{app:spread_drift}

When modeling $s^{CDI}$ directly (rather than $f^C$), the HJM drift of the spread process must be derived from the drifts of $f^C$ and $f^N$ under $\QN$.

From the system \eqref{eq:system}:
\begin{align*}
  df^C &= \Big[\sigvec{C}\cdot\Avec{C} - \sigvec{C}\cdot\sigJ\Big]\dd t + \sigvec{C}\cdot dW^N, \\
  df^N &= \sigvec{N}\cdot\Avec{N}\dd t + \sigvec{N}\cdot dW^N.
\end{align*}

With $\sigvec{C} = \sigvec{N} + \sigvec{S}$, the credit-curve drift integral becomes:
\begin{align}
  \Avec{C}(t,\tau) &= \int_0^\tau \sigvec{C}(t,u)\dd u = \Avec{N}(t,\tau) + \Avec{S}(t,\tau), \label{eq:AC_decomp}
\end{align}
where $\Avec{S}(t,\tau) = \int_0^\tau \sigvec{S}(t,u)\dd u$. Therefore:
\begin{align}
  \sigvec{C}\cdot\Avec{C} &= (\sigvec{N}+\sigvec{S})\cdot(\Avec{N}+\Avec{S}) \nonumber \\
  &= \sigvec{N}\cdot\Avec{N} + \sigvec{N}\cdot\Avec{S} + \sigvec{S}\cdot\Avec{N} + \sigvec{S}\cdot\Avec{S}. \label{eq:sigC_AC_expand}
\end{align}

The spread drift is $ds^{CDI} = df^C - df^N$, so:
\begin{equation}\label{eq:spread_drift}
  \boxed{\begin{aligned}
  \mu^S(t,\tau)={}&
  \sigvec{N}(t,\tau)\cdot\Avec{S}(t,\tau)
  + \sigvec{S}(t,\tau)\cdot\Avec{N}(t,\tau) \\
  &+ \sigvec{S}(t,\tau)\cdot\Avec{S}(t,\tau)
  - \sigvec{C}(t,\tau)\cdot\sigJ(t).
  \end{aligned}}
\end{equation}

The first three terms are the ``HJM drift of the spread,'' arising from the no-arbitrage requirement on the credit curve. The fourth is the credit-FX correction. The canonical-basis decomposition assigns nominal and spread factors to non-overlapping blocks of $\mathbb{R}^m$, so the cross-block products $\sigvec{N}\cdot\Avec{S}$ and $\sigvec{S}\cdot\Avec{N}$ reduce to $\sigvec{N}^\top\bm{\rho}_{NS}\Avec{S}$ and $\sigvec{S}^\top\bm{\rho}_{SN}\Avec{N}$ under the inner-product convention of \Cref{rem:inner}. They vanish identically only if $\bm{\rho}_{NS} = \mathbf{0}$, which is not the case empirically. With the calibrated $A^N$, $A^S$ and $\bm{\rho}$ of Model~A, the cross-block sum $|\sigvec{N}\cdot\Avec{S} + \sigvec{S}\cdot\Avec{N}|$ is small at the very short end (around $3\%$ of the within-block term at $\tau=0.5$~yr), grows through the medium maturities (cross-block magnitude exceeds within-block by factors of $\sim 4$--$5$ over $\tau \in [2, 5]$~yr because the within-block term passes through near-cancellations between the $h_1$ and $h_2$ shape contributions while the cross-block picks up the much larger nominal amplitudes $A^N$), and recedes to under $25\%$ for $\tau \geq 7$~yr. At $\tau=3$~yr in the calibrated model, dropping the cross-block terms would flip the sign of $\mu^S$. The simulation engine therefore evaluates \eqref{eq:spread_drift} in full; the within-block-only form
\begin{equation}\label{eq:spread_drift_simplified}
  \mu^S(t,\tau) \;\stackrel{?}{\approx}\; \sigvec{S}(t,\tau)\cdot\Avec{S}(t,\tau) - \sigvec{C}(t,\tau)\cdot\sigJ(t)
\end{equation}
is shown only to make the structure of the four contributions explicit; it is not adopted numerically and does not constitute a valid approximation in this calibration.

With the parametric spread shape functions $h_1(\tau)=1$ and $h_2(\tau)=e^{-c_2\tau}$ and closed-form integrals
\begin{equation}\label{eq:HS_closed}
  H_1(\tau) = \tau, \qquad H_2(\tau) = \frac{1-e^{-c_2\tau}}{c_2},
\end{equation}
the within-block spread drift becomes the bilinear form
\begin{equation}\label{eq:spread_drift_analytic}
  \sigvec{S}\cdot\Avec{S} = \sum_{q,r=1}^{n_h}\big[A^{S\top} A^S\big]_{q,r}\,h_q(\tau)\,H_r(\tau),
\end{equation}
which collapses to $\sum_p (a^S_p)^2 h_p(\tau) H_p(\tau)$ in the diagonal special case. The cross-block terms $\sigvec{N}\cdot\Avec{S}$ and $\sigvec{S}\cdot\Avec{N}$ in \eqref{eq:spread_drift} are computed analogously, with $\bm{\rho}_{NS}$ inserted between the rate and spread amplitude rows. The credit-FX correction $[\sigvec{N}+\sigvec{S}]\cdot\sigJ$ uses the sparsity structure of $\sigJ$: if $\sigJ$ loads only on the spread block and the nominal level factor, only those terms contribute.


\subsection{Calibrated Parameters: Detailed Tables}
\label{app:detailed_params}

This appendix collects the per-block PCA decomposition, parametric loading fits, cross-block factor correlations, exchange-rate volatility identification, and the out-of-sample volatility and correlation comparisons that are summarized in the body.

\begin{table}[ht]
\centering
\caption{Parameter count for the baseline specification ($m_N=3$, $m_R=2$, $m_S=2$, $n_g=3$ rate shapes, $n_h=2$ spread shapes).}
\label{tab:params}
\begin{tabular}{@{}lcc@{}}
\toprule
\textbf{Parameter block} & \textbf{Symbol} & \textbf{Count} \\
\midrule
Nominal amplitude matrix & $A^N\in\mathbb{R}^{m_N\times n_g}$ & 9 \\
Real amplitude matrix & $A^R\in\mathbb{R}^{m_R\times n_g}$ & 6 \\
Spread amplitude matrix & $A^S\in\mathbb{R}^{m_S\times n_h}$ & 4 \\
Rate decay parameters & $b_2, b_3$ & 2 \\
Spread decay parameter & $c_2$ & 1 \\
Inflation FX loadings & $\alpha^I_k$ (sparse) & 3 \\
Credit FX loadings & $\alpha^J_k$ (sparse) & 3 \\
Cross-block correlations & $\rho_{ij}$ off-diagonal & $\leq 15$ \\
\midrule
\textbf{Total} & & $\leq 43$ \\
\bottomrule
\end{tabular}
\end{table}

\begin{table}[ht]
\centering
\caption{Nominal curve PCA: eigenvalues (bp$^2$/yr) and variance explained, 8 pillars $\in \{0.25, 0.5, 1, 2, 3, 5, 7, 10\}$ years.}
\label{tab:pca_nominal}
\begin{tabular}{@{}lccc@{}}
\toprule
\textbf{Factor} & \textbf{Eigenvalue} & \textbf{\% Variance} & \textbf{Cumulative \%} \\
\midrule
$N_1$ (level)      & 362{,}589 & 63.5\% & 63.5\% \\
$N_2$ (slope)      &  77{,}330 & 13.5\% & 77.0\% \\
$N_3$ (curvature)  &  41{,}972 &  7.3\% & 84.3\% \\
$N_4$              &  26{,}305 &  4.6\% & 88.9\% \\
\bottomrule
\end{tabular}
\end{table}

\begin{table}[ht]
\centering
\caption{Estimated nominal-block parameters. Amplitudes in bp/$\sqrt{\text{yr}}$; decays in yr$^{-1}$. Shared decays with the real block.}
\label{tab:params_nominal}
\scriptsize
\resizebox{\linewidth}{!}{%
\begin{tabular}{@{}lcccc@{}}
\toprule
Factor & $A^N_{k,1}$ ($g_1$) & $A^N_{k,2}$ ($g_2$) & $A^N_{k,3}$ ($g_3$) & Interpretation \\
\midrule
$k=1$ & 188.1 &  509.4 & $-4902.6$ & level-dominated \\
$k=2$ & $-89.2$ & 494.2 & $-1671.5$ & slope-dominated \\
$k=3$ &  46.3 & $-512.2$ & 4164.4 & curvature-dominated \\
\midrule
Shared decays & \multicolumn{4}{l}{$b_2 = 0.730$~yr$^{-1}$, \quad $b_3 = 3.436$~yr$^{-1}$ (half-life 0.95~yr; curvature peak at 0.29~yr)} \\
Parametric fit $R^2$ & \multicolumn{4}{l}{88.2\%} \\
\bottomrule
\end{tabular}}
\end{table}

\begin{table}[ht]
\centering
\caption{Real curve PCA: eigenvalues (bp$^2$/yr) and variance explained. Restricted to pillars $\{1, 2, 3, 5, 7, 10\}$ years after the short-end maturity filter.}
\label{tab:pca_real}
\begin{tabular}{@{}lccc@{}}
\toprule
\textbf{Factor} & \textbf{Eigenvalue} & \textbf{\% Variance} & \textbf{Cumulative \%} \\
\midrule
$R_1$ (level)  & 137{,}979 & 68.9\% & 68.9\% \\
$R_2$ (slope)  &  43{,}460 & 21.7\% & 90.5\% \\
$R_3$          &  13{,}264 &  6.6\% & 97.2\% \\
\bottomrule
\end{tabular}
\end{table}

\begin{table}[ht]
\centering
\caption{Estimated real factor loadings (shared $b_2, b_3$).}
\label{tab:params_real}
\scriptsize
\resizebox{\linewidth}{!}{%
\begin{tabular}{@{}lcccl@{}}
\toprule
Factor & $A^R_{k,1}$ ($g_1$) & $A^R_{k,2}$ ($g_2$) & $A^R_{k,3}$ ($g_3$) & Interpretation \\
\midrule
$k=1$ &  35.1 &  387.9 & 3375.6 & level-dominated \\
$k=2$ & $-47.0$ & $-454.1$ & 11125.9 & slope-dominated \\
\midrule
Parametric fit $R^2$ & \multicolumn{4}{l}{99.5\% (shared $b_2, b_3$ with nominal)} \\
\bottomrule
\end{tabular}}
\end{table}

\begin{table}[ht]
\centering
\caption{Spread PCAs: eigenvalues (bp$^2$/yr) and variance explained, by credit-economy model. Model A uses pillars $\{1, 2, 3, 5\}$~yr; Model B uses $\{2, 3, 5\}$~yr because the 1-year IPCA$^+$ vertex is structurally sparse (Section~\ref{subsec:data_jgp}), which would otherwise collapse the joint estimation panel to fewer than 30 weeks.}
\label{tab:pca_spread}
\scriptsize
\resizebox{\linewidth}{!}{%
\begin{tabular}{@{}l rcr rcr@{}}
\toprule
 & \multicolumn{3}{c}{\textbf{Model A (CDI$^+$)}} & \multicolumn{3}{c}{\textbf{Model B (IPCA$^+$)}} \\
\cmidrule(lr){2-4}\cmidrule(lr){5-7}
Factor & Eigenvalue & \% Var & Cum.\ \% & Eigenvalue & \% Var & Cum.\ \% \\
\midrule
$S_1$ (level) & 10{,}748 & 59.5\% & 59.5\% & 28{,}781 & 86.7\% & 86.7\% \\
$S_2$ (slope) &  4{,}118 & 22.8\% & 82.4\% &  3{,}046 &  9.2\% & 95.9\% \\
$S_3$         &  2{,}550 & 14.1\% & 96.5\% &  1{,}358 &  4.1\% & 100.0\% \\
\bottomrule
\end{tabular}}
\end{table}

\begin{table}[ht]
\centering
\caption{Estimated spread-block parameters. Model A is fitted to the two-factor basis $\{h_1(\tau), h_2(\tau)\} = \{1, e^{-c_2 \tau}\}$. Model B is reported as empirical PCA loadings $\sqrt{\lambda^{S,B}_k}\,v^{S,B}_k(\tau)$ at each pillar; we do not impose a parametric exponential decay because at the available pillar set $\{2, 3, 5\}$~yr the basis $\{1, e^{-c_2\tau}\}$ degenerates as $c_2$ rises (the search drives $c_2$ to the feasibility upper bound). The empirical loadings are used directly in the simulation, with linear interpolation between vertices.}
\label{tab:params_spread}
\scriptsize
\resizebox{\linewidth}{!}{%
\begin{tabular}{@{}l ccc ccc@{}}
\toprule
 & \multicolumn{3}{c}{\textbf{Model A (CDI$^+$)}} & \multicolumn{3}{c}{\textbf{Model B (IPCA$^+$), empirical}} \\
\cmidrule(lr){2-4}\cmidrule(lr){5-7}
Factor & $A^{S,A}_{k,1}$ ($h_1$) & $A^{S,A}_{k,2}$ ($h_2$) & & loading at $\tau=2$ & loading at $\tau=3$ & loading at $\tau=5$ \\
\midrule
$k=1$  &  46.5 & 732.0  & &  164.8 &   12.5 &  38.4 \\
$k=2$  & $-23.1$ & 3128.5 & &    9.7 & $-47.7$ & $-26.0$ \\
\midrule
Shape parameter & \multicolumn{3}{l}{$c_2^A = 3.75$~yr$^{-1}$} & \multicolumn{3}{l}{(no parametric fit, see caption)} \\
Parametric fit $R^2$ & \multicolumn{3}{l}{97.9\%} & \multicolumn{3}{l}{---} \\
\bottomrule
\end{tabular}}
\end{table}

\begin{table}[ht]
\centering
\caption{Factor correlation matrix $\hat{\rho}^A$ for Model A (CDI$^+$). Lower triangle; upper triangle omitted. Minimum eigenvalue: 0.198 (positive definite; no Higham correction required). Block-bootstrap standard errors (B=500, block size 4 weeks) for selected cross-block entries reported in \Cref{tab:bootstrap_se}.}
\label{tab:corr_cdi}
\scriptsize
\resizebox{\linewidth}{!}{%
\renewcommand{\arraystretch}{1.15}
\begin{tabular}{@{}l*{7}{r}@{}}
\toprule
& $N_1$ & $N_2$ & $N_3$ & $R_1$ & $R_2$ & $S^{CDI}_1$ & $S^{CDI}_2$ \\
\midrule
$N_1$         & 1.00 \\
$N_2$         & $-0.05$ & 1.00 \\
$N_3$         & $-0.06$ & $-0.11$ & 1.00 \\
$R_1$         &   0.54 &   0.24 & $-0.08$ & 1.00 \\
$R_2$         & $-0.56$ &   0.27 &   0.17 &   0.00 & 1.00 \\
$S^{CDI}_1$   & $-0.06$ & $-0.03$ &   0.10 & $-0.23$ &   0.01 & 1.00 \\
$S^{CDI}_2$   &   0.10 &   0.01 & $-0.02$ &   0.18 & $-0.09$ &   0.00 & 1.00 \\
\bottomrule
\end{tabular}}
\end{table}

\begin{table}[ht]
\centering
\caption{Factor correlation matrix $\hat{\rho}^B$ for Model B (IPCA$^+$). Lower triangle; upper triangle omitted. Minimum eigenvalue: 0.170 (positive definite; no Higham correction required). Model B uses spread pillars $\{2, 3, 5\}$~yr (1-year vertex dropped because its 27-week coverage would otherwise reduce the joint innovation panel below 30 weeks); the joint panel here is 66 weeks.}
\label{tab:corr_ipca}
\scriptsize
\resizebox{\linewidth}{!}{%
\renewcommand{\arraystretch}{1.15}
\begin{tabular}{@{}l*{7}{r}@{}}
\toprule
& $N_1$ & $N_2$ & $N_3$ & $R_1$ & $R_2$ & $S^{IPCA}_1$ & $S^{IPCA}_2$ \\
\midrule
$N_1$          & 1.00 \\
$N_2$          & $-0.22$ & 1.00 \\
$N_3$          & $-0.14$ & $-0.32$ & 1.00 \\
$R_1$          &   0.67 & $-0.22$ & $-0.09$ & 1.00 \\
$R_2$          & $-0.63$ &   0.19 &   0.13 & $-0.21$ & 1.00 \\
$S^{IPCA}_1$   &   0.05 &   0.07 & $-0.04$ &   0.00 & $-0.08$ & 1.00 \\
$S^{IPCA}_2$   & $-0.02$ & $-0.06$ & $-0.04$ &   0.02 & $-0.06$ &   0.00 & 1.00 \\
\bottomrule
\end{tabular}}
\end{table}

\begin{table}[ht]
\centering
\caption{Block-bootstrap standard errors and 95\% percentile intervals for key calibrated quantities (Model~A unless noted). $B=500$ resamples, block length 4 weeks. Eigenvalues reported as $\sqrt{\lambda_k}$ in bp/$\sqrt{\text{yr}}$ (sign-invariant amplitude); decays in yr$^{-1}$; $\sigma_I$ and $\sigma_J$ from realized vol; correlations from the bootstrap factor-projection matrix.}
\label{tab:bootstrap_se}
\scriptsize
\resizebox{\linewidth}{!}{%
\begin{tabular}{@{}l r r r r@{}}
\toprule
Quantity & Point & SE & 2.5\% & 97.5\% \\
\midrule
$\sqrt{\lambda^N_1}$ (level)        & 602.2 &  55.2 & 506.2 & 714.6 \\
$\sqrt{\lambda^N_2}$ (slope)        & 278.1 &  25.5 & 238.3 & 337.4 \\
$\sqrt{\lambda^N_3}$ (curvature)    & 204.9 &  25.0 & 164.3 & 256.6 \\
$\sqrt{\lambda^R_1}$                & 371.5 &  41.5 & 298.3 & 457.7 \\
$\sqrt{\lambda^R_2}$                & 208.5 &  12.7 & 179.2 & 230.7 \\
$\sqrt{\lambda^{S,A}_1}$            & 103.7 &  20.2 &  65.8 & 144.9 \\
$\sqrt{\lambda^{S,A}_2}$            &  64.2 &   9.1 &  44.8 &  80.3 \\
$\sqrt{\lambda^{S,B}_1}$ ($\{2,3,5\}$y) & 169.6 &  64.3 &  74.6 & 278.2 \\
$\sqrt{\lambda^{S,B}_2}$ ($\{2,3,5\}$y) &  55.2 &   6.4 &  39.9 &  64.5 \\
$b_2$                                &  0.73 &  0.66 &  0.51 &  2.07 \\
$b_3$                                &  3.44 &  1.14 &  0.88 &  4.32 \\
$c_2^A$                              &  3.75 &  1.80 &  0.58 &  5.00 \\
$|\sigma_J^A|$                       &  80.7 & 12.6  &  55.6 & 103.1 \\
$|\sigma_J^B|$                       &  75.9 &  9.2  &  57.4 &  93.4 \\
$\hat\rho_{N_1, R_1}$                &  0.54 &  0.11 &  0.36 &  0.76 \\
$\hat\rho_{N_2, R_2}$                &  0.27 &  0.12 &  0.02 &  0.47 \\
$\hat\rho_{N_1, S^{CDI}_1}$          & $-0.06$ & 0.08 & $-0.21$ & 0.09 \\
$\hat\rho_{R_1, S^{CDI}_1}$          & $-0.23$ & 0.11 & $-0.37$ & 0.02 \\
\bottomrule
\end{tabular}}
\end{table}

\begin{table}[ht]
\centering
\caption{Inflation-FX identification from monthly IPCA realized variance.}
\label{tab:sigI}
\begin{tabular}{@{}ll@{}}
\toprule
Quantity & Estimate \\
\midrule
$|\widehat{\sigma_I}|$ (total, annualized)                    & 135.1 bp/$\sqrt{\text{yr}}$ \\
\quad spanned by rate factors ($\|\alpha^I\|$)                &  86.1 bp/$\sqrt{\text{yr}}$ \\
\quad idiosyncratic residual                                  & 107.9 bp/$\sqrt{\text{yr}}$ \\
Rate-factor regression $R^2$                                   & 0.33 \\
Monthly observations                                           & 35 \\
\bottomrule
\end{tabular}
\end{table}

\begin{table}[ht]
\centering
\caption{Credit-FX volatility, by credit-economy model. Residual vol is the OLS residual after regressing the short-pillar spread change on unit-variance factor innovations.}
\label{tab:sigJ}
\scriptsize
\resizebox{\linewidth}{!}{%
\begin{tabular}{@{}lcc@{}}
\toprule
Quantity & Model A (CDI$^+$) & Model B (IPCA$^+$) \\
\midrule
Reference pillar $\tau_{\min}$                 & 1~yr  & 3~yr  \\
$|\widehat{\sigma_J}|$ (bp/$\sqrt{\text{yr}}$) & 80.7  & 75.9  \\
Residual vol (bp/$\sqrt{\text{yr}}$)           &  3.3  & 16.3  \\
Factor-regression $R^2$                        & 0.998 & 0.902 \\
Non-missing weekly obs                         & 153   & 122; 66 after full-row intersection (Model B uses $\{2,3,5\}$y) \\
\bottomrule
\end{tabular}}
\end{table}

\begin{table}[ht]
\centering
\caption{Annualized volatility of weekly changes (bp/$\sqrt{\text{yr}}$): model-implied (calibrated in-sample) vs.\ realized in the out-of-sample window. Model vol for the rate blocks is $\sqrt{\sum_k [\sigma^j_k(\tau_i)]^2}$ at the pillar; for CDI$^+$ spreads it is the corresponding Model~A quantity. IPCA$^+$ realized vol is reported as a diagnostic because Model~B's short-pillar coverage is sparse.}
\label{tab:oos_vol}
\scriptsize
\resizebox{\linewidth}{!}{%
\begin{tabular}{@{}lcccccc@{}}
\toprule
& \multicolumn{3}{c}{\textbf{Model}} & \multicolumn{3}{c}{\textbf{Realized (OOS)}} \\
\cmidrule(lr){2-4}\cmidrule(lr){5-7}
& 1y & 3y & 5y & 1y & 3y & 5y \\
\midrule
$\Delta f^N$       & 286 & 294 & 238 & 226 & 193 & 191 \\
$\Delta f^R$       & 343 & 135 &  74 & 324 & 118 & 105 \\
$\Delta s^{CDI}$   &  81 &  52 &  42 & 115 &  47 &  43 \\
$\Delta s^{IPCA}$  & n/a & n/a & n/a & 154 &  99 &  57 \\
\bottomrule
\end{tabular}}
\end{table}

\begin{table}[ht]
\centering
\caption{Key factor-score correlations: in-sample (Model~A estimate) vs.\ out-of-sample (realized). The in-sample values reported here are the same objects as the corresponding off-diagonal entries of the full $7\times 7$ correlation matrix in Table~\ref{tab:corr_cdi}, restricted to the four pairs of economic interest.}
\label{tab:oos_corr}
\scriptsize
\resizebox{\linewidth}{!}{%
\begin{tabular}{@{}lccr@{}}
\toprule
\textbf{Pair} & \textbf{In-sample} & \textbf{OOS} & \textbf{Diff.} \\
\midrule
$(Z^N_1,\,Z^R_1)$ (level-level)     & $+0.54$ & $+0.56$ & $+0.02$ \\
$(Z^N_2,\,Z^R_2)$ (slope-slope)     & $+0.27$ & $+0.28$ & $+0.01$ \\
$(Z^N_1,\,Z^{S,CDI}_1)$ (rate-spread) & $-0.06$ & $+0.03$ & $+0.09$ \\
$(Z^{S,CDI}_1,\,Z^{S,CDI}_2)$ (level-slope) & $+0.00$ & $+0.37$ & $+0.37$ \\
\bottomrule
\end{tabular}}
\end{table}

\clearpage

\bibliographystyle{plainnat}
\bibliography{references}

@article{jarrow2003hjm,
  author  = {Jarrow, Robert and Yildirim, Yildiray},
  title   = {Pricing Treasury Inflation Protected Securities and Related Derivatives Using an {HJM} Model},
  journal = {Journal of Financial and Quantitative Analysis},
  volume  = {38},
  number  = {2},
  pages   = {337--358},
  year    = {2003},
}

@article{heath1992bond,
  author  = {Heath, David and Jarrow, Robert and Morton, Andrew},
  title   = {Bond Pricing and the Term Structure of Interest Rates: A New Methodology for Contingent Claims Valuation},
  journal = {Econometrica},
  volume  = {60},
  number  = {1},
  pages   = {77--105},
  year    = {1992},
}

@book{musiela2005martingale,
  author    = {Musiela, Marek and Rutkowski, Marek},
  title     = {Martingale Methods in Financial Modelling},
  edition   = {2nd},
  publisher = {Springer},
  year      = {2005},
}

@book{shreve2004stochastic,
  author    = {Shreve, Steven E.},
  title     = {Stochastic Calculus for Finance {II}: Continuous-Time Models},
  publisher = {Springer},
  year      = {2004},
}

@techreport{schonbucher2000libor,
  author      = {Sch{\"o}nbucher, Philipp J.},
  title       = {A {LIBOR} Market Model with Default Risk},
  institution = {University of Bonn, Bonn Graduate School of Economics},
  type        = {Bonn Econ Discussion Paper},
  number      = {15/2001},
  year        = {2000},
  note        = {Available at \url{https://ideas.repec.org/p/zbw/bonedp/152001.html}. Accessed on May 13, 2026},
}

@article{eberlein2003defaultable,
  author  = {Eberlein, Ernst and {\"O}zkan, Fehmi},
  title   = {The Defaultable {L}{\'e}vy Term Structure: Ratings and Restructuring},
  journal = {Mathematical Finance},
  volume  = {13},
  number  = {2},
  pages   = {277--300},
  year    = {2003},
}

@article{duffie1999modeling,
  author  = {Duffie, Darrell and Singleton, Kenneth J.},
  title   = {Modeling Term Structures of Defaultable Bonds},
  journal = {Review of Financial Studies},
  volume  = {12},
  number  = {4},
  pages   = {687--720},
  year    = {1999},
}

@article{litterman1991common,
  author  = {Litterman, Robert and Scheinkman, Jos{\'e}},
  title   = {Common Factors Affecting Bond Returns},
  journal = {Journal of Fixed Income},
  volume  = {1},
  number  = {1},
  pages   = {54--61},
  year    = {1991},
}

@article{nelson1987parsimonious,
  author  = {Nelson, Charles R. and Siegel, Andrew F.},
  title   = {Parsimonious Modeling of Yield Curves},
  journal = {Journal of Business},
  volume  = {60},
  number  = {4},
  pages   = {473--489},
  year    = {1987},
}

@techreport{svensson1994estimating,
  author      = {Svensson, Lars E. O.},
  title       = {Estimating and Interpreting Forward Interest Rates: {Sweden} 1992--1994},
  institution = {NBER},
  type        = {Working Paper},
  number      = {4871},
  year        = {1994},
  note        = {Available at \url{https://www.nber.org/papers/w4871}. Accessed on May 13, 2026},
}

@article{higham2002computing,
  author  = {Higham, Nicholas J.},
  title   = {Computing the Nearest Correlation Matrix: a Problem from Finance},
  journal = {IMA Journal of Numerical Analysis},
  volume  = {22},
  number  = {3},
  pages   = {329--343},
  year    = {2002},
}

@book{wahba1990spline,
  author    = {Wahba, Grace},
  title     = {Spline Models for Observational Data},
  publisher = {SIAM},
  year      = {1990},
}

@article{varga2009teste,
  author  = {Varga, Gyorgy},
  title   = {Teste de Modelos Estat{\'i}sticos para a Estrutura a Termo no {Brasil}},
  journal = {Revista Brasileira de Economia},
  volume  = {63},
  number  = {4},
  pages   = {361--394},
  year    = {2009},
  note    = {Available at \url{https://ideas.repec.org/a/fgv/epgrbe/v63y2009i4a1199.html}. Accessed on May 13, 2026},
}

@article{sheng2005determinantes,
  author  = {Sheng, Hsia Hua and Saito, Richard},
  title   = {Determinantes de Spread das Deb{\^e}ntures no Mercado Brasileiro},
  journal = {Revista de Administra{\c c}{\~a}o (RAUSP)},
  volume  = {40},
  number  = {2},
  pages   = {193--205},
  year    = {2005},
}

@misc{anbima_ntnb,
  author       = {ANBIMA},
  title        = {Taxas de T{\'i}tulos P{\'u}blicos and Estrutura a Termo das Taxas de Juros: Metodologia},
  howpublished = {ANBIMA public documentation},
  year         = {2021},
  note         = {Available at \url{https://www.anbima.com.br/pt_br/informar/ferramenta/precos-e-indices/taxas-titulos-publicos.htm} and \url{https://www.anbima.com.br/data/files/9A/F4/E3/1F/4805B710B0F024B7882BA2A8/est-termo_metodologia_v2021.pdf}. Accessed on May 13, 2026},
}

@misc{b3_di1,
  author       = {B3},
  title        = {DI1 Futures: Contract Specifications},
  howpublished = {B3 Public Documentation},
  year         = {2026},
  note         = {Available at \url{https://www.b3.com.br/pt_br/produtos-e-servicos/negociacao/juros/futuro-de-taxa-media-de-depositos-interfinanceiros-de-um-dia.htm}. Accessed on May 13, 2026},
}

@misc{jgp_idex_cdi_methodology,
  author       = {{JGP Asset Management}},
  title        = {Idex-{CDI} Core},
  howpublished = {JGP public index page and methodology documentation},
  year         = {2026},
  note         = {Available at \url{https://idex.jgp.com.br/idex-cdi-core/}. Accessed on May 13, 2026},
}

@misc{jgp_idex_infra_methodology,
  author       = {{JGP Asset Management}},
  title        = {Idex-{Infra} Core},
  howpublished = {JGP public index page and methodology documentation},
  year         = {2026},
  note         = {Available at \url{https://idex.jgp.com.br/idex-infra-core/}. Accessed on May 13, 2026},
}

@misc{lei_12431,
  author       = {{Brazilian Federal Government}},
  title        = {Lei No.~12.431, de 24 de Junho de 2011 (Tax Exemption for Infrastructure Debentures)},
  year         = {2011},
  note         = {Official summary available at \url{https://www.gov.br/transportes/pt-br/assuntos/incentivos/debentures-incentivadas}. Accessed on May 13, 2026},
}

@misc{lei_11033,
  author       = {{Brazilian Federal Government}},
  title        = {Lei No.~11.033, de 21 de Dezembro de 2004},
  year         = {2004},
  note         = {Available at \url{https://www.planalto.gov.br/ccivil_03/_Ato2004-2006/2004/Lei/L11033.htm}. Accessed on May 13, 2026},
}

@book{bielecki2002credit,
  author    = {Bielecki, Tomasz R. and Rutkowski, Marek},
  title     = {Credit Risk: Modeling, Valuation and Hedging},
  publisher = {Springer},
  address   = {Berlin Heidelberg},
  year      = {2002},
}

@article{mercurio2010interest,
  author  = {Mercurio, Fabio},
  title   = {Interest Rates and the Credit Crunch: New Formulas and Market Models},
  journal = {Bloomberg Portfolio Research Paper},
  number  = {2010-01-FRONTIERS},
  year    = {2010},
}

@article{pereira2021nota,
  author  = {Pereira, Thiago Rabelo and Miterhof, Marcelo Trindade},
  title   = {Uma Nota sobre a Efici{\^e}ncia Fiscal das Deb{\^e}ntures Incentivadas em Infraestrutura no {Brasil}},
  journal = {Revista Brasileira de Economia},
  volume  = {75},
  number  = {2},
  pages   = {245--278},
  year    = {2021},
  doi     = {10.5935/0034-7140.20210011},
}

@article{brand2022debentures,
  author  = {Brand, Filipe},
  title   = {Deb{\^e}ntures Incentivadas e Apropria{\c c}{\~a}o de Benef{\'i}cios Fiscais},
  journal = {Cadernos de Finan{\c c}as P{\'u}blicas},
  volume  = {22},
  number  = {3},
  year    = {2022},
  doi     = {10.55532/1806-8944.2022.190},
}

\end{document}